\numberwithin{equation}{section} 
\newcommand{\prb}{Phys.\ Rev.\ B }
\newcommand{\prl}{Phys.\ Rev.\ Lett.\ }
\newtheorem{theorem}{Theorem}[section]
\newtheorem{proposition}[theorem]{Proposition}
\newtheorem{definition}[theorem]{Definition}
\newtheorem{lemma}[theorem]{Lemma}
\newcommand{\trace}{\mathop{\rm trace}}
\newcommand{\mult}{\mathop{\rm mult}}
\newcommand{\vol}{\mathop{\rm vol}}
\newcommand{\longtwoheadrightarrow}{\relbar\joinrel\twoheadrightarrow}
\newcommand{\Sp}{\mathop{\boldsymbol{S}\hskip -0.05ex\boldsymbol{p}}}
\newcommand{\SU}{\mathop{\mbox{\it\bfseries SU}}}
\newcommand{\Orth}{\boldsymbol{O}}
\newcommand{\SOrth}{\mathop{\mbox{\it\bfseries SO}}}
\newcommand{\Simple}{\mathop{\mbox{\it\bfseries S}}}
\newcommand{\GL}{\mathop{\mbox{\it\bfseries GL}}}
\newcommand{\symp}{\mathop{\mbox{\boldmath $\mathfrak{sp}$}}}
\newcommand{\quat}{\boldsymbol{Q}_8}
\newcommand{\Herm}{\mathop{\rm Herm}}
\begin{document}

\title{\raggedright Pseudorotations in Molecules: \mbox{Electronic Orbital Triplets}}

\author{A.~R.~Rutherford}

\address{Department of Mathematics\\
         Simon Fraser University\\
         8888 University Drive\\
         Burnaby, B.~C.~ V5A 1S6 Canada}

\email{arruther@sfu.ca}

\date{October 22, 2017}

\dedicatory{\raggedright%
This work had its genesis in a collaboration with
Roy~R.~Douglas on applications of algebraic topology to the
Jahn-Teller effect and related phenomena in multi-body quantum
mechanics.  This paper is dedicated to his memory.}

\maketitle

\begin{abstract}
  Topological and geometrical methods are used to calculate the
  pseudorotational part of the vibronic spectrum for an electronic
  triplet of an octahedral, tetrahedral, or icosahedral molecule.  The
  calculations take into account the nontrivial geometry inherent in
  the Jahn-Teller effect.  It is shown that the Jahn-Teller effect
  gives rise to a geometry, which is related to the isoparametric
  geometry of E.~Cartan.  The pseudorotational spectra correspond to
  the spectra of connection Laplacians on nontrivial line bundles over
  base spaces with this geometry. Globally, the isoparametric
  submanifolds form a totally geodesic foliation of $S^4$ and the
  spectral flow of these connection Laplacians on this foliation is
  computed.
\end{abstract}

\tableofcontents

\section{Introduction}

The Jahn-Teller effect for electronic orbital triplets can occur for
molecules with tetrahedral, octahedral or icosahedral symmetry.  The
main consequence of the Jahn-Teller effect is that the minimal energy
configuration for such a molecule in an orbital triplet state will not
be when the nuclei are in the completely symmetric configuration.
Instead, the minimal energy occurs on an extended compact manifold of
asymmetric configurations.  Pseudorotations are vibronic excitations
arising from free distortions of the molecular geometry on this
compact manifold.  The most general pseudorotational modes of
electronic triplets involve a 5\nolinebreak\mbox{-}dimensional space
of normal mode coordinates for the positions of the nuclei within the
molecule.  These pseudorotations appear in the $T \otimes
(e \oplus t)$ Jahn-Teller effect of tetrahedral molecules, the $T
\otimes (e_g \oplus t_{2g})$ of octahedral molecules and the $T
\otimes h_g$ Jahn-Teller effect of icosahedral molecules.

An important example of the Jahn-Teller effect for tetrahedral
symmetry occurs for the methane cation ${\rm C}{\rm H}_4^+$.  The
lowest electronic orbital state of ${\rm C}{\rm H}_4^+$ is known to be
a triplet and the resulting $T \otimes (e \oplus t_2)$ Jahn-Teller
effect leads to a distortion of the molecule away from tetrahedral
symmetry~\cite{RBWKS71,KSFD84,PFPHP85,VKBCFKZZ86,T86,FD88,RD95}.
Octahedral symmetry is common in nature and the $T \otimes (e_g \oplus
t_{2g})$ Jahn-Teller effect has been studied
extensively~\cite{O64,H65,O69,O71,CBV84,CO88}.  For example, it
occurs in cubic crystals such as ${\rm Ca}{\rm O}{:}{\rm
Fe}^{2+}$~\cite{H70,J79} and ${\rm Mg}{\rm O}{:}{\rm
Fe}^{2+}$~\cite{HNV75}.  Also, it pays an important role in transition
metal perovskite crystals such as ${\rm Ba}{\rm Ti}{\rm
O}_3$~\cite{BP89} and ${\rm K}{\rm Fe}^{2+}{\rm
F}_3$~\cite{KK82,RRLR73} and triangular lattice Heisenberg
antiferromagnetic crystals such as ${\rm Li}{\rm V}{\rm O}_2$ and
${\rm Na}{\rm Ti}{\rm O}_2$~\cite{PBKS96}.  The importance of the
Jahn-Teller effect in perovskites suggests that it may be important
for high $T_c$ superconductivity.  Another example of the $T \otimes
(e_g \oplus t_{2g})$ Jahn-Teller effect are transition metal
hexafluorides such as ${\rm Re}{\rm F}_6$ and ${\rm Os}{\rm
F}_6$~\cite{HSHLO88,MSI96}.  To date, the most famous example of an
icosahedral molecule is the buckminsterfullerene molecule ${\rm
C}_{60}$.  The $T \otimes h_g$ Jahn-Teller effect for this molecule is
discussed in~\cite{GAS92,AMT94,MTA94}.  This brief survey of triplet
Jahn-Teller effects is by no means exhaustive and we refer
to~\cite{He66,E72,B84a,B84,J84,BP89,KV95} for more complete reviews.

The role of topology in the Jahn-Teller effect was first demonstrated
in calculations of the $E \otimes e$ Jahn-Teller pseudorotations in
triangular molecules~\cite{LOPS58,TTM85,TM85,TILTM85,H87}.  Moreover,
these calculations have been experimentally confirmed
in~\cite{DGWWZ86}.  It was realized that $E \otimes e$
pseudorotational wave functions are sections in the M\"obius band, the
unique nontrivial real line bundle over the circle.  The simplicity of
the M\"obius band allowed the calculation to be treated as a boundary
value problem.  In other words, it sufficed to solve for wave
functions on the interval $[0,2\pi]$ satisfying $\psi(0) =
-\psi(2\pi)$, rather than the usual boundary condition $\psi(0) =
\psi(2\pi)$ for wave functions on a circle.  We will see that topology
permeates the Jahn-Teller problem for orbital triplets to a much
greater extent and boundary value methods no longer suffice.

In order to determine the topology of the space of nuclear
pseudorotations for an orbital triplet state, we first study spaces of
$3 \times 3$ real symmetric matrices which have eigenvalues of
predetermined multiplicities.  The space of pseudorotations is
isometric to a manifold constructed from the electronic eigenstates.
A detailed description of both the topology and the Riemannian
geometry of this manifold is presented.  As a consequence of the
nontrivial topology of the manifold of pseudorotations, the usual
Born-Oppenheimer approximation cannot be applied to this problem.
However, using a generalisation of the Born-Oppenheimer approximation
to nontrivial vector bundles, we are able to compute vibronic energy
levels in the strong Jahn-Teller coupling limit.  In Section~V, we use
our results to compute the pseudorotational spectrum of ${\rm C}{\rm
H}_4^+$.  Our calculations are compared to measurements of the
vibronic lines in the photoelectron spectrum of methane reported
in~\cite{RBWKS71}. 

\section{Born-Oppenheimer Approximation}

Consider a molecule with $a$ nuclei of mass $M$ and $z$ electrons of
mass $m$.  Neglecting contributions from electron spin, the molecular
Hamiltonian is
\begin{equation}
  H_{\rm mol}
      = -\frac{1}{2M}\sum_{j=1}^a
          \triangle_{\mathbf{a}_j}
      - \frac{1}{2m}\sum_{i=1}^z \triangle_{\mathbf{z}_i}
      + V(\mathbf{a}, \mathbf{z}) 
      + V_{\rm nuc}(\mathbf{a}) \, ,
  \label{mol-ham}
\end{equation}
where $\mathbf{a}_j$ is the position vector for the $j^{\rm th}$
nucleus, $\mathbf{z}_i$ is the position vector for the $i^{\rm th}$
electron, and $\triangle_\mathbf{x}$ denotes the Laplacian for a
coordinate vector $\mathbf{x}$.  Collectively, the coordinate vectors
$\mathbf{a}_j$ define a total configuration vector $\mathbf{a}
\in \mathbb{R}^{3a}$ for the nuclei and the $\mathbf{z}_i$ define a
total configuration vector $\mathbf{z} \in \mathbb{R}^{3z}$ for the
electrons.  The term $V(\mathbf{a}, \mathbf{z})$ is the potential
energy for all electron-electron and electron-nucleus interactions and
$V_{\rm nuc}(\mathbf{a})$ is the potential energy for all
nucleus-nucleus interactions.  These potentials $V$ and $V_{\rm nuc}$
are the summations over the usual sets of two-body Coulomb potential
functions.  For simplicity we have assumed that all of the nuclei have
the same mass; however, this assumption is not essential for the
results which we obtain.

Although the nuclear mass is at least four orders of magnitude larger
than the electron mass, it has long been known that generally
molecular vibrations cannot be described in terms of classical nuclear
motion.  Instead an understanding of molecular energy levels must take
into account the quantum mechanical coupling between the electrons and
the nuclei\footnote{The term vibronic (vibrational~$+$~electronic) is
usually used to describe this coupling.}.  To this end, the
Hamiltonian $H_{\rm mol}$ is usually treated using the
Born-Oppenheimer approximation~\cite{BO27,CDS81,GH87,GH88}.  In
this approximation, one considers multiplets of
\mbox{$\mathbf{a}$-dependent} eigenvalues of the electronic
Hamiltonian
\begin{equation}
  H_{\rm el}(\mathbf{a})
       = -\frac{1}{2m} \sum_{i=1}^z \triangle_{\mathbf{z}_i} 
         + V(\mathbf{a}, \mathbf{z}) \, .
\end{equation}
Taking the direct sum of all eigenspaces associated with the
eigenvalues in a multiplet defines for each multiplet a
\mbox{$\mathbf{a}$-dependent} vector subspace of
$L^2(\mathbb{R}^{3z};\,\mathbb{C}\,)$, the Hilbert space of
complex-valued square-integrable functions on $\mathbb{R}^{3z}$.  For
each multiplet, an operator called the Born-Oppenheimer Hamiltonian is
obtained by expanding $H_{\rm mol}$ with respect to a
$\mathbf{a}$\nolinebreak\mbox{-}dependent orthonormal basis for
this subspace.  At sufficiently low energy, the point spectrum of
$H_{\rm mol}$ is approximated by the union of the point spectra of the
Born-Oppenheimer Hamiltonians obtained by subdividing the point
spectrum of $H_{\rm el}(\mathbf{a})$ into disjoint multiplets.  A
complete description of how much of the point spectrum of $H_{\rm
mol}$ may be approximated in this manner is too detailed for inclusion
here and we refer the reader to~\cite{CDS81,GH87,GH88}.

It follows from established results~\cite{W59} that the eigenspaces of
the electronic Hamiltonian are classified by irreducible
representations of the molecular symmetry group. In this paper, we are
considering 3\nolinebreak\mbox{-}dimensional eigenspaces of $H_{\rm
el}$, which are possible for molecules having tetrahedral, octahedral
or icosahedral symmetry.  Denoting the fully symmetric nuclear
configuration vector by $\mathbf{a}_0 \in \mathbb{R}^{3a}$, the
eigenspaces of $H_{\rm el}(\mathbf{a}_0)$ are representation spaces
of the molecular symmetry group.  This group is either the tetrahedral
group $\mathcal{T}_d$, the extended octahedral group $\mathcal{O}_h$,
or the extended icosahedral group $\mathcal{Y}_h$.

The 24\nolinebreak\mbox{-}element group $\mathcal{T}_d$ is the complete
symmetry group of a regular tetrahedron.  In a common nomenclature,
the irreducible representations of $\mathcal{T}_d$ are $A_1$ and $A_2$ of
dimension one, $E$ of dimension two, and $T_1$ and $T_2$ of dimension
three.  Reference~\cite{LF96} may be consulted for a review of
the representation theory of molecular symmetry groups.

The complete symmetry group of a regular octahedron is
$\mathcal{O}_h$, which is isomorphic to the direct product of the
octahedral group $\mathcal{O}$ with the 2\nolinebreak\mbox{-}element
group $\mathbb{Z}_2$.  The octahedral group $\mathcal{O}$ consists of
the $24$ rotations about the symmetry axes of a regular octahedron,
whereas the extended octahedral group also includes reflections in the
plane perpendicular to each symmetry axis.  The irreducible
representations of $\mathcal{O}$ are $A_1$ and $A_2$ of dimension one,
$E$ of dimension two, and $T_1$ and $T_2$ of dimension three.  Each
irreducible representation of $\mathcal{O}$ determines two irreducible
representations of $\mathcal{O}_h$.  One of these, denoted by the
subscript $u$, is odd under inversion about the origin and the other,
denoted by the subscript $g$, is even under inversion.

The complete symmetry group of a regular icosahedron is the extended
icosahedral group $\mathcal{Y}_h$.  This is a
120\nolinebreak\mbox{-}element group, which is isomorphic to the
direct product of $\mathbb{Z}_2$ with the icosahedral group $\mathcal{Y}$.
The irreducible representations of $\mathcal{Y}$ are $A$ of dimension
one, $T_1$ and $T_2$ of dimension three, $G$ of dimension four, and
$H$ of dimension five.  As with $\mathcal{O}_h$ above, each irreducible
representation of $\mathcal{Y}$ determines two irreducible
representations of $\mathcal{Y}_h$, one of which is even under inversion
and the other of which is odd under inversion.

We are considering a triplet eigenvalue of $\lambda$ of $H_{\rm
el}(\mathbf{a}_0)$.  The associated eigenspace is classified by a
3\nolinebreak\mbox{-}dimensional irreducible representation $T$ of the
molecular symmetry group.  This representation must be one of the
$T_1$ or $T_2$ representations of $\mathcal{T}_d$, the $T_{1u}$,
$T_{1g}$, $T_{2u}$, or $T_{2g}$ representations of $\mathcal{O}_h$, or
the $T_{1u}$, $T_{1g}$, $T_{2u}$ or $T_{2g}$ representations of
$\mathcal{Y}_h$.  Of course, nuclear motion will not preserve the
molecular symmetry and the degenerate eigenvalue $\lambda$ will be
split into an almost degenerate triplet $\{\lambda_1
(\mathbf{a}) \,,\, \lambda_2(\mathbf{a})
\,,\, \lambda_3(\mathbf{a}) \}$.  

A convenient coordinate system for describing displacements of the
nuclei about the symmetric configuration $\mathbf{a}_0$ are normal
mode coordinates~\cite{LL77}.  The nuclear configuration is specified
by a normal mode coordinate vector $\mathbf{q} \in \mathbb{R}^{3a -
6}$, where we have followed the usual procedure of eliminating the the
coordinates for rigid rotations and translations of the molecule.  In
this coodinate system, the symmetric configuration of the nuclei
corresponds to $\mathbf{q} = \mathbf{0}$.  The advantage of normal
mode coordinates is that they are also classified by the irreducible
representations of the molecular symmetry group~\cite{W30,LL77}.
Reducing the total vibrational representation of the molecular
symmetry group into $l$ irreducible representations of dimensions
$d_k$, for $k = 1,2,\dots,l$, induces a decomposition of the vector
$\mathbf{q}$ as
\begin{equation}
  \mathbf{q} = \left(q_1^1,q_2^1,\dots,q_{d_1}^1,q_1^2,q_2^2,\dots,
            q_{d_2}^2,\dots,q_1^l,q_2^l,\dots,q_{d_l}^l\right).
\end{equation}
Each subvector $\mathbf{q}^k \in \mathbb{R}^{d_k}$ transforms according to
one of the irreducible representations in the total vibrational
representation.

It was shown by Jahn and Teller~\cite{JT37} that the symmetric
configuration is not a simultaneous minimum of the eigenvalues
$\lambda_1(\mathbf{q})$, $\lambda_2(\mathbf{q})$ and
$\lambda_3(\mathbf{q})$.  They showed that for all molecules except
linear molecules\footnote{These are molecules having the axial
symmetry groups $\mathcal{C}_{\infty v}$ and $\mathcal{D}_{\infty
h}$.}, 
\begin{equation}
  \frac{\partial \lambda_1}{\partial q_j^k}(\mathbf{0}) =
  \frac{\partial \lambda_2}{\partial q_j^k}(\mathbf{0}) = \frac{\partial
  \lambda_3}{\partial q_j^k}(\mathbf{0}) = 0,\quad\text{for all $j$ and $k$,}
  \label{jt}
\end{equation}
if and only if the product representation $V\left[T^2\right]$ does not
contain the identity representation.  Here $[T^2]$ denotes the
symmetric square of the electronic representation $T$ and $V$ denotes
the vibrational representation.  The complete list of these product
representations for all molecular symmetry groups was decomposed
in~\cite{JT37} and it was found that with the exception of the axial
symmetry groups, $V\left[T^2\right]$ always contains the identity
representation.\footnote{It is now known how to prove the Jahn-Teller
theorem without resorting to an itemisation of all irreducible
representations of all molecular symmetry groups~\cite{RS65,B71}.
These more sophisticated proofs shed light on the mathematical content
of the Jahn-Teller theorem.}

For the remainder of this paper we will use the labelling convention that
\begin{equation}
  \lambda_1(\mathbf{q}) \le 
  \lambda_2(\mathbf{q}) \le
  \lambda_3(\mathbf{q})\,. \label{order}
\end{equation}
Note that the imposition of this convention will mean that the
$\lambda_i$ will not in general be differentiable at $\mathbf{0}$,
because the eigenvalues will be re-ordered as $\mathbf{q}$ passes
through the point of degeneracy at $\mathbf{0}$.

For the symmetry group $\mathcal{T}_d$, Jahn and Teller concluded that
if the electronic irreducible representation $T$ is either of the
representations $T_1$ or $T_2$, then the molecular symmetry will not
be stable with respect to molecular distortions in normal modes
classified by either of the $E$ and $T_2$ irreducible representations.
In otherwords, equation~(\ref{jt}) will not hold for $\mathbf{q}^k$
corresponding to either of these representations.  This Jahn-Teller
effect is called the $T \otimes \left(e_g \oplus t_{2g}\right)$
Jahn-Teller effect, where we are following the convention in the
literature of using lower case to denote irreducible representations
when they classify normal modes and upper case to denote irreducible
representations when they classify eigenspaces of $H_{\rm
el}(\mathbf{a}_0)$.  The $e$ and $t_2$ normal modes are said to be
Jahn-Teller coupled to the $T$ electronic state.  Similarly, if $T$ is
one of the 3\nolinebreak\mbox{-}dimensional irreducible
representations $T_{1g}$, $T_{1u}$, $T_{2g}$ or $T_{2u}$ of the group
extended octahedral group $\mathcal{O}_h$, then normal mode
distortions classified by both the $E_g$ and $T_{2g}$ representations
were found to be Jahn-Teller coupled to $T$.  This is the $T \otimes
(e_g \oplus t_{2g})$ Jahn-Teller effect of the molecular symmetry
group $\mathcal{O}_h$.  For the extended icosahedral group
$\mathcal{Y}_h$, if $T$ is one of the 3\nolinebreak\mbox{-}dimensional
irreducible representations $T_{1g}$, $T_{1u}$, $T_{2g}$ or $T_{2u}$,
then normal modes classified by the $H_g$ representation are
Jahn-Teller coupled to $T$.  This is the $T \otimes h_g$ Jahn-Teller
effect.  Our results in this paper apply to all three of these
Jahn-Teller effects.

In studying low energy excitations of the molecule, we will restrict
the nuclear configuration space $\mathbb{R}^{3a - 6}$ to the
5\nolinebreak\mbox{-}dimensional subspace $N \subset \mathbb{R}^{3a -
6}$ in which only the Jahn-Teller coupled normal modes are nonzero.
Furthermore, we shall restrict $\mathbf{q}$ to a suitably bounded
neighbourhood $N^\prime$ of $\mathbf{0}$ in the vector space $N$.
Specifically, $N^\prime$ should be starlike\footnote{A subset $X
\subset \mathbb{R}^n$ is called starlike from the point $x \in X$ if
for every $y \in X$ the line segment from $x$ to $y$ lies entirely in
$X$~\cite{S66}.} from $\mathbf{q} = \mathbf{0}$, which implies that
$N^\prime$ is contractible.  An appropriate bound on the diameter of
$N^\prime$ ensures that the triplet $\{\lambda_1 (\mathbf{q}) \,,\,
\lambda_2 (\mathbf{q}) \,,\, \lambda_3(\mathbf{q}) \}$ will be bounded
away from the remainder of the spectrum of $H_{\rm el}(\mathbf{q})$,
for all $\mathbf{q} \in N^\prime$.  This amounts to assuming that we
are studying distortions in which the molecule remains reasonably
close to its symmetric configuration.  In other words, we are
considering molecular distortions, rather than some more general
many-body problem.

Define $Z(\mathbf{q})$ to be the 3\nolinebreak\mbox{-}dimensional
direct sum of the eigenspaces for the triplet $\{\lambda_1(\mathbf{q})
\,,\, \lambda_2 (\mathbf{q}) \,,\, \lambda_3(\mathbf{q}) \}$.  Note
that for some nonzero values of $\mathbf{q} \in N^\prime$, the
eigenvalues $\lambda_i(\mathbf{q})$ will have twofold degeneracies.
Contractibility of $N^\prime$ implies that there exists a globally
defined smooth basis $\{ e_1(\mathbf{q}) \,,\, e_2(\mathbf{q}) \,,\,
e_3(\mathbf{q}) \}$ for $Z(\mathbf{q})$ over all $\mathbf{q} \in
N^\prime$.  However, we shall see that it is not possible to choose
this basis such that even one of the basis vectors $e_i(\mathbf{q})$
is an eigenvector of $H_{\rm el}(\mathbf{q})$ for all $\mathbf{q} \in
N^\prime$.

The basis $\{ e_1(\mathbf{q}) \,,\, e_2(\mathbf{q}) \,,\,
e_3(\mathbf{q}) \}$ defines a continuous map $H_Z$ from $N^\prime$
to\linebreak[4] $\mathop{\rm Herm}(3,\mathbb{C}\,)$, the
9\nolinebreak\mbox{-}dimensional real vector space of $3 \times 3$
hermitian matrices with complex entries.  The $(i,j)$ entry of the
matrix $H_Z(\mathbf{q})$ is defined by the inner product
\begin{equation}
  H_Z(\mathbf{q})_{ij} 
    = (e_i(\mathbf{q})\,,\,H_{\rm el}(\mathbf{q})\, 
      e_j(\mathbf{q}))\,,
  \label{Hz}
\end{equation}
where $(\,\cdot\,,\,\cdot\,)$ is the Hilbert space inner
product on $L^2(\mathbb{R}^{3z};\,\mathbb{C})$.  If the Hamiltonian
$H_{\rm el}(\mathbf{q})$ is time-reversal invariant, then there exists
a real structure\footnote{A subset $K$ of $\mathop{\rm
Herm}(3,\mathbb{C}\,)$ is said to be a real subset if there exists a
fixed $3 \times 3$ unitary matrix $U$ such that $U^*\,K\,U \subset
\mathop{\rm Herm}(3,\mathbb{R})$.} on $H_Z(N^\prime) \subset
\mathop{\rm Herm}(3,\mathbb{C}\,)$.  Therefore, without loss of
generality, we may assume that $H_Z(N^\prime)$ is a subset of
$\mathop{\rm Herm}(3,\mathbb{R})$, the
6\nolinebreak\mbox{-}dimensional vector subspace of all $3 \times 3$
symmetric matrices with real entries.  As we have omitted spin from
the molecular Hamiltonian $H_{\rm mol}$, the electronic Hamiltonian
$H_{\rm el}(\mathbf{q})$ is time-reversal invariant, provided that
there is no external magnetic field.

The Jahn-Teller Hamiltonian $H_{\rm JT}$ is defined as the traceless
part of $H_Z(\mathbf{q})$ by
\begin{equation}
  H_Z(\mathbf{q}) = s_0(\mathbf{q}) \, I + H_{\rm JT}(\mathbf{q}) \,,
\end{equation}
where $s_0(\mathbf{q}) =
\frac{1}{3}\trace\left(H_Z\left(\mathbf{q}\right)\right)$ and $I$ is the
$3 \times 3$ identity matrix.  The functions $s_0(\mathbf{q})$ and
$H_{\rm JT}(\mathbf{q})$ are usually approximated by writing them as a
polynomials in the components of $\mathbf{q}$.

For the $T \otimes (e \oplus t_2)$ and $T \otimes (e_g \oplus t_{2g})$
Jahn-Teller effects, we shall consider the coordinate vector
$(q_1\,,\,q_2)$ to transform according to the $e$ or $e_g$
representation and the $(q_3\,,\,q_5\,,\,q_5)$ to transform according
to the $t_2$ or $t_{2g}$ representation.  In the linear approximation
to the Jahn-Teller effect, the Hamiltonian is taken as
\begin{equation}
{\addtolength{\arraycolsep}{0.3em}
\renewcommand{\arraystretch}{1.2}
  H_{\rm JT}(\mathbf{q}) =
  \left[\begin{array}{ccc}
  \frac{1}{\sqrt{6}}\kappa_1q_1 - \frac{1}{\sqrt{2}}\kappa_1q_2 &
    -\frac{1}{\sqrt{2}}\kappa_2q_5 & -\frac{1}{\sqrt{2}}\kappa_2q_4 \\
  -\frac{1}{\sqrt{2}}\kappa_2q_5 & \frac{1}{\sqrt{6}}\kappa_1q_1 +
    \frac{1}{\sqrt{2}}\kappa_1q_2 & -\frac{1}{\sqrt{2}}\kappa_2q_3 \\
  -\frac{1}{\sqrt{2}}\kappa_2q_4 & -\frac{1}{\sqrt{2}}\kappa_2q_3 &
    -\sqrt{\frac{2}{3}}\,\kappa_1q_1 \label{modes1}
  \end{array}\right]}
\end{equation}
where $\kappa_1$ is the Jahn-Teller coupling constant for the $e$ or
$e_g$ normal mode and $\kappa_2$ is the Jahn-Teller coupling constant
to the $t_2$ or $t_{2g}$ normal mode.  The quadratic restoring term is
\begin{equation}
  s_0(\mathbf{q}) = \frac{1}{2}\beta_1\left(q_1^2 + q_2^2\right) 
                 + \frac{1}{2}\beta_2\left( q_3^2 + q_4^2 + q_5^2 \right),
\end{equation}
where $\beta_1$ is the quadratic coupling constant for the $e$ or $e_g$
normal mode and $\beta_2$ is the quadratic coupling constant for the
$t_2$ or $t_{2g}$ normal mode.

For the icosahedral $T \otimes h_g$ Jahn-Teller effect, the entire
5\nolinebreak\mbox{-}dimensional normal mode vector transforms
according to the $h_g$ representation.  In this case there is one
Jahn-Teller coupling constant $\kappa$ and only one quadratic coupling
constant $\beta$.  The linear Jahn-Teller Hamiltonian is
\begin{equation}
{\addtolength{\arraycolsep}{0.3em}
\renewcommand{\arraystretch}{1.2}
  H_{\rm JT}(\mathbf{q}) = \kappa\,
   \left[\begin{array}{ccc}
     \frac{1}{\sqrt{6}}q_1 - \frac{1}{\sqrt{2}}q_2 & -\frac{1}{\sqrt{2}}q_5 
        & -\frac{1}{\sqrt{2}}q_4 \\
     -\frac{1}{\sqrt{2}}q_5 & \frac{1}{\sqrt{6}}q_1 + \frac{1}{\sqrt{2}}q_2
        & -\frac{1}{\sqrt{2}}q_3 \\
     -\frac{1}{\sqrt{2}}q_4 & -\frac{1}{\sqrt{2}}q_3 & 
        - \sqrt{\frac{2}{3}}\,q_1
    \end{array}\right] \label{modes2}}
\end{equation}
and the quadratic restoring term in the Hamiltonian is
\begin{equation}
  s_0(\mathbf{q}) = \frac{1}{2}\beta \left(q_1^2 + q_2^2 + q_3^2 
                 + q_4^2 + q_5^2 \right)\,.
\end{equation}

We will view $H_{\rm JT}$ as a continuous map from $N^\prime$ to
$\mathop{\rm Herm}_0(3,\mathbb{R})$, the
\linebreak[4]
5\nolinebreak\mbox{-}dimensional subspace of traceless matrices in
$\mathop{\rm Herm}(3,\mathbb{R})$.  In this context, it is
straightforward to verify that both (\ref{modes1}) and (\ref{modes2})
are vector space isomorphisms.  If we define $\mathop{\rm
Herm}_0(3,\mathbb{R})$ as a metric space by endowing it with the metric
product
\begin{equation}
  \langle A,B \rangle = \kappa^{-2}\,\trace \left(AB\right)\,, \label{hs}
\end{equation}
then the map $H_{\rm JT}$ defined in (\ref{modes2}) is an isometry.
Therefore, it is reasonable for us to simply identify $N^\prime$ with
the corresponding open neighbourhood of the zero matrix in
$\mathop{\rm Herm}_0(3,\mathbb{R})$.  Note that the inner product
in~\ref{hs} is proportional to the usual Hilbert-Schmidt metric on
$\mathop{\rm Herm}_0(3,\mathbb{R})$.

By making a judicious choice of another metric on $\mathop{\rm
Herm}_0(3,\mathbb{R})$, it is also possible to arrange for
(\ref{modes1}) to be an isometry.  However, if we consider the equal
coupling cases of the $T \otimes \left(e \oplus t_2\right)$ and $T
\otimes \left( e_g \oplus t_{2g}\right)$ Jahn-Teller effects, then the
Hamiltonian (\ref{modes1}) simplifies to (\ref{modes2}) with $\kappa_1
= \kappa_2 = \kappa$ and $H_{\rm JT}$ is an isometry with the metric
defined in product~(\ref{hs}).  We shall assume for the remainder of
this paper that the linear Jahn-Teller Hamiltonian is given by
(\ref{modes2}), because the simplicity of the inner product
in~(\ref{hs}) allows for a cleaner explanation of our methods.
Nevertheless, our methods can be applied to more general inner product
structures on $\mathop{\rm Herm}_0(3,\mathbb{R})$, which allows the
Jahn-Teller effect for the Hamiltonian (\ref{modes1}) to be
considered.

Let $\mu_1$, $\mu_2$ and $\mu_3$ denote the eigenvalues of $H_{\rm
JT}\in \mathop{\rm Herm}_0(3,\mathbb{R})$.  Note that $\mu_i =
\lambda_i - \frac{1}{3}\sum_{i=1}^3 \lambda_i$, for $i = 1,2,3$.  We
define within $\mathop{\rm Herm}_0(3,\mathbb{R})$ the three maximal
regions $W_i$ for which the eigenvalue $\mu_i$ is nondegenerate.  We
call these regions similar degeneracy regions.  For example, $W_1
\subset \mathop{\rm Herm}_0(3,\mathbb{R})$ consists of all matrices
for which the bottom eigenvalue $\mu_1$ is nondegenerate.  There are
no constraints on $\mu_2$ and $\mu_3$, other than the imposed ordering
\begin{equation}
\mu_1(\mathbf{q}) < \mu_2(\mathbf{q}) \le \mu_3(\mathbf{q})
\end{equation}
of the eigenvalues and the condition that the $\sum_{i=1}^3 \mu_i$
vanish.  From their definition, it is clear that each $W_i$ is a
smooth open submanifold of $\mathop{\rm Herm}_0(3,\mathbb{R})$.  
In the following proposition, we determine the homotopy types of these
similar degeneracy regions by constructing deformation retractions.

\begin{proposition}
There exist strong deformation retractions
\begin{subequations}
\begin{align}
  r_1 :\, W_1 &\longrightarrow R_1\:,\label{ret1}\\
  r_2 :\, W_2 &\longrightarrow R_2 \:,\label{ret2}\\
  r_3 :\, W_3 &\longrightarrow R_3\:. \label{ret3}
\end{align}
\end{subequations}
The subspaces $R_1$ and $R_3$ have the homotopy type of
$\mathbb{R}\mathrm{P}(2)$, the 2-dimensional real projective space of
1\nolinebreak\mbox{-}dimensional vector subspaces in $\mathbb{R}^3$.
The subspace $R_2$ has the homotopy type of
3\nolinebreak\mbox{-}dimensional flag manifold
$\mathbb{R}\mathrm{F}(1,1,1)$, consisting of ordered triples of
mutually orthogonal 1\nolinebreak\mbox{-}dimensional vector subspaces
in $\mathbb{R}^3$.
\label{retract-prop}
\end{proposition}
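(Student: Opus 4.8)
The plan is to build each retraction by holding the eigenspaces of a matrix $H\in W_i$ fixed while sliding its (ordered) eigenvalues, linearly and within the chamber they occupy, to a standard configuration; the image $R_i$ is then a single $\Orth(3)$-conjugation orbit, which we identify with the asserted homogeneous space. Writing the eigenvalues of $H$ as $\mu_1\le\mu_2\le\mu_3$ with $\sum_i\mu_i=0$, we have $W_1=\{\mu_1<\mu_2\le\mu_3\}$, $W_3=\{\mu_1\le\mu_2<\mu_3\}$ and $W_2=W_1\cap W_3=\{\mu_1<\mu_2<\mu_3\}$. The analytic input is the standard fact that the spectral projection onto the sum of the eigenspaces for a block of eigenvalues separated from the rest of the spectrum depends continuously on $H$ (via the Riesz contour integral). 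Thus on $W_2$ all three rank-one eigenprojections $P_1(H),P_2(H),P_3(H)$ are continuous, whereas on $W_1$ only $P_1(H)$, the projection onto the simple lowest eigenline, and the complementary projection $I-P_1(H)$ are globally defined and continuous; symmetrically $P_3(H)$ and $I-P_3(H)$ on $W_3$. The eigenvalue chambers $C_1=\{\mu_1<\mu_2\le\mu_3,\ \sum_i\mu_i=0\}$, $C_2=\{\mu_1<\mu_2<\mu_3,\ \sum_i\mu_i=0\}$ and the analogous $C_3$ are convex subsets of a plane.

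For $W_2$, fix $(\nu_1,\nu_2,\nu_3)=(-1,0,1)\in C_2$ and put
\[
  F(H,t)\;=\;\sum_{i=1}^{3}\bigl((1-t)\mu_i(H)+t\nu_i\bigr)P_i(H),\qquad t\in[0,1].
\]
This is continuous on $W_2\times[0,1]$, its eigenvalue vector is the convex combination $(1-t)\mu(H)+t(-1,0,1)\in C_2$ so $F(H,t)\in W_2$ throughout, $F(\cdot,0)=\mathrm{id}$, and $F$ fixes $R_2:=F(W_2,1)$ pointwise; this $R_2$ is exactly the $\Orth(3)$-orbit of $\mathrm{diag}(-1,0,1)$ under conjugation. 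Since $-1,0,1$ are distinct, an orthogonal matrix fixing $\mathrm{diag}(-1,0,1)$ preserves each coordinate axis, so the isotropy group is $(\Orth(1))^{3}$ and $R_2\cong\Orth(3)/(\Orth(1))^{3}$; sending an orthogonal matrix to the ordered triple of spans of its columns identifies this with $\mathbb{R}\mathrm{F}(1,1,1)$. Hence $r_2:=F(\cdot,1)$ is a strong deformation retraction of $W_2$ onto a copy of $\mathbb{R}\mathrm{F}(1,1,1)$.

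For $W_1$ the naive formula above is illegitimate, since $P_2,P_3$ need not exist on the locus $\mu_2=\mu_3$; the point is that once $\mu_2$ and $\mu_3$ are interpolated toward a common value the result is expressible through the globally defined data $P_1(H)$, $H-\mu_1(H)P_1(H)$ (which equals $H$ on the range of $I-P_1(H)$ and vanishes on the lowest eigenline), and $I-P_1(H)$. Fixing $(\nu_1,\nu_2,\nu_3)=(-1,\tfrac{1}{2},\tfrac{1}{2})\in C_1$, put
\[
  F(H,t)\;=\;\bigl((1-t)\mu_1(H)-t\bigr)P_1(H)\;+\;(1-t)\bigl(H-\mu_1(H)P_1(H)\bigr)\;+\;\tfrac{t}{2}\bigl(I-P_1(H)\bigr),
\]
which is continuous on $W_1\times[0,1]$ and has eigenvalues $(1-t)\mu_1(H)-t$ on $P_1(H)$ and $(1-t)\mu_j(H)+\tfrac{t}{2}$ ($j=2,3$) on the range of $I-P_1(H)$, i.e.\ the convex combination $(1-t)\mu(H)+t(-1,\tfrac{1}{2},\tfrac{1}{2})\in C_1$; so $F$ stays in $W_1$, $F(\cdot,0)=\mathrm{id}$, and $F$ fixes $R_1:=F(W_1,1)=\{-\tfrac{3}{2}P+\tfrac{1}{2}I:\ P\text{ a rank-one orthogonal projection}\}$, the $\Orth(3)$-orbit of $\mathrm{diag}(-1,\tfrac{1}{2},\tfrac{1}{2})$. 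Its isotropy group is $\Orth(1)\times\Orth(2)$, so $R_1\cong\Orth(3)/(\Orth(1)\times\Orth(2))$, the Grassmannian of lines in $\mathbb{R}^{3}$, namely $\mathbb{R}\mathrm{P}(2)$. This gives $r_1$. Finally $W_3$ follows by symmetry: $H\mapsto-H$ is an involutive isometry of $\Herm_0(3,\mathbb{R})$ interchanging $W_1$ and $W_3$, so $r_3(H):=-r_1(-H)$ is a strong deformation retraction of $W_3$ onto $R_3=-R_1$, again a copy of $\mathbb{R}\mathrm{P}(2)$. The only delicate point is exactly this matter for $W_1$ and $W_3$ — exhibiting a homotopy that is genuinely continuous across the degeneracy locus $\mu_2=\mu_3$ — and the substitution above resolves it; the remaining checks (convexity of the $C_i$, so that the interpolations never leave $W_i$; that $r_i$ restricts to the identity on $R_i$; and the computation of the isotropy groups) are routine.
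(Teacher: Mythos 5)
Your argument is correct: the eigenvalue interpolations stay inside the chambers, the formulas fix the target orbits pointwise at every time, and the isotropy computations $\Orth(3)/(\Orth(1)\times\Orth(2))\cong\mathbb{R}\mathrm{P}(2)$ and $\Orth(3)/(\Orth(1))^{3}\cong\mathbb{R}\mathrm{F}(1,1,1)$ identify $R_1,R_3$ and $R_2$ as claimed. Conceptually this is the same strategy as the paper --- slide the ordered eigenvalues linearly to a standard point of the chamber while keeping the eigenframe fixed, then read off the homotopy type of the resulting $\Orth(3)$-orbit from its isotropy group --- but the implementation differs in how global continuity is secured. The paper works on the slice of diagonal matrices, using the coordinates $b=(\mu_2-\mu_1)/(\mu_3-\mu_1)$ and $r$, and extends the homotopy to all of $W_i$ by $\Orth(3)$-equivariance via $A=O\,D\,O^t$, justifying well-definedness by the behaviour of isotropy subgroups along the deformation; continuity across the degeneracy locus $\mu_2=\mu_3$ in $W_1$ (and $\mu_1=\mu_2$ in $W_3$) is handled only implicitly there. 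You instead write the homotopies directly through spectral data --- all three Riesz projections on $W_2$, and on $W_1$ only the globally continuous objects $\mu_1(H)$, $P_1(H)$, $H-\mu_1(H)P_1(H)$ and $I-P_1(H)$ --- which makes continuity at the degeneracy locus manifest, and you dispose of $W_3$ by the isometry $H\mapsto -H$. What your route buys is an explicit, self-contained verification of exactly the delicate point; what the paper's route buys is brevity, a retraction that is manifestly $\Orth(3)$-equivariant (a property it uses later, e.g.\ for restricting $r_1,r_3$ to $S^4(r_0)$ and for defining the leaves $r_i^{-1}(y)$), and a normalisation of the radius $r$ so that each $R_i$ sits as a single orbit in the unit sphere; your targets $\mathrm{diag}(-1,0,1)$ and $\mathrm{diag}(-1,\tfrac12,\tfrac12)$ are equally good for the homotopy statement, since any interior point of the chamber serves.
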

\begin{proof}
We first construct these deformation retractions on the subspace of
diagonal matrices $W^D_i$ in $W_i$.  To this end, introduce the
coordinates
\begin{equation}
  b = \frac{\mu_2 - \mu_1}{\mu_3 - \mu_1}
  \qquad \text{and} \qquad
  r = \sqrt{\mu_1^2 + \mu_2^2 + \mu_3^2}
\label{bary}
\end{equation}
and define
\begin{equation}
  D(b,r) = \frac{r}{\sqrt{6(1-b+b^2)}}
       \begin{bmatrix}
        -(1+b) & 0 & 0 \\
       0 & 2b-1 & 0 \\
       0 & 0 & 2-b
       \end{bmatrix}
  \label{diag-matrix}
\end{equation}
Note that this is a well defined coordinate system for the diagonal
matrices in the similar degeneracy regions, because $\mu_3 > \mu_1$ in
each of the regions.  For each $W_i$, we now define homotopies
$\phi_i\;{:}\;W^D_i \times [0,1] \longrightarrow W^D_i$ as follows:
\begin{subequations}
\begin{align}
  \phi_1(D(b,r),t) & = D(b - t(b-1)\,,\, r -t (r-1)) \\
  \phi_2(D(b,r),t) & = D(b - t(b - \frac{1}{2})\,,\,r -
                      t(r-1)) \\
  \phi_3(D(b,r),t) & = D(b - tb\,,\,r - t(r-1))
\end{align}
\end{subequations}
These homotopies are deformations of $W^D_i$ onto
\begin{equation}
\begin{split}
  \phi_1(D,1) =
  \begin{bmatrix}
     -\sqrt{\frac{2}{3}} &        0           &    0 \\
             0           &  \frac{1}{\sqrt{6}} &    0 \\
             0           &                    & \frac{1}{\sqrt{6}}
  \end{bmatrix},
  \qquad
  \phi_2(D,1) =
  \begin{bmatrix}
         -\frac{1}{\sqrt{2}} &  0  &    0  \\
                  0          &  0  &    0  \\
                  0          &  0  &  \frac{1}{\sqrt{2}}
  \end{bmatrix}, \\
  \text{and}\qquad
  \phi_3(D,1) =
  \begin{bmatrix}
      -\frac{1}{\sqrt{6}}    &    0    &    0   \\
             0               & -\frac{1}{\sqrt{6}} & 0 \\
             0               &    0    &  \sqrt{\frac{2}{3}}
  \end{bmatrix}
\label{diag}
\end{split}
\end{equation}

The finite-dimensional spectral theorem or principal axis theorem
implies that all matrices $A \in W_i$ may be written as $A = O\,D\,O^t$,
where $O$ is an element of $\Orth(3)$, the group of orthogonal $3 \times 3$
matrices.  This allows us to extend the deformations $\phi_i$ to
deformations $\phi_i$ of $W_i$.  Specifically, for $A = O\,D\,O^t$, we
define
\begin{equation}
  \phi_i(A,t) = O\,\phi_i(D,t)\,O^t
\end{equation}
Note that $O$ is not uniquely determined, but any such $O$ will give
the same result for $\phi_i(A,t)$, because the isotropy subgroup of
the $\Orth(3)$ action is the same for all $t \in [0,1]$.  Thus, $\phi_i$
is a well defined deformation.  Furthermore, it is equiavariant with
respect to the action of $\Orth(3)$ on $\Herm_0(3,\mathbb{R})$.
The retractions $r_i$ are now defined by $r_i(A) = \phi_i(A,1)$.  The
homotopy types of the retracts $R_i$ follow from the
isotropy subgroups of the $\Orth(3)$ action on the diagonal
matrices in~(\ref{diag}).
\end{proof}

If $Z_1$ is the 1\nolinebreak\mbox{-}dimensional eigenspace associated
with $\mu_1$, then we may construct a real line bundle with fibre
$Z_1$ over $W_1$ and total space
\begin{equation}
  X_1 = \{ (v,A) \in \mathbb{R}^3 \times W_1  \mid
           A v = \mu_1 v \} \,.
\end{equation}
We denote this real line bundle by
\begin{subequations}
\begin{gather}
\parbox{\textwidth}{%
  \xymatrix{\xi_1:\,\mathbb{R}\; \ar @{^{(}->} [r] & X_1 \ar[d]^{\wp_1} \\
                                           & W_1
           }
}
\intertext{where the projection operator is}
  \wp_1\,{:}\; (v,A) \longmapsto A \, .
\end{gather}
\end{subequations}
Proving that $\xi_1$ satisfies the local triviality requirement of
a vector bundle is not difficult.  Similarly, the eigenspaces $Z_2$ and
$Z_3$ for $\mu_2$ and $\mu_3$, respectively, may be used to construct
real line bundles
\begin{equation}
\parbox{\textwidth}{%
  \xymatrix{ \xi_2:\,\mathbb{R}\; \ar @{^{(}->} [r] & X_2 \ar[d]^{\wp_2} \\
                                           & W_2
           }
}
\end{equation}
and
\begin{equation}
\parbox{\textwidth}{%
  \xymatrix{\xi_3:\, \mathbb{R}\; \ar @{^{(}->} [r] & X_3 \ar[d]^{\wp_3} \\
                                           & W_3
           }
}
\end{equation}

\begin{proposition}
Over $\mathbb{R}{\rm P}(2)$ is defined the canonical or tautological
line bundle, $\tau$, and the bundles $\xi_1$ and $\xi_3$ are
isomorphic to the pullback bundles $r_1^*\tau$ and $r_3^*\tau$,
respectively.  Over $\mathbb{R}\mathrm{F} (1,1,1)$ are defined three
canonical line bundles, $\tau_1$, $\tau_2$, and $\tau_3$, which are
associated with the first, second, and third elements, respectively,
in the triples of mutually orthogonal lines in $\mathbb{R}^3$.  All
three of these bundles are nontrivial, although their Whitney sum is
isomorphic to the trivial $\mathbb{R}^3$ vector bundle over
$\mathbb{R}\mathrm{F} (1,1,1)$.  The line bundle $\xi_2$ is isomorphic
to the pullback bundle $r_2^*\tau_2$.
\end{proposition}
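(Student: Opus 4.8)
The plan is to reduce the whole statement to the behaviour of $\xi_1,\xi_2,\xi_3$ on the retracts $R_1,R_2,R_3$, via homotopy invariance of pullback bundles, after which the restrictions turn out to be the canonical bundles essentially by definition. First I would make explicit the descriptions of the retracts that are implicit in the proof of Proposition~\ref{retract-prop}. Every $A\in R_1$ has the form $A=O\,\phi_1(D,1)\,O^{t}$ with $\phi_1(D,1)=\mathrm{diag}(-\sqrt{2/3},1/\sqrt{6},1/\sqrt{6})$, so its bottom eigenvalue $\mu_1$ is simple and $R_1$ is the single $\Orth(3)$-orbit $\Orth(3)/(\Orth(1)\times\Orth(2))$; the assignment $A\mapsto(\mu_1\text{-eigenline of }A)$ is then a homeomorphism $R_1\cong\mathbb{R}\mathrm{P}(2)$, and it is through this homeomorphism that the map written $r_1$ in the statement is to be read. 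In the same way $A\mapsto(\mu_3\text{-eigenline of }A)$ identifies $R_3$ with $\mathbb{R}\mathrm{P}(2)$, and, because on $R_2$ all three eigenvalues are simple, $A\mapsto(\mu_1\text{-},\mu_2\text{-},\mu_3\text{-eigenlines of }A)$ identifies $R_2$ with $\mathbb{R}\mathrm{F}(1,1,1)$.

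Next I would compute the restrictions to these retracts. By construction the fibre of $\xi_1$ over $A$ is exactly the $\mu_1$-eigenline of $A$, which under the homeomorphism above is precisely the line $\ell\in\mathbb{R}\mathrm{P}(2)$ indexing it; hence $\xi_1|_{R_1}\cong\tau$, and identically $\xi_3|_{R_3}\cong\tau$. Likewise the fibre of $\xi_2$ over $A$ is the $\mu_2$-eigenline, i.e.\ the second line of the triple, so $\xi_2|_{R_2}\cong\tau_2$. The asserted isomorphisms then follow formally: since $r_i$ is a strong deformation retraction, $\iota_i\circ r_i\simeq\mathrm{id}_{W_i}$, where $\iota_i\colon R_i\hookrightarrow W_i$ is the inclusion, and $W_i$ is paracompact, so $\xi_i\cong(\iota_i\circ r_i)^{*}\xi_i=r_i^{*}(\iota_i^{*}\xi_i)=r_i^{*}(\xi_i|_{R_i})$, which reads $r_1^{*}\tau$, $r_3^{*}\tau$ and $r_2^{*}\tau_2$ respectively.

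It remains to establish the auxiliary facts about $\tau_1,\tau_2,\tau_3$. The fibre of $\tau_1\oplus\tau_2\oplus\tau_3$ over $(\ell_1,\ell_2,\ell_3)$ is $\ell_1\oplus\ell_2\oplus\ell_3=\mathbb{R}^3$, because the three lines are mutually orthogonal, so $(\ell_1,\ell_2,\ell_3,v_1,v_2,v_3)\mapsto(\ell_1,\ell_2,\ell_3,v_1+v_2+v_3)$ is a fibrewise linear bundle isomorphism onto the trivial $\mathbb{R}^3$ bundle. For nontriviality it is enough to restrict each $\tau_i$ to a suitable circle: rotating the standard flag through $\theta\in[0,\pi]$ about its own third axis returns the identical flag at $\theta=\pi$ (the first two axes are reversed as vectors but fixed as lines), and along the resulting loop both $\tau_1$ and $\tau_2$ restrict to the M\"obius bundle; rotating instead about the first axis gives a loop along which $\tau_3$ restricts to the M\"obius bundle. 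Hence none of the $\tau_i$ is trivial.

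The step demanding the most care is the bookkeeping of the first paragraph --- pinning down exactly which homeomorphisms $R_1\cong\mathbb{R}\mathrm{P}(2)$, $R_3\cong\mathbb{R}\mathrm{P}(2)$, $R_2\cong\mathbb{R}\mathrm{F}(1,1,1)$ are intended, so that the composite of the deformation retraction with the homeomorphism is the map denoted $r_i$ in the statement --- together with checking the monodromy in the nontriviality argument, namely that rotation by $\pi$ about one axis of a flag really is a closed loop in $\mathbb{R}\mathrm{F}(1,1,1)$ while flipping exactly the relevant eigenvector. Everything else is routine once homotopy invariance of pullback bundles is invoked.
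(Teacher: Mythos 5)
Your proposal is correct and takes essentially the same approach as the paper, whose proof is a one-line appeal to ``the definition of the tautological bundles and the explicit construction of the line bundles $\xi_i$ in terms of eigenspaces.'' You merely make explicit the details the paper leaves implicit --- the eigenline homeomorphisms $R_1,R_3\cong\mathbb{R}\mathrm{P}(2)$ and $R_2\cong\mathbb{R}\mathrm{F}(1,1,1)$, homotopy invariance of pullbacks along the strong deformation retractions, and the M\"obius-loop verification that the $\tau_i$ are nontrivial.
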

\begin{proof}
This proposition follows from the definition of the tautological
bundles and the explicit construction of the line bundles, $\xi_i$, in
terms of eigenspaces.
\end{proof}

We have established that none of the line bundles $\xi_1$, $\xi_2$,
and $\xi_3$ are nontrivial,\footnote{We remark that not only are each
of the real line bundles $\xi_i$ nontrivial, but their
complexifications are nontrivial complex line bundles.  This contrasts
with the $E \otimes e$ Jahn-Teller effect in which the the electronic
eigenspaces form vector bundles which are isomorphic to the M\"obius
band.  The complexification of the M\"obius band is a trivial complex
line bundle, which allows the Born-Oppenheimer Hamiltonian for the $E
\otimes e$ Jahn-Teller effect to be written as a differential operator
acting on complex-valued functions rather than sections of a
nontrivial real line bundle.  However, this approach to the $E \otimes e$
Jahn-Teller effect cannot be applied to the Jahn-Teller effect for
orbital triplets.} because the maps $r_1$, $r_2$, and $r_3$ are
deformation retractions.  Furthermore, the first Stiefel-Whitney
classes of $\xi_1$ and $\xi_3$ are the nontrivial elements in the
cohomology groups
\begin{equation}
  H^1(W_1\,;\,\mathbb{Z}_2) = \mathbb{Z}_2  \qquad \text{and} \qquad 
  H^1(W_3\,;\,\mathbb{Z}_2) = \mathbb{Z}_2 \: ,
\end{equation}
respectively.  The first Stiefel-Whitney class of $\xi_2$ is one of
the three nontrivial elements of
\begin{equation}
  H^1(W_2\,;\,\mathbb{Z}_2) \cong \mathbb{Z}_2 \oplus \mathbb{Z}_2 \: .
\end{equation}
Precisely, it is the image of the first Stiefel-Whitney class of
$\tau_2$ under $r_2^*$.

The usual Born-Oppenheimer approximation cannot be used to calculate
the pseudorotational eigenvalues associated with each of the
electronic eigenvalues $\lambda_1$, $\lambda_2$, and $\lambda_3$,
because their corresponding real line bundles $\xi_1$, $\xi_2$, and
$\xi_3$ are nontrivial.  Instead, we must use a generalisation of the
Born-Oppenheimer approximation to nontrivial vector bundles.  To this
end, we denote the intersection of $W_i \cap N^\prime$ by $N_i$ and
the restriction of $\xi_i$ to $N_i$ by $\eta_i$.  The regions $N_i$
will be referred to as the similar degeneracy regions in $N^\prime$.

It follows from the Jahn-Teller Theorem that the electronic eigenvalue
$\lambda_i(\mathbf{q})$ is minimised on a submanifold of $N_i$.
Therefore, the molecular wavefunctions for low-energy pseudorotations
associated with $\lambda_i$ are mostly supported on $N_i$.  This leads
us to make the approximation that these molecular wavefunctions lie in
the Hilbert space $L^2(N_1 \times \mathbb{R}^{3z};\,\mathbb{R})$.  By
a standard construction~\cite{RS80}, this Hilbert space is isomorphic
to $L^2(N_i;\, L^2(\mathbb{R}^{3z};\, \mathbb{R}))$.  In order to
define the Born-Oppenheimer Hamiltonian associated with the $i^{\rm
th}$ electronic eigenspace, we shall view this Hilbert space in terms
of the trivial vector bundle
\begin{equation}
\parbox{\textwidth}{%
     \xymatrix{%
     \epsilon_i:\, 
       L^2(\mathbb{R}^{3z};\,\mathbb{R})\; \ar @{^{(}->} [r] &
       N_i \times L^2(\mathbb{R}^{3z};\,\mathbb{R}) \ar[d] \\
     & N_i}
}
\end{equation}
Triviality of $\epsilon_i$ implies that $L^2(\epsilon_i)$, the Hilbert
space of square integrable sections of $\epsilon_i$, is canonically
isomorphic to $L^2(N_i;\, L^2(\mathbb{R}^{3z};\, \mathbb{R}))$.  In
light of these two isomorphism, the Hilbert space of molecular
wavefunctions can be viewed as $L^2(\epsilon_i)$ and the molecular
Hamiltonian $H_{\rm mol}$ is therefore an operator on
$L^2(\epsilon_i)$.

The line bundle $\eta_i$ is a sub-bundle of $\epsilon_i$ and the
Hilbert space $L^2(\eta_i)$ is a subspace of $L^2(\epsilon_i)$.  We
denote the projection operator onto $L^2(\eta_i)$ by $P_i$ and the
projection operator onto the orthogonal complement of $L^2(\eta_i)$ in
$L^2(\epsilon)$ by $Q_i$.  In terms of these vector bundles, the
Born-Oppenheimer approximation states that the molecular Hamiltonian
$H_{\rm mol}$ can be approximately restricted to $L^2(\eta_i)$.
Verifying this statement amounts to showing that $Q_i\,H_{\rm
mol}\,P_i$ is small in some suitable sense.  In the simplified context
of diatomic molecules the accuracy of this approximation is analysed
in~\cite{CDS81,GH87,GH88} and an expansion for higher order
corrections is derived in~\cite{GH87,GH88}.  The approximate
restriction of $H_{\rm mol}$ to $L^2(\eta_i)$ is defined by
\begin{equation}
  B_i = P_i\,H_{\rm mol}\,P_i\,:\; L^2(\eta_i) \longrightarrow
  L^2(\eta_i)\,.
\end{equation}
It is called the Born-Oppenheimer Hamiltonian.

Straightforward algebra establishes that $B_i$ is a second
order differential operator of the form
\begin{equation}
  B_i = - \triangle_i + V_i \,,
  \label{bo-ham}
\end{equation}
where the Laplacian $\triangle_i$ is defined on the Hilbert space
$L^2(\eta_i)$ of square-integrable sections of $\eta_i$ and the
effective potential $V_i$ is a bundle endomorphism of $\eta_i$.  The
Laplacian acting on sections of a vector bundle is defined as follows.

\begin{definition}[Connection Laplacian]
Consider a vector bundle $\xi$ over a Riemannian manifold
$\mathcal{M}$, with connection $\nabla$.  The tangent and cotangent
bundles of $\mathcal{M}$ are denoted by $T\mathcal{M}$ and
$T^*\mathcal{M}$, respectively.  The vector space of smooth sections
of a vector bundle is denoted by $C^\infty(\,\cdot\,)$.  A connection
on $\xi$ is a bilinear map
\begin{equation}
  \nabla\,{:}\; C^\infty(T\mathcal{M}) \times C^\infty(\xi) \longrightarrow
     C^\infty(\xi)\,.
\end{equation}
This map, which is conventionally written as $\nabla\,{:}\; (X,\sigma)
\longmapsto \nabla_X\sigma$, satisfies:
\begin{list}{}{\setlength{\labelwidth}{4ex}\setlength{\labelsep}{1ex}
                          \setlength{\leftmargin}{5ex}}
\item[(i)]  $\nabla_X(f\sigma) = \left(X \cdot f\right)\sigma +
            f\nabla_X\sigma$, where $f$ is a function on $\mathcal{M}$ and $X
            \cdot f$ is the directional derivative of $f$ in the
            direction $X$.
\item[(ii)] $X \cdot \left<\sigma\,,\,\omega \right> =
            \left<\nabla_X\sigma\,,\, \omega\right> +
            \left<\sigma\,,\, \nabla_X\omega\right>$, where
            $\left<\,\cdot\,,\,\cdot\,\right>$ is the fibre-wise metric on
            $\xi$.
\end{list}
Then on $C^\infty(\xi)$, the covariant exterior differential relative
to $\nabla$,
\begin{equation}
  d\,{:}\; C^\infty(\xi) \longmapsto C^\infty(T^*\,\mathcal{M} \otimes \xi)\,,
\end{equation}
is defined by $d\,\sigma(X) = \nabla_X\sigma$.  Note that
$C^\infty(T^*\,\mathcal{M} \otimes \xi)$ is simply the space of $\xi$-valued
1-forms. 

The Connection Laplacian $\triangle\,{:}\;C^\infty(\xi) \longrightarrow
C^\infty(\xi)$ is defined by 
\begin{equation}
  \triangle\,\sigma = -d^*d\,\sigma\,, 
\end{equation}
where $d^*$ is the adjoint of $d$.  In terms of the connection, it is
given by $\triangle\,\sigma = \trace\nabla^2\sigma$.  Finally,
$\triangle$ is defined as a self-adjoint operator by a suitable
extension of its domain to a Sobolev subspace of $L^2(\xi)$.
\end{definition}

For the current situation in which $H_{\rm mol}$ is time-reversal
invariant and $\eta_i$ is a real line bundle, the connection
$\nabla_i$ is the flat connection on $\eta_i$.  Furthermore, a bundle
endomorphism of a line bundle may be canonically identified with
multiplication by a function on the base space.  Therefore, $V_i$ is
simply a potential function on $N_i$.  More generally, if $H_{\rm
mol}$ is not time-reversal invariant or if $\eta_i$ is a vector bundle
with fibre dimension greater than or equal to 2, then the differential
operator will contain derivatives of order one.  However, the leading
symbol of $B_i$ is the metric tensor of $N_i$ and therefore there
exists a unique connection $\nabla$ on $\eta_i$ for which
\begin{equation}
  B_i = -\triangle_i^\nabla + V_i\,,
\end{equation}
where $\triangle_i^\nabla$ is the Laplacian with respect to $\nabla$
and $V_i$ is a bundle endomorphism\linebreak[4] \mbox{\cite[Lemma~4.8.1]{G84}}.
For the case in which $N_i$ is a region in Euclidean space, the
connection $\nabla$ is the adiabatic connection described
in~\cite{S83}.

The simplest approximation for the effective potential $V_i$ is
\begin{equation}
  V_i(\mathbf{q}) = \lambda_i(\mathbf{q}) + V_{\rm
  nuc}(\mathbf{q})\label{pot}\,,
\end{equation}
where $V_{\rm nuc}(\mathbf{q})$ is the nuclei-nuclei potential defined
in (\ref{mol-ham}) and $\lambda_i$ is an eigenvalue of $H_Z$ defined
in~(\ref{Hz}).

It should be stressed that although the Hamiltonian (\ref{bo-ham})
looks locally like the usual Born-Oppenheimer Hamiltonian, the global
topology of a vector bundle is very important for any differential
operator acting on sections of it. As an example, consider that the
spectrum of the Laplacian acting on real-valued functions defined on
the unit circle $S^1$ is $\{n^2 \mid n = 0, 1, 2, \dots \}$, whereas
the spectrum of the Laplacian acting on sections of the M\"obius band
over $S^1$ is $\{n^2 \mid n = \frac{1}{2}, \frac{3}{2}, \frac{5}{2},
\dots \}$.

We have chosen to make the nuclear mass implicit in the definition of
$\triangle_i$ in (\ref{bo-ham}) by including it in the definition of
the Riemannian metric on $N_i$.  In the normal mode coordinate system
$(q_1,q_2,q_3,q_4,q_5)$, the covariant metric tensor on $N_i$ is
defined to be
\begin{equation}
  g_{kl} = 2\,\bar{M}\, \delta_{kl} \,,
\end{equation}
where $\delta_{kl}$ is the Kronecker delta tensor and $\bar{M}$ is the
effective nuclear mass associated with the $e_g$ and $t_{2g}$ normal
modes.  In terms of these coordinates, the Laplacian $\triangle_i$ has
the local form $\frac{1}{2\bar{M}}\sum_{k=1}^5
\frac{\partial^2}{\partial q_k^2}$.  Expressed in terms of the
Hilbert-Schmidt inner product on $\mathop{\rm Herm}_0(3,\mathbb{R})$
this Riemannian metric is
\begin{equation}
  g(A,B) = 2\,\kappa^{-2}\,\bar{M}\,\trace \left(AB\right)\,,
\end{equation}
for $A$ and $B$ in $\mathop{\rm Herm}_0(3,\mathbb{R})$.

The stable behaviour of the molecule is governed by the minima of the
potentials $V_i$, which are bounded from below.  The Jahn-Teller
theorem implies that $V_1$ cannot be minimised at $\mathbf{q} =
\mathbf{0}$.  Furthermore, the global minimum of $V_1$ is not an
isolated point, but is a critical manifold $Y$, which is a compact
submanifold of $N_1$.

Note that the points $\mathbf{q} \in N^\prime$ where two eigenvalues
$\lambda_i$ and $\lambda_j$ cross transversally are singular points
for the potentials $V_i$ and $V_j$, because the definition of these
potentials incorporates the ordering (\ref{order}).  The potentials
$V_2$ and $V_3$ may well be minimised at such singular points in
$N^\prime$ and the Jahn-Teller theorem tells us nothing about such
minima.  However, depending on the molecular Hamiltonian and the
details of the approximation used to derive $V_i$, each of the
effective potentials may also have differentiable local minima which
correspond to some stable vibronic excitation of the molecule.  The
Jahn-Teller theorem does apply in this case and we can conclude that
such minima cannot occur at $\mathbf{q} = \mathbf{0}$.
We will let $Y$ denote any submanifold of $N^\prime$ on which one of
the potentials $V_i$ has a nonsingular minimum.  The effective
potential or Born-Oppenheimer Hamiltonian under consideration will be
clear from context.

\section{Foliations}
\label{Sect_Geom}

In order to compute pseudorotational energy levels, we require
detailed information about the geometry of $Y$.  We shall identify $Y$
as a leaf of a foliation of a 4\nolinebreak\mbox{-}dimensional sphere
in the normal mode space $N^\prime$.  As a consequence, we will be
able to not only describe the geometry of $Y$, but also precisely how
$Y$ is embedded in this sphere.

A foliation is a decomposition of a manifold into a layering of
leaves.  Any point in the manifold must be in exactly one leaf, which
means that there can be no intersections between leaves and that the
manifold must be completely filled by the leaves.  We refer
to~\cite{M88} for the precise definition of a foliation and for an
introduction to foliation theory.

Let $S^4(r)$ denote the 4\nolinebreak\mbox{-}dimensional sphere of
radius $r$ centred at the origin in $N$.  Under the identification of
$N$ with $\mathop{\rm Herm}_0(3,\mathbb{R})$ we can view $S^4(r)$ as
being the spere of radius $r$ in $\mathop{\rm Herm}_0(3,\mathbb{R})$.
In this context, the sphere $S^4(r)$ is invariant under the adjoint
action of orthogonal group $\Orth(3)$.  This action determines a
decomposition of $S^4(r)$ into orbits, where the orbit containing a
matrix $A \in S^4(r)$ is
\begin{equation}
  \left\{B \in S^4(r) \mid B = O\,A\,O^t\,,\;\text{for 
  some}\; O \in \Orth(3)\right\}\,.
\end{equation}
The coordinate $b$ introduced in~(\ref{bary}) parameterises the orbit
space of this action.

Note that the potential $V_i$ depends on $H_{\rm el}$ only through the
eigenvalue $\lambda_i$.  Both $\lambda_i$ and the quadratic term
$s_0(\mathbf{q})$ are invariant under the adjoint action of $\Orth(3)$.
Therefore, the minimising submanifold $Y$ is entirely contained in a
sphere $S^4(r_0)$, where the radius $r_0$ depends on the physical
parameters $\kappa$, $\beta$, and $\bar{M}$.  The similar degeneracy
regions $N_1$, $N_2$, and $N_3$ each intersect this sphere $S^4(r_0)$
and we denote the similar degeneracy regions in $S^4(r_0)$ by
\begin{equation}
  \mathfrak{W}_i = N_i \cap S^4(r_0)\,,\quad
                \text{for $i=1,2,3$.}
\end{equation}
The hierarchy of the submanifolds $Y$, $S^4({r_0})$, and
$N^\prime$ are shown in Figure~\ref{fig1}.  Note that $\mathfrak{W}_2
= \mathfrak{W}_1 \cap \mathfrak{W}_3$ and that $\mathfrak{W}_2$ is the
submanifold in $S^4(r_0)$ where the three eigenvalues $\mu_1$,
$\mu_2$ and $\mu_3$ are mutually distinct.  Therefore, generically the
minimising submanifold $Y$ is contained in $\mathfrak{W}_2$.

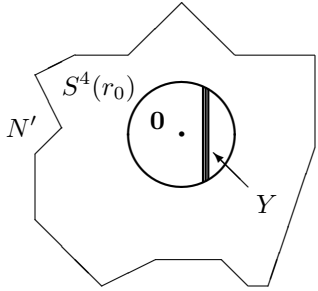
\begin{figure}\centering
  \begin{picture}(140,150)
    \thicklines
    \put(80,97.32){\line(0,-1){34.64}}
    \put(79,97.86){\line(0,-1){35.72}}
    \put(78,98.33){\line(0,-1){36.66}}
    \put(70,80){\circle{40}}
    \thinlines
    \put(70,80){\circle*{2}}
    \put(58,82){$\mathbf{0}$}
    \put(70,130){\line(-1,-1){20}}
    \put(70,130){\line(1,-1){20}}
    \put(50,110){\line(-1,0){20}}
    \put(90,110){\line(1,0){30}}
    \put(30,110){\line(-2,-1){15}}
    \put(120,110){\line(0,-1){35}}
    \put(120,75){\line(-1,-3){17.5}}
    \put(15,102.5){\line(1,-2){10}}
    \put(25,82.5){\line(-1,-1){10}}
    \put(15,72.5){\line(0,-1){25}}
    \put(15,47.5){\line(1,-1){25}}
    \put(40,22.5){\line(2,1){20}}
    \put(60,32.5){\line(1,0){25}}
    \put(85,32.5){\line(1,-1){10}}
    \put(95,22.5){\line(1,0){7.5}}
    \put(25,95){$\displaystyle S^4(r_0)$}
    \put(4,79){$\displaystyle N^\prime$}
    \put(95,60){\vector(-1,1){13}}
    \put(98,50){$\displaystyle Y$}
  \end{picture}
  \medskip

\caption{The open 5-dimensional manifold $N^\prime$ is starlike about
         $\mathbf{0}$.  The 4-dimensional sphere $S^4(r_0)$, which has
         radius $r_0$, is centred at $\mathbf{0}$ and is
         a submanifold of $N^\prime$.  The 3-dimensional submanifold $Y
         \subset S^4(r_0)$ is generically diffeomorphic to the real
         short flag manifold $\mathbb{R}\mathrm{F} (1,1,1)$.  \label{fig1}}
\end{figure}

The restriction of the deformation retractions (\ref{ret1}) and
(\ref{ret3}) to $S^4(r_0)$ are also deformation retractions.
Although, it is not generally true that the restriction of a
deformation retraction is also a deformation retraction, in this case
each deformation retraction $r_i$ may be written as a composition of a
deformation retraction of $W_i$ to $\mathfrak{W}_i$ with a further
retraction of $\mathfrak{W}_i$.  We denote $r_1(\mathfrak{W}_1)$ by
$\mathfrak{R}_1$ and $r_3(\mathfrak{W}_3)$ by $\mathfrak{R}_3$.  It
follows from Proposition~\ref{retract-prop} that both $\mathfrak{R}_1$
and $\mathfrak{R}_3$ are diffeomorphic to the real projective space
$\mathbb{R}{\rm P}(2)$.  However, the restriction of the deformation
retraction (\ref{ret2}) to $S^4(r_0)$ implies that the similar
degeneracy region $\mathfrak{W}_2$ is diffeomorphic to
$\mathbb{R}\mathrm{F}(1,1,1) \times (0,1)$.  This geometrical
dichotomy suggests that the problem of calculating the
pseudorotational spectra also divides into two cases.  The
pseudorotational spectra corresponding to $\lambda_1$ and $\lambda_3$
will be referred to as type~I spectra, whereas the spectrum
corresponding to $\lambda_2$ will be referred to as the type~II
spectrum.  This terminology will also be used for the similar
degeneracy regions associated with each of these eigenvalues.

The similar degeneracy regions $\mathfrak{W}_i$ are open Riemannian
submanifolds of the\linebreak[4] 4\nolinebreak\mbox{-}dimensional
sphere $S^4(r_0) \subset N$.  They obtain their Riemannian metric from
the restriction of the metric $g$ on $N$.  It is important to note
that the spectrum of the Born-Oppenheimer Hamiltonian $B_i$ depends
crucially on the Riemannian geometry of $\mathfrak{W}_i$.  The first
observation that we can make about the geometry of $\mathfrak{W}_1$
and $\mathfrak{W}_3$ is that they are isometric to each other as
Riemannian manifolds.  This follows from the inherent symmetry of the
metric $g$.

For $0 < b < 1$, the orbits of the adjoint $\Orth(3)$ action on $S^4(r_0)$
are diffeomorphic to the real flag manifold
$\mathbb{R}\mathrm{F}(1,1,1)$.  These orbits, which we denote by
$\mathfrak{F}(b)$, foliate the similar degeneracy region
$\mathfrak{W}_2$.  The exceptional orbits when $b = 0$ and $b=1$ are
$\mathfrak{R}_1$ and $\mathfrak{R}_3$, respectively, because the
deformation retractions $r_1$ and $r_3$ are equivariant with respect
to the $\Orth(3)$ action.  Furthermore, $\mathfrak{R}_1$ and
$\mathfrak{R}_3$ are isometric Riemannian manifolds.  The
decomposition of $S^4(r_0)$ into similar degeneracy regions, which are
in turn foliated into the orbits $\mathfrak{F}(b)$, is shown in
Figure~\ref{fig2}.

\begin{figure}\centering
  \begin{picture}(240,168)
    \thicklines
    \put(30,100){\circle*{12}}
    \put(210,100){\circle*{12}}
    \put(40,101.5){\line(1,0){160}}
    \put(40,100.8){\line(1,0){160}}
    \put(40,100.1){\line(1,0){160}}
    \put(40,99.3){\line(1,0){160}}
    \put(40,98.5){\line(1,0){160}}
    \put(130,85){\line(0,1){30}}
    \put(135,120){\makebox(0,0)[l]{$\mathfrak{F}(b)\approx \mathbb{R}\mathrm{F}(1,1,1)$}}
    \put(17,100){\makebox(0,0)[r]{\parbox{6em}{%
            \begin{align*}
               \mathfrak{R}_1 & = \mathfrak{F}(0) \\
                              & \approx \mathbb{R}{\rm P}(2)
            \end{align*}}}}
    \put(224,100){\makebox(0,0)[l]{\parbox{6em}{%
             \begin{align*}
                \mathfrak{R}_3 & = \mathfrak{F}(1) \\
                               & \approx \mathbb{R}{\rm P}(2)
             \end{align*}}}}
    \put(111.5,60){\makebox(0,0){$\underbrace{\hspace*{175pt}}$}}
    \put(111.5,51){\makebox(0,0)[t]{$\mathfrak{W}_1$}}
    \put(128.5,30){\makebox(0,0){$\underbrace{\hspace*{175pt}}$}}
    \put(128.5,21){\makebox(0,0)[t]{$\mathfrak{W}_3$}}
    \put(120,140){\makebox(0,0){$\overbrace{\hspace*{157pt}}$}}
    \put(120,150){\makebox(0,0)[b]{$\mathfrak{W}_2 \approx
                  \mathbb{R}\mathrm{F}(1,1,1) \times (0,1)$}}
  \end{picture}
\caption{Decomposition of $S^4(r_0)$
\label{fig2}}
\end{figure}
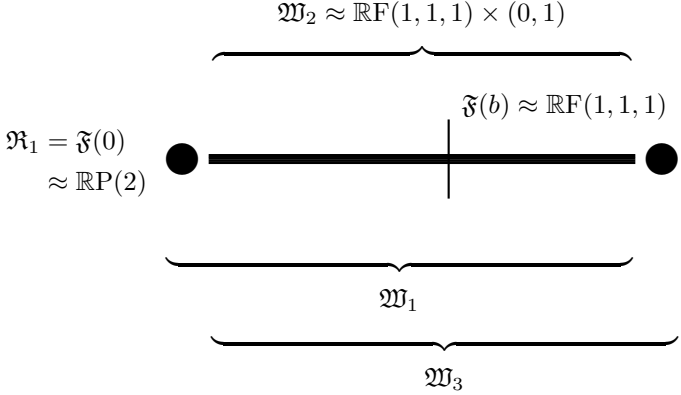

Using a standard method for defining a coordinate system on a
homogeneous space \cite[Section~II.4]{H78}, we can construct
coordinate systems on $\mathfrak{F}(b)$ in terms of the Lie algebra of
$\Orth(3)$.  Specifically, define the diagonal matrix
\begin{subequations}
\begin{gather}
  D(b) = \begin{bmatrix}
       \frac{-r_0(1+b)}{(\sqrt{6})\,(\sqrt{1-b+b^2})} & 0 & 0 \\
       0 & \frac{r_0(2b-1)}{(\sqrt{6})\,(\sqrt{1-b+b^2})} & 0 \\
       0 & 0 & \frac{r_0(2-b)}{(\sqrt{6})\,(\sqrt{1-b+b^2})}
       \end{bmatrix}
\intertext{and the following basis for the Lie algebra of $\Orth(3)$:} 
 E_1 = \begin{bmatrix}
         0 & 0 & 0 \\
         0 & 0 & -1\\
         0 & 1 & 0
        \end{bmatrix} 
  \qquad
 E_2 = \begin{bmatrix}
        0 & 0 & -1\\
        0 & 0 & 0 \\
        1 & 0 & 0
        \end{bmatrix}
  \qquad
 E_3 = \begin{bmatrix}
        0 & -1 & 0 \\
        1 & 0 & 0  \\
        0 & 0 & 0
        \end{bmatrix}
\end{gather}
Then, terms of these matrices, a coordinate system for $S^4(r_0)$ is
\begin{equation}
  \begin{split}
  A(b,x_1,x_2,x_3) = \left[\exp(x_2E_2)\right]^t\,\left[\exp(x_1E_1)\right]^t\,
                     \left[\exp(x_3E_3)\right]^t\,D(b)\, \\
                     \left[\exp(x_3E_3)\right]\,\left[\exp(x_1E_1)\right]\,
                     \left[\exp(x_2E_2)\right]
  \end{split}
\end{equation}
\label{W1-coords}
\end{subequations}
Using this coordinate system and related coordinate systems, it is
straightforward to compute various geometrical quantities.

First, the Gaussian curvatures of $\mathfrak{R}_1$ and
$\mathfrak{R}_3$ are constant and both equal to
$\left(3r_0^{\>2}\right)^{-1}$.  It is interesting to note that for
$r_0 = 1$, the submanifolds $\mathfrak{R}_1$ and $\mathfrak{R}_3$ are
both isometric to the image of the well-known Veronese
embedding~\cite{BCO03,CR85} of $\mathbb{R}{\rm P}(2)$ into $S^4$.
This embedding has also been studied by Massey~\cite{M74} who called
it the ``canonical'' embedding of $\mathbb{R}{\rm P}(2)$ into a sphere
of minimum dimension.  The lowest dimensional sphere into which
$\mathbb{R}{\rm P}(2)$ can be embedded is $S^4$.

Consider now the generic leaves when $b \in (0,1)$.  The leaves
$\mathfrak{F}(b)$ form a geodesically parallel family of submanifolds
in $S^4(r_0)$.  In terms of the coordinate system~(\ref{W1-coords}),
the vector field $\frac{\partial}{\partial b}$ is a normal vector
field to $\mathfrak{F}(b)$, which generates a geodesic flow from leaf
to leaf.  Furthermore, the adjoint action of $\Orth(3)$ on
$\Herm_0(3,\mathbb{R})$ defines a connected closed subgroup of the
isometry group of $\Herm_0(3,\mathbb{R})$.  Therefore, it follows
from~\cite[Prop.~3.8.2]{BCO03} that for $b \in (0,1)$, the orbits of
this action form an isoparametric family of hypersurfaces in
$S^4(r_0)$\footnote{Isoparametric hypersurfaces in a manifold
$\mathcal{M}$ are generally defined as the regular level hypersurfaces
of an isoparametric function on $\mathcal{M}$.  However, an equivalent
definition for hypersurfaces in a sphere is that a family is
isoparametric if and only if the hypersurfaces are geodesically
parallel and have constant principal curvatures.  See
\cite[Sect.~5.2.e]{BCO03} and \cite[Sect.~3.1]{CR85}.  Such a family
is said to be of type $n$ if there are $n$ distinct principal
curvatures.}.  Hence, the principal curvatures of $\mathfrak{F}(b)$
are constant.  By direct computation, we find that they are
\begin{subequations}
\begin{align}
  k_1(b) & = \frac{1}{\sqrt{3}\,r_0}\> 
                   \frac{b - 2}{b}\\
  k_2(b) & = \frac{1}{\sqrt{3}\,r_0}\> 
                   \left(2b -1 \right)\\
  k_3(b) & = \frac{1}{\sqrt{3}\,r_0}\> 
                   \frac{1 + b}{1 - b}
\end{align}
\label{princ}
\end{subequations}
Note that these principal curvatures are distinct, satisfying $k_1(b)
< k_2(b) < k_3(b)$, for all $b \in \left( 0,1 \right)$, and therefore,
this is an isoparametric family of type~3.  For $r_0 = 1$, it is
isometric to Cartan's isoparametric family in $S^4$~\cite{C38,N75}.
The focal submanifolds of this family are $\mathfrak{R}_1$ and
$\mathfrak{R}_3$. 

An \mbox{$r$-dimensional} submanifold is said to be minimal if its
$r$-dimensional volume is locally extremal.  It follows from results
in~\cite{H66} that $\mathfrak{R}_1$ and $\mathfrak{R}_3$ are minimal
submanifolds of $S^4(r_0)$, because they are isolated orbits of the
adjoint action of $\Orth(3)$ on $S^4(r_0)$.  Also, at least one of the
submanifolds $\mathfrak{F}(b)$, for $b \in \left(0,1\right)$, must be
minimal.  The submanifold $\mathfrak{F}(b)$ will be minimal if and
only if its mean curvature vanishes.  The mean curvature of
$\mathfrak{F}(\beta)$ is
\begin{equation}
  h(b) = \frac{1}{3}\sum_{i = 1}^3k_i(b)
       = \frac{1}{3\sqrt{3}\,r_0}\>\frac{(2 - b)(2b
             - 1)(b + 1)}{(1 - b)b}
\label{mean-curv}
\end{equation}
and the only value of $b \in \left(0,1\right)$ for which $h(b)$ vanishes is $b =
\frac{1}{2}$.  Therefore, $\mathfrak{F}(\frac{1}{2})$ is a minimal
submanifold.

The Gauss-Codazzi equations imply that for hypersurfaces in $S^4$,
the scalar curvature is given by
\begin{equation}
  \mathop{\rm scal} = 6 - \sum_{i=1}^3k_i^2 + 9\,h^2\,.
\end{equation}
Substituting from (\ref{princ}), we find that the scalar curvature of
$\mathfrak{F}(b)$ is $\mathop{\rm scal}(b) = 0$, for all $b \in
(0,1)$.

The retractions $r_1$ and $r_3$ determine foliations of the similar
degeneracy regions $\mathfrak{W}_1$ and $\mathfrak{W}_3$,
respectively. These foliations will be useful in determining the
spectra of the Born-Oppenheimer Hamiltonians.  The leaves of the
foliation $\mathcal{F}_i$ of $\mathfrak{W}_i$ are defined to be
\begin{equation}
  \mathfrak{L}_i(y) = r_i^{-1}(y)\,,
\end{equation}
for each $y \in \mathfrak{R}_i$ and $i=1,3$. 
By construction there is a one-to-one correspondence between leaves of
$\mathcal{F}_i$ and points in $\mathfrak{R}_i$.  Hence, we can view
$\mathfrak{R}_i$ as the leaf space of $\mathcal{F}_i$.  Furthermore, 
the foliations $\mathcal{F}_1$ and $\mathcal{F}_3$ are isometric,
with the substitution $b \mapsto 1-b$ taking one to the other.

\begin{proposition}
\label{geo}
The foliations $\mathcal{F}_1$ and $\mathcal{F}_3$ are totally
geodesic foliations\footnote{A submanifold $\mathcal{M} \subset
\mathcal{M}^\prime$ is said to be totally geodesic if every geodesic in
$M$ is also a geodesic in $\mathcal{M}^\prime$.  A foliation is totally
geodesic if each leaf of the foliation is totally geodesic.}  of
$\mathfrak{W}_1$ and $\mathfrak{W}_3$, respectively.
\end{proposition}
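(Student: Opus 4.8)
The plan is to exploit the $\Orth(3)$-equivariance of the whole picture, together with the explicit coordinate description already in hand. First I would recall that, by construction, a leaf $\mathfrak{L}_1(y) = r_1^{-1}(y)$ of $\mathcal{F}_1$ fibres over the point $y \in \mathfrak{R}_1 \approx \mathbb{R}\mathrm{P}(2)$, and that the family of orbits $\mathfrak{F}(b)$, $b \in (0,1)$, together with the focal submanifold $\mathfrak{R}_1 = \mathfrak{F}(0)$, sweeps out $\mathfrak{W}_1$. Using the coordinate system (\ref{W1-coords}), the leaf $\mathfrak{L}_1(y)$ through the diagonal point is the curve $b \mapsto D(b)$ (suitably normalised) together with its $\Orth(3)$-translates fixing $y$; more precisely, $\mathfrak{L}_1(y)$ is the image under the isotropy subgroup at $y$ of the radial-type arc in $b$. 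I would show that each leaf is one-dimensional — the integral curve of the normal field $\partial/\partial b$ emanating from a point of $\mathfrak{R}_1$ — since $r_1$ collapses each such arc to its endpoint in $\mathfrak{R}_1$.

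The key step is then to identify these arcs as geodesics of $S^4(r_0)$ and to verify that the induced metric makes them geodesics of $\mathfrak{W}_1$ as well. For this I would invoke the isoparametric structure established above: the orbits $\mathfrak{F}(b)$ form a geodesically parallel family in $S^4(r_0)$, which by definition means the normal geodesics to one leaf stay normal to all the others, and the orbit map $b \mapsto \mathfrak{F}(b)$ is traversed by unit-speed normal geodesics of $S^4(r_0)$ (after reparametrising $b$ by arclength). Each such normal geodesic, continued toward $b \to 0$, limits onto the focal submanifold $\mathfrak{R}_1$, and the retraction $r_1$ is precisely the map sending a point to the foot of this normal geodesic. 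Hence the leaf $\mathfrak{L}_1(y)$ is exactly one normal geodesic segment of the isoparametric family (with its $\Orth(3)$-orbit structure organising how these segments vary with $y$). Since a geodesic of the ambient $S^4(r_0)$ that happens to lie inside the submanifold $\mathfrak{W}_1$ is automatically a geodesic of $\mathfrak{W}_1$ in its induced metric — the second fundamental form of a curve in $\mathfrak{W}_1$ is the tangential projection of its second fundamental form in $S^4(r_0)$, which already vanishes — each leaf is totally geodesic in $\mathfrak{W}_1$.

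Finally I would observe that totally geodesic for a one-dimensional leaf is exactly the statement that the leaf is a geodesic, so the foliation $\mathcal{F}_1$ is a totally geodesic foliation of $\mathfrak{W}_1$; and the isometry $b \mapsto 1-b$ carrying $\mathcal{F}_1$ to $\mathcal{F}_3$, noted just before the proposition, transports the conclusion to $\mathfrak{W}_3$. The main obstacle I anticipate is the careful bookkeeping at the focal end $b \to 0$: one must check that the normal geodesics of the isoparametric family do land on $\mathfrak{R}_1$ transversally and that $r_1$ genuinely coincides with the foot-point map of these geodesics (rather than merely being homotopic to it), so that "leaf of $\mathcal{F}_1$" and "normal geodesic segment of the isoparametric family" are literally the same set. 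Once that identification is pinned down using the explicit $\phi_1$ in Proposition~\ref{retract-prop}, the geometry is immediate.
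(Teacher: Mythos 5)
The central step of your argument rests on a mis-identification of the leaves: you assert that each leaf of $\mathcal{F}_1$ is one-dimensional, ``the integral curve of $\partial/\partial b$ emanating from a point of $\mathfrak{R}_1$,'' and you then reduce total geodesy to the statement that a single ambient geodesic lying in the open set $\mathfrak{W}_1$ is a geodesic of $\mathfrak{W}_1$. But the leaves are two-dimensional. Indeed $\dim\mathfrak{W}_1=4$ while $\dim\mathfrak{R}_1=2$, so the fibres of $r_1$ are surfaces; concretely, $r_1(A)$ depends only on the $\mu_1$-eigendirection of $A$, so the fibre over $y$ contains, besides the normal arc in $b$, the whole circle's worth of ways of splitting $y^\perp$ into the $\mu_2$- and $\mu_3$-eigendirections. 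In the coordinates~(\ref{W1-coords}) the leaf through the diagonal point is $\{A(b,0,0,x_3)\}$ with the two parameters $(b,x_3)$, and the proposition following Proposition~\ref{geo} identifies each leaf with a truncated $2$-sphere. Knowing that the constant-$x_3$ arcs are normal geodesics of the isoparametric family (which is true) does not give total geodesy of the surface they sweep out: a geodesic cone through a focal point need not be totally geodesic, so you must still show that the full second fundamental form of the leaf vanishes, including in the angular $x_3$ direction and the mixed $(b,x_3)$ direction. That is exactly the content of the Christoffel-symbol computation in the paper's proof, which your proposal has no substitute for.

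Your strategy can be repaired, but it requires an additional geometric input rather than bookkeeping at the focal end (the obstacle you flag is not the real one). For instance: the leaf over $y$ lies inside the intersection of $S^4(r_0)$ with the $3$-dimensional linear subspace of traceless symmetric matrices admitting $y$ as an eigenvector --- equivalently, the fixed-point set of the isometry given by conjugation with the reflection through $y^\perp$, or the exponential image of the $2$-dimensional normal plane to $\mathfrak{R}_1$ at $y$. That intersection is a great $2$-sphere, hence totally geodesic in $S^4(r_0)$, and an open subset of it is totally geodesic in the open set $\mathfrak{W}_1$; the transitivity of the $\Orth(3)$-action on the leaf space, together with the isometry $b\mapsto 1-b$ exchanging $\mathcal{F}_1$ and $\mathcal{F}_3$, then finishes the proof as in the paper. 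Without some argument of this kind (or the paper's explicit vanishing of $\Gamma^l_{ij}$ on the leaf), the proposal does not establish the proposition.
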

\begin{proof}
It suffices to consider just $\mathcal{F}_1$, because the two
foliations are isometric.  The adjoint action of $\Orth(3)$ on
$\mathfrak{W}_1$ is a transitive action on the leaf space of
$\mathcal{F}_1$.  This implies that any two leaves of $\mathcal{F}_1$
are isometric and we need only show that one of the leaves is totally
geodesic.  

A submanifold is totally geodesic if and only if its second
fundamental form vanishes \mbox{\cite[Theorem~4.1]{O87}}.  In terms of
the coordinate system in~(\ref{W1-coords}), the leaf corresponding to
the diagonal matrix in the leaf space $\mathfrak{R}_1$ is
\begin{equation}
  \mathfrak{L}_1(D) = \{A(b,x_1,x_2,x_3)\mid b \in [0,1)\,,\,x_1=0\,,\,x_2=0\,,
        \,x_3 \in [0,\pi)\}\,.
\end{equation}
An orthogonal frame field for the tangent bundle $T\mathfrak{L}_1(D)$ is
\begin{equation}
  \frac{\partial}{\partial x_0} = \left .\frac{\partial A}{\partial b}
  \right|_{x_1=x_2=0} 
  \qquad\text{and}\qquad
  \frac{\partial}{\partial x_3} = \left .\frac{\partial A}{\partial
    x_3} \right|_{x_1=x_2=0} 
\end{equation}
In terms of this coordinate system, the second fundamental form of
$\mathfrak{L}_1(D)$ is
\begin{equation}
  \Psi\left(\frac{\partial}{x_i},\frac{\partial}{x_j}\right) =
  \sum_{l=1}^2 \Gamma_{ij}^l\,\frac{\partial}{\partial x_l}\,,
\end{equation}
where $i,j=0,3$ and $\Gamma_{ij}^l$ are the Christoffel symbols of the
second kind.  Computing the relevant Christoffel symbols gives
\begin{equation}
\begin{alignedat}{3}
  \Gamma_{00}^1 &= 0 &\qquad  \Gamma_{00}^2 &=0 \\
  \Gamma_{33}^1 &= 0 &\qquad \Gamma_{33}^2 &= 0 \\
  \Gamma_{03}^1 &= \Gamma_{30}^1 = 0 &\qquad \Gamma_{03}^2 &= 
    \Gamma_{30}^2 = 0
\end{alignedat}\label{christoffel}
\end{equation}
on $\mathfrak{L}_1(D)$. Therefore, $\Psi = 0$ on $\mathfrak{L}_1(D)$.
\end{proof}

In order to give a detailed description of the geometry of the leaves
of $\mathcal{F}_1$ and $\mathcal{F}_3$, we first consider $S^2(r_0)$,
the 2\nolinebreak\mbox{-}dimensional sphere with the same radius as
the ambient \mbox{$4$-sphere}, $S^4(r_0)$.  On $S^2(r_0)$ define
geographical coordinates $(\varphi\,,\,\vartheta)$, where $\varphi \in
[0,2\pi)$ is the angle of longitude and $\vartheta \in
[-\frac{\pi}{2},\frac{\pi}{2}]$ is the angle of latitude.  A
straightforward computation using the coordinate system
in~(\ref{W1-coords}) establishes the following result.

\begin{proposition}
For $i=1,2$, each leaf $\mathfrak{L}_i(y)$ of $\mathcal{F}_i$ is
isometric to the submanifold of $S^2(r_0)$ which is defined by
\begin{equation}
  \left\{(\varphi\,,\,\vartheta) \in S^2(r_0) \mid
  \frac{\pi}{6} < \vartheta \le \frac{\pi}{2}\right\}.
\end{equation}
In this coordinate system, the point $y\in \mathfrak{R}_i$ corresponds
to the north pole at $\vartheta = \frac{\pi}{2}$.  In other words,
$\mathfrak{L}_i(y)$ is a 2\nolinebreak\mbox{-}dimensional sphere of
radius $r_0$, which has been truncated at $30^\circ$ latitude.
This truncated sphere is attached to $\mathfrak{R}_1$ at $y$, the north
pole of $\mathfrak{L}_i(y)$. 
\end{proposition}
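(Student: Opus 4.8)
The plan is to compute the induced metric on a leaf directly in the chart $(b,x_3)$ furnished by the proof of Proposition~\ref{geo}, and then to produce the explicit change of variables carrying it to the round metric on a truncated $2$-sphere. It is enough to treat one of the two foliations, say $\mathcal{F}_1$, since $\mathcal{F}_1$ and $\mathcal{F}_3$ are isometric under $b\mapsto 1-b$. By the proof of Proposition~\ref{geo} the leaf $\mathfrak{L}_1(D)$ attached at a diagonal point $y\in\mathfrak{R}_1$ is the image of $[0,1)\times[0,\pi)$ under $(b,x_3)\mapsto A(b,0,0,x_3)$, with $A$ the coordinate map of~(\ref{W1-coords}).

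First I would compute the first fundamental form. Writing $R=\exp(x_3E_3)$ and using $E_3^t=-E_3$, one gets $\partial_b A = R^t D'(b)\,R$ and $\partial_{x_3}A = R^t[D(b),E_3]\,R$; since $\trace(XY)$ is conjugation invariant, the metric coefficients collapse to traces of fixed $3\times 3$ matrices,
\begin{equation*}
  E(b)=\trace\!\big(D'(b)^2\big),\qquad
  G(b)=\trace\!\big([D(b),E_3]^2\big),\qquad
  F=\trace\!\big(D'(b)\,[D(b),E_3]\big)=0,
\end{equation*}
the last vanishing because $D'(b)$ is diagonal while $[D(b),E_3]$ is off-diagonal (this is the orthogonality of the frame field already noted in the proof of Proposition~\ref{geo}). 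Evaluating with $D(b)$ as in~(\ref{W1-coords}) yields
\begin{equation*}
  E(b)=\frac{3r_0^{\,2}}{4(1-b+b^2)^2},\qquad
  G(b)=\frac{3r_0^{\,2}\,b^2}{1-b+b^2},
\end{equation*}
so the induced metric is $E(b)\,db^2+G(b)\,dx_3^2$.

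The crux is to recognise this as the round metric $r_0^{\,2}\big(d\vartheta^2+\cos^2\vartheta\,d\varphi^2\big)$ on $S^2(r_0)$. Matching the $dx_3^2$ term to the $d\varphi^2$ term forces $\varphi=2x_3$ together with
\begin{equation*}
  \cos\vartheta=\frac{\sqrt3\,b}{2\sqrt{1-b+b^2}},\qquad\text{equivalently}\qquad
  \sin\vartheta=\frac{2-b}{2\sqrt{1-b+b^2}}.
\end{equation*}
What must then be checked is the compatibility condition $r_0^{\,2}(d\vartheta/db)^2=E(b)$; differentiating the displayed relation gives $d\vartheta/db=-\sqrt3\,/\,\big[2(1-b+b^2)\big]$, and squaring indeed reproduces $E(b)$. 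This is the one nontrivial point, and I expect it to be the main obstacle to write down cleanly; conceptually it cannot fail, because by Proposition~\ref{geo} the leaf is totally geodesic in $S^4(r_0)$, so the Gauss equation pins its intrinsic curvature to the constant $r_0^{-2}$, and a simply connected surface of constant curvature $r_0^{-2}$ immerses isometrically into $S^2(r_0)$ — the computation merely identifies the image. Finally I would read off the range: $b\in[0,1)$ corresponds to $\vartheta\in(\pi/6,\pi/2]$ (as $b\to1$, $\cos\vartheta\to\tfrac{\sqrt3}{2}$, so the boundary circle $\vartheta=\pi/6$ is omitted), $x_3\in[0,\pi)$ corresponds to $\varphi\in[0,2\pi)$, and at $b=0$ the upper $2\times2$ block of $D(0)$ is scalar, hence invariant under $\exp(x_3E_3)$, so the entire $x_3$-circle degenerates to the single point $y$ — exactly the behaviour of the round sphere at the north pole $\vartheta=\pi/2$. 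Hence $\mathfrak{L}_1(y)$ is isometric to $S^2(r_0)$ truncated at $30^\circ$ latitude, attached to $\mathfrak{R}_1$ at its north pole $y$, and the $\mathcal{F}_3$ statement follows by $b\mapsto1-b$.
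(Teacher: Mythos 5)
Your computation is correct and is exactly the ``straightforward computation using the coordinate system~(\ref{W1-coords})'' that the paper invokes without writing out: the induced metric $E(b)\,db^2+G(b)\,dx_3^2$ on $\mathfrak{L}_1(D)$, the substitution $\varphi=2x_3$, $\cos\vartheta=\tfrac{\sqrt{3}\,b}{2\sqrt{1-b+b^2}}$ (the paper's equation~(\ref{latitude})), and the range check $b\in[0,1)\leftrightarrow\vartheta\in(\tfrac{\pi}{6},\tfrac{\pi}{2}]$ with the $x_3$-circle collapsing at $b=0$ all agree with the paper's conventions and conclusions. The appeal to Proposition~\ref{geo} and the Gauss equation as a consistency check is a nice addition but not a departure from the paper's route.
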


On a leaf $\mathfrak{L}_1(y)$ of the foliation $\mathcal{F}_1$, the
coordinate $b$ is related to the latitude coordinate $\varphi$ by
\begin{equation}
  \cos\varphi = \frac{\sqrt{3}\,b}{2\sqrt{\mathstrut 1 - b + b^2}}\,.
  \label{latitude}
\end{equation}
It follows that the leaves of the foliation $\mathcal{F}_1$ intersect
$\mathfrak{F}(b)$ in circles of constant latitude on
$\mathfrak{L}_1(y)$.  This defines a foliation $\mathcal{C}_1(b)$ of
$\mathfrak{F}(b)$ by circles of radius $r_0 \, \cos\varphi$.  The
leaves of this foliation are $\mathfrak{C}_1(y) = \mathfrak{L}_1(y)
\cap \mathfrak{F}(b)$, for $y \in \mathfrak{R}_1$, which implies that
the leaf space of $\mathcal{C}_1(b)$ is homeomorphic to
$\mathfrak{R}_1$.  It follows from the Christoffel symbols calculated
in~(\ref{christoffel}) that the leaves of $\mathcal{C}_1(b)$ are
geodesics in $\mathfrak{F}(b)$, which means that $\mathcal{C}_1$ is a
geodesic flow.  Similarly, the intersection of the leaves of
$\mathcal{F}_3$ with $\mathfrak{F}(b)$ give another geodesic flow,
$\mathcal{C}_3(b)$. Furthermore, the main theorem in~\cite{Ep72} implies
that $\mathcal{C}_1(b)$ and $\mathcal{C}_3(b)$ are Seifert fibrations of
$\mathfrak{F(b)}$.

The leaves of the foliation $\mathcal{F}_1$ are also the fibres of the
normal bundle of the embedding of $\mathfrak{R}_1$ in $S^4(r_0)$.  We
shall stretch our notation by using $\mathcal{F}_1$ to also denote
this normal bundle.  To understand $\mathcal{F}_1$ as a bundle, it is
useful to compute its Euler class, which is an element the cohomology
group $H^2(\mathbb{R}{\rm P}(2)\,;\,\mathcal{Z})$.  It is necessary to
use cohomology with twisted integer coefficients, denoted by
$\mathcal{Z}$, because $\mathfrak{R}_1$ is a nonorientable manifold.

\begin{proposition}
The Euler class of $\mathcal{F}_1$ is $e(\mathcal{F}_1) = \pm 2 \in
H^2(\mathbb{R}{\rm P}(2)\,;\,\mathcal{Z}) \cong \mathbb{Z}$, where a
choice of sign corresponds to a choice of orientation on
$S^4(r_0)$.
\end{proposition}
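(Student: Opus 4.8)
The plan is to compute the twisted Euler number over $\mathfrak{R}_1 = \mathbb{R}{\rm P}(2)$ by passing to the orientation double cover, where $\mathcal{F}_1$ becomes an ordinary oriented plane bundle over $S^2$ that can be recognized from its total space. The geometric input, already assembled above, is that the isoparametric leaf $\mathfrak{F}(b)$, for small $b>0$, \emph{is} the unit normal sphere bundle $S(\mathcal{F}_1)$ of the focal submanifold $\mathfrak{R}_1$: the normal exponential map of $\mathfrak{R}_1$ in $S^4(r_0)$ is a diffeomorphism from the radius-$t(b)$ normal circle bundle onto $\mathfrak{F}(b)$ for small tube radius (no intervening focal points, by the isoparametric structure), carrying the normal circle over $y$ to $\mathfrak{C}_1(y) = \mathfrak{L}_1(y)\cap\mathfrak{F}(b)$. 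Hence $S(\mathcal{F}_1)$, as a circle bundle over $\mathbb{R}{\rm P}(2)$, is diffeomorphic to $\mathfrak{F}(b)\approx\mathbb{R}\mathrm{F}(1,1,1)$. Writing the flag manifold as $\SOrth(3)/V_4$, with $V_4$ the Klein four-group of diagonal rotations, and lifting the action to $\SU(2)=S^3$, one gets $\mathbb{R}\mathrm{F}(1,1,1)\cong S^3/\quat$, so $\pi_1 S(\mathcal{F}_1)\cong\quat$.

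Next I would pull $\mathcal{F}_1$ back along the orientation double cover $\pi\colon S^2\to\mathbb{R}{\rm P}(2)$. Since $H^1(S^2;\mathbb{Z}_2)=0$ we have $w_1(\pi^*\mathcal{F}_1)=0$, so $\pi^*\mathcal{F}_1$ is an oriented rank-$2$ bundle over $S^2$, determined by its Euler number. Its circle bundle $S(\pi^*\mathcal{F}_1)=\pi^*S(\mathcal{F}_1)$ is the connected double cover of $S^3/\quat$ corresponding to the kernel of the composite surjection $\pi_1 S(\mathcal{F}_1)=\quat\twoheadrightarrow\pi_1\mathbb{R}{\rm P}(2)=\mathbb{Z}_2$; that kernel is a cyclic subgroup of order four (e.g.\ $\langle i\rangle\cong\mathbb{Z}_4$), and it acts on $S^3=\mathbb{C}^2$ by scalar multiplication by a primitive fourth root of unity, so $S(\pi^*\mathcal{F}_1)\cong S^3/\mathbb{Z}_4=L(4,1)$. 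As $L(4,1)$ is the total space of the circle bundle over $S^2$ with Euler number $\pm4$, we conclude $\langle e(\pi^*\mathcal{F}_1),[S^2]\rangle=\pm4$.

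Finally I would descend. Because $TS^4(r_0)|_{\mathfrak{R}_1}$ is orientable, $w_1(\mathcal{F}_1)=w_1(\mathfrak{R}_1)$, so the orientation line bundle of $\mathcal{F}_1$ carries the same twist as that of $\mathbb{R}{\rm P}(2)$ and $e(\mathcal{F}_1)\in H^2(\mathbb{R}{\rm P}(2);\mathcal{Z})\cong\mathbb{Z}$; naturality of the Euler class with local coefficients gives $\pi^*e(\mathcal{F}_1)=e(\pi^*\mathcal{F}_1)$ in $H^2(S^2;\mathbb{Z})$. It then remains to identify $\pi^*\colon H^2(\mathbb{R}{\rm P}(2);\mathcal{Z})\to H^2(S^2;\mathbb{Z})$ as multiplication by $\pm2$: one reads this off the cellular cochain complex of $\mathbb{R}{\rm P}(2)$ with $\mathcal{Z}$ coefficients, whose coboundaries are multiplication by $2$ and by $0$, or equivalently notes that under twisted Poincar\'e duality $\pi^*$ becomes the transfer $H_0(\mathbb{R}{\rm P}(2);\mathbb{Z})\to H_0(S^2;\mathbb{Z})$ sending a point to the sum of its two preimages. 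Since this map is injective, $2\,e(\mathcal{F}_1)=\pm4$ forces $e(\mathcal{F}_1)=\pm2$, the sign being precisely the freedom in orienting $\mathcal{F}_1$ compatibly with a chosen orientation of $S^4(r_0)$.

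The hard part is the orientation and twisted-coefficient bookkeeping rather than any delicate geometry: one must (i) be certain that $\mathfrak{F}(b)$ is genuinely $S(\mathcal{F}_1)$ and not a nontrivial cover of it, which is exactly the statement that the normal exponential map is an embedding for small radius and is supplied by the tube/isoparametric structure already in hand, and (ii) get the factor $2$, not $0$, in $\pi^*$ on twisted $H^2$. As independent checks, the value $\pm2$ agrees with a direct evaluation of $\frac{1}{2\pi}\int_{S^2}\pi^*\Omega$ for the normal curvature $2$-form $\Omega$ obtained from the coordinates in~(\ref{W1-coords}), and with Massey's determination~\cite{M74} that the Veronese embedding of $\mathbb{R}{\rm P}(2)$ in $S^4$ has normal Euler number $\pm2$.
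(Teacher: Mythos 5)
Your proof is correct, but it is not the route the paper takes: the paper's proof proper is a one-line appeal to Massey's resolution of the Whitney conjecture \cite{M69} --- for a closed, connected, nonorientable surface embedded in $S^4$ the normal Euler number lies in $\{2\chi-4,\,2\chi,\,\dots,\,4-2\chi\}$ --- so with $\chi(\mathfrak{R}_1)=\chi(\mathbb{R}\mathrm{P}(2))=1$ the only possible values are $\pm 2$, and no further geometry is needed. Your argument is instead a self-contained computation, and it is close in spirit to the supplementary ``more direct computation'' the paper records after its proof: both pull back along the orientation double cover $S^2 \to \mathfrak{R}_1$ and both identify the induced map on (twisted) $H^2$ as multiplication by $2$ (the paper via $p_1^*e(T\mathfrak{R}_1)=e(TS^2)$ and the Euler characteristics $1$ and $2$, you via the twisted cellular cochain complex or the transfer). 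Where you genuinely differ is upstairs: the paper recognises the pulled-back circle bundle as the projectivised tangent bundle $PTS^2$ and uses $e(PTS^2)=2\,e(TS^2)=4$, whereas you use the quaternionic model $\mathfrak{F}(b)\cong S^3/\quat$ together with the tube structure ($\mathfrak{F}(b)$ is the unit normal bundle of the focal submanifold $\mathfrak{R}_1$, consistent with the paper's identification of the leaves of $\mathcal{F}_1$ with the normal fibres and of $\mathcal{C}_1(b)$ with circles fibring $\mathfrak{F}(b)$ over $\mathfrak{R}_1$) to see that the pulled-back circle bundle has total space $S^3/\mathbb{Z}_4\cong L(4,1)$, whence $|e|=4$. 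That last inference deserves one explicit word: the homeomorphism type of the total space pins down $|e|$ because the Gysin sequence gives $H_1$ of the Euler-number-$n$ circle bundle over $S^2$ as $\mathbb{Z}/n$, which is what your lens-space step is really using. The trade-off is clear: Massey's theorem buys the result with no geometric input beyond $\chi=1$ but at the price of citing a hard theorem; your computation (like the paper's secondary one) is elementary and moreover exhibits concretely how the $\quat$-covering data of Section~4 encodes the normal bundle, needing only $\pi_1$ and $H_1$ of the covering rather than an identification with $PTS^2$; and your closing cross-checks against the Veronese embedding \cite{M74} and the curvature integral are consistent with the paper's remark that $\mathfrak{R}_1$ is isometric to the Veronese surface.
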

\begin{proof}
This follows directly from a general result that was conjectured by
H.~Whitney and proven by W.~S.~Massey~\cite{M69}.  Specifically, if
$\mathcal{M}$ is a closed, connected, nonorientable surface embedded
in $S^4$, then the Euler class of the normal bundle has one of the
following values:
\begin{equation}
  2\chi-4\,,\,2\chi\,,\,2\chi+4\,,\,\dots\,,\,4-2\chi\,,
\end{equation}
where $\chi$ is the Euler characteristic of $\mathcal{M}$.

The Euler characteristic of $\mathfrak{R}_1$ is $\chi = \sum (-1)^q
\mathop{\rm dim}H^q(\mathfrak{R}_1\,;\,\mathbb{R}) = 1$.
\end{proof}

Some additional insight can be gained by a more direct computation of
$e(\mathcal{F}_1)$.  Consider the universal covering
\begin{equation}
\parbox{\textwidth}{%
  \xymatrix{\mathbb{Z}_2\; \ar @{^{(}->} [r] 
             & S^2(\sqrt{3}\,r_0) \ar[d]^{p_1} \\
             & \mathfrak{R}_1
           }
}	   
\end{equation}
The pullback by $p_1$ of the tangent bundle $T\mathfrak{R}_1$ is
isomorphic to the tangent bundle $TS^2(\sqrt{3}\,r_0)$, because $p_1$
is a covering map.  Therefore, the induced map on cohomology,
$p_1^*\,{:}\; H^2(\mathfrak{R}_1\,;\,\mathcal{Z}) \longrightarrow
H^2(S^2\,;\,\mathbb{Z})$, must map the Euler class of
$T\mathfrak{R}_1$ to the Euler class of $TS^2(\sqrt{3}\, r_0)$.  This
implies that the map $p_1^*$ above must be the multiplication by 2 map
from $\mathbb{Z}$ to $\mathbb{Z}$, because the Euler characteristic of
$S^2$ is 2.

Having indentified the map $p_1^*\,{:}\;
H^2(\mathfrak{R}_1;\mathcal{Z}) \longrightarrow H^2(S^2;\mathbb{Z})$,
we now consider the pullback bundle $p_1^*(\mathcal{F}_1)$.  By
looking directly at how $\mathcal{F}_1$ is constructed, it is apparent
that $p_1^*(\mathcal{F}_1)$ is isomorphic to the associated projective
tangent bundle over $PTS^2$.  This bundle is constructed by
considering the unit tangent bundle over $S^2$ and identifying each
unit tangent vector with its negative.  Therefore, the circle fibres
of the unit tangent bundle double cover the $\mathbb{R}\mathrm{P}(1)$
fibres of projective tangent bundle.  Of course, $PTS^2$ is also an
$S^1$ bundle over $S^2$.  Furthermore, the Euler class of this bundle
is $e(PTS^2) = 2\,e(TS^2) = 4$, which implies that the Euler class
$e(p_1^*(\mathcal{F}_1))$ is equal to $\pm 4$, where the sign depends
on whether our choice of orientation for $S^2$ is consistent with the
orientation on $S^4(r_0)$.  Recalling that we have already shown that
$p_1^*\,{:}\; H^2(\mathfrak{R}_1;\mathcal{Z}) \longrightarrow
H^2(S^2\,;\,\mathbb{Z})$ is multiplication by $2$, we conclude that
the Euler class of $\mathcal{F}_1$ is $\pm 2$.

The nontriviality of the Euler class $\mathfrak{e}(\mathcal{F}_1)$
implies that $\mathfrak{W}_1$ is not simply a product of the leaf space
$\mathfrak{R}_1$ with a generic leaf $\mathfrak{L}_1(y)$, but rather, the
leaves are twisted together in a complicated fashion.  Note that up to
sign, the Euler class $e(\mathcal{F}_3)$ is equal to the Euler
class of $\mathcal{F}_1$.

We now examine further the geodesic flows $\mathcal{C}_i(b)$ in
$\mathfrak{F}(b)$.  The restriction of the deformation retraction
$r_1$ to $\mathfrak{F}(b)$ is a submersion from $\mathfrak{F}(b)$ to
$\mathfrak{R}_1$.  The fibres of this submersion are the leaves of the
geodesic flow $\mathcal{C}_1$.  The manifold $\mathfrak{R}_1$ can be
viewed as the leaf space of $\mathcal{C}_1(b)$.  However, note that
although $\mathfrak{R}_1$ carries a natural Riemannian metric when
viewed as a submanifold of $S^4(r_0)$, this metric may not be the
natural metric for $\mathfrak{R}_1$ when viewed as the leaf space of
$\mathcal{C}_1(b)$.  The tangent space $T_x\mathfrak{F}(b)$ decomposes
into a 1\nolinebreak\mbox{-}dimensional vertical subspace
$T_x\mathcal{C}_1(b)$ and a 2\nolinebreak\mbox{-}dimensional
horizontal subspace $N_x\mathcal{C}_1(b)$.  This decomposition allows
us to decompose the measure on $\mathfrak{F}(b)$ into a leaf measure
and an invariant transverse measure.  With this decomposition we are
able to compute the volume of $\mathfrak{F}(b)$.

\begin{proposition}
The volume of $\mathfrak{F}(b)$ is
\begin{equation}
  \vol(\mathfrak{F}(b)) =
  \frac{6\sqrt{3}\,b(1-b)\pi^2 r_0^{\> 3}}{\left(1 - b +
  b^2\right)^{\frac{3}{2}}}\,. \label{vol}
\end{equation}
\end{proposition}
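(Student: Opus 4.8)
The plan is to compute $\vol(\mathfrak{F}(b))$ directly from the homogeneous coordinate system~(\ref{W1-coords}), using $\Orth(3)$-equivariance to reduce the whole calculation to linear algebra at the single matrix $D(b)$, and then to reassemble the answer through the leaf/transverse measure decomposition described just above the statement.

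First I would fix the ambient metric: since $\trace\bigl(D(b)^2\bigr)=r_0^{\,2}$ for every $b$ (expand, using $(1+b)^2+(2b-1)^2+(2-b)^2=6(1-b+b^2)$), the round metric on $S^4(r_0)$ is the one induced by the inner product $A,B\mapsto\trace(AB)$ on $\mathop{\rm Herm}_0(3,\mathbb{R})$, which is also consistent with the already recorded Gaussian curvature $(3r_0^{\,2})^{-1}$ of the Veronese surface $\mathfrak{R}_1$. Next, differentiating $A(b,x_1,x_2,x_3)$ in~(\ref{W1-coords}) at $x_1=x_2=x_3=0$ gives $\partial A/\partial x_i|_0=[D(b),E_i]$, the value at $D(b)$ of the Killing field generated by $E_i$. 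Since $E_i$ moves only the pair of coordinate axes complementary to the $i$-th one, $[D(b),E_i]$ has nonzero entries only in positions $(j,k)$ and $(k,j)$ with $\{i,j,k\}=\{1,2,3\}$; hence these three tangent vectors are mutually orthogonal and span $T_{D(b)}\mathfrak{F}(b)$, with squared lengths $a_i(b)=\trace\bigl([D(b),E_i]^2\bigr)=2(d_j-d_k)^2$, where $d_1,d_2,d_3$ are the diagonal entries of $D(b)$. Substituting the explicit $d_i$ gives
\[ a_1a_2a_3=8\bigl[(d_1-d_2)(d_2-d_3)(d_3-d_1)\bigr]^2=\frac{27\,b^2(1-b)^2\,r_0^{\,6}}{(1-b+b^2)^3}\,. \]

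To assemble the volume I would use that the induced metric on $\mathfrak{F}(b)$, pulled back along $O\mapsto O\,D(b)\,O^t$, is the left-invariant metric on $SO(3)$ in which $E_1,E_2,E_3$ have squared lengths $a_1(b),a_2(b),a_3(b)$, so that $\vol(\mathfrak{F}(b))=\sqrt{a_1a_2a_3}\,\vol_0$, where $\vol_0$ is the volume of $\mathbb{R}\mathrm{F}(1,1,1)\cong SO(3)/V_4$ — $V_4$ being the isotropy group of $D(b)$, the Klein four-group of $\pi$-rotations about the coordinate axes — computed in the reference metric in which the $E_i$ are orthonormal. Since the orbit map $SO(3)\to S^2(1)$ is then a Riemannian submersion with great-circle fibres of length $2\pi$ over a sphere of area $4\pi$, we get $\vol(SO(3))=8\pi^2$, hence $\vol_0=8\pi^2/4=2\pi^2$; multiplying out gives exactly~(\ref{vol}). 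As a consistency check matching the hint, this answer also equals the circumference $\sqrt3\,\pi b\,r_0(1-b+b^2)^{-1/2}$ of a generic geodesic-circle leaf of $\mathcal{C}_1(b)$ (from~(\ref{latitude})) times the transverse area $6\pi(1-b)\,r_0^{\,2}(1-b+b^2)^{-1}$, the latter correctly tending to the area $6\pi r_0^{\,2}$ of $\mathfrak{R}_1$ as $b\to0$, where $\mathfrak{F}(b)$ collapses onto the focal submanifold.

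I expect the only real obstacle to be fixing the normalising constant $\vol_0$: one must identify the isotropy group of the adjoint $\Orth(3)$-action at $D(b)$ precisely — equivalently, recognise $\mathbb{R}\mathrm{F}(1,1,1)$ as $SO(3)/V_4$ with $|V_4|=4$ — and choose the reference metric on $SO(3)$ consistently with the Lie-algebra basis $E_1,E_2,E_3$ of~(\ref{W1-coords}). Everything else is a short (if slightly tedious) computation that can be cross-checked against the latitude relation~(\ref{latitude}), the principal curvatures~(\ref{princ}), and the $b\to0$ limit just mentioned.
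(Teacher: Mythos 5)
Your proof is correct, and it takes a genuinely different route from the paper's. The paper works with the circle fibration $\mathcal{C}_1(b)$: it computes the frames $V_2,V_3$ and $X_1,X_2$, extracts the Jacobian of the submersion $r_1|_{\mathfrak{F}(b)}$ onto the leaf space $\mathfrak{R}_1$, and applies Fubini's theorem (fibre length times transverse area). You instead use homogeneity: $\mathfrak{F}(b)$ is the adjoint orbit $\SOrth(3)/V_4$ with an invariant metric, so $\vol(\mathfrak{F}(b))=\sqrt{a_1a_2a_3}\;\vol(\SOrth(3))/4$, where $a_i=2(d_j-d_k)^2$ is computed by linear algebra at the single point $D(b)$ and $\vol(\SOrth(3))=8\pi^2$ in the metric making $E_1,E_2,E_3$ orthonormal; since $a_1a_2a_3=27\,b^2(1-b)^2r_0^{\,6}(1-b+b^2)^{-3}$, this yields (\ref{vol}) exactly. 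What your route buys is that the only global inputs are the isotropy order (the Klein four-group, which you identify correctly) and the normalisation $\vol(\SOrth(3))=8\pi^2$, both fixed cleanly by your submersion onto $S^2(1)$; what the paper's route buys is direct contact with the Seifert fibration $\mathcal{C}_1(b)$ and the leaf space $\mathfrak{R}_1$ used later in the spectral analysis. Your closing consistency check is in substance the paper's own Fubini decomposition, and your constants are the right ones: the leaf circumference is $2\pi r_0\cos\varphi=\sqrt3\,\pi b\,r_0(1-b+b^2)^{-1/2}$ (the paper's printed circumference carries a spurious factor $\tfrac12$), and the transverse area $6\pi(1-b)r_0^{\,2}(1-b+b^2)^{-1}$ corresponds to Gaussian curvature $(1-b+b^2)/(3r_0^{\,2}(1-b))$, the reciprocal of the printed $K_1$; with these values the product reproduces (\ref{vol}), as your orbit computation does.
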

Recall from~(\ref{mean-curv}) that $\vol(\mathfrak{F}(b))$ must have a
local extremum at $b = \frac{1}{2}$.  From this proposition, we see
that $\vol(\mathfrak{F}(b))$ has a maximum at $b = \frac{1}{2}$, for
$b \in (0,1)$.
\begin{proof}
The principal curvatures in~(\ref{princ}) are eigenvalues of the
second fundamantal form of the embedding of $\mathfrak{F}(b)$ in
$S^4(r_0)$.  Eigenvectors corresponding to these eigenvalues are
called principal vector fields on $\mathfrak{F}(b)$.  In can be
verified by explicitly computing the second fundamental form that a
tangent vector field to the leaves of $\mathcal{C}_1(b)$ is a
principal vector field associated with the principal curvature
$k_1(b)$.  This implies that $\mathcal{C}_1(b)$ is a principal
foliation of $\mathfrak{F}(b)$.  Similarly, the foliation
$\mathcal{C}_3(b)$ is the principal foliation associated with the
principal curvature $k_3(b)$.  The remaining principal foliation,
which is associated with $k_2(b)$, shall be denoted by
$\mathcal{C}_2(b)$.  Although $\mathcal{C}_2(b)$ is also a circle
foliation of $\mathfrak{F}(b)$, it does not arise from a deformation
retraction in the same way as $\mathcal{C}_1$ and $\mathcal{C}_3$.

The horizontal subspaces define the normal bundle $N\mathcal{C}_1(b)$
and the tangent bundle $T\mathfrak{F}(b)$ decomposes into the Whitney
sum $T\mathcal{C}_1(b) \oplus N\mathcal{C}_1(b)$.  Furthermore,
$dr_1$, the differential of $r_1$, maps the normal bundle
$N\mathcal{C}_1(b)$ to the tangent bundle $T\mathfrak{R}_1$.  In terms
of the coordinate system~(\ref{W1-coords}), orthonormal principal
vector fields for $\mathcal{C}_2(b)$, and $\mathcal{C}_3(b)$,
respectively, are
\begin{subequations}
\begin{align}
  V_2 &= \frac{r_0\sqrt{1 -b + b^2}}{\sqrt{3}\,(1-b)}\left[
         \cos x_3 \frac{\partial}{\partial x_1} + \frac{\sin x_3}{\cos
         x_1} \frac{\partial}{\partial x_2} + \frac{\sin x_1 \sin
         x_3}{\cos x_1}\frac{\partial}{\partial x_3}\right] \\[1\jot]
  V_3 &= \frac{r_0\sqrt{1 - b + b^2}}{\sqrt{3}} \left[ -\sin x_3
         \frac{\partial}{\partial x_1} + \frac{\cos x_3}{\cos
         x_1}\frac{\partial}{\partial x_2} + \frac{\cos x_3 \sin x_1}{\cos
         x_1} \frac{\partial}{\partial x_3}\right]
\end{align}
\end{subequations}
These vector fields provide a frame for $N\mathcal{C}_1(b)$.  Also,
the vector fields
\begin{equation}
  X_1 = \frac{r_0}{\sqrt{3}}\left . \frac{\partial}{\partial
      x}\right|_{b=0}
  \quad\text{and}\quad
  X_2 = \frac{r_0}{\sqrt{3}\cos x_1}\left .\frac{\partial}{\partial
      x_2}\right|_{b=0}
\end{equation}
give an orthonormal frame for $\mathfrak{R}_1$.

It follows from the definition of $r_1$ as a deformation retract that
\begin{equation}
\begin{alignedat}{2}
  dr_1\left(\frac{\partial}{\partial x_1}\right) 
    &= \left .\frac{\partial}{\partial x_1}\right|_{b=0}
    &\qquad
  dr_1\left(\frac{\partial}{\partial x_2}\right) 
    &= \left .\frac{\partial}{\partial x_1}\right|_{b=0}\\[1\jot]
  dr_1\left(\frac{\partial}{\partial x_3}\right) &= 0
\end{alignedat}
\end{equation}
Therefore, the Jacobian of $r_1$ is $\frac{1-b+b^2}{1-b}$.  This
implies that $r_1$ gives a decomposition of the measure on
$\mathfrak{F}(b)$ into horizontal and vertical components if the leaf
space is taken to be $\mathbb{R}\mathrm{P}(2)$ with constant Gaussian
curvature
\begin{equation}
  K_1 = \frac{1-b}{3r_0^2(1 - b + b^2)}
\end{equation}
The volume of $\mathbb{R}\mathrm{P}(2)$ with curvature $K_1$ is
$\frac{6\pi r_0^2(1-b)}{1 - b + b^2}$ and from
equation~(\ref{latitude}) the circumference of $\mathcal{C}_1(b)$ is
$\frac{\sqrt{3}\pi r_0 b}{2\sqrt{1 - b + b^2}}$. The proposition then
follows from Fubini's theorem. 
\end{proof}

\section{Covering Spaces}
\label{Sect_CS}

Recall that the Born-Oppenheimer Hamiltonian, $B_i$, acts on $L^2$
sections of the line bundle $\eta_i$ over $N_2$.  Furthermore, the
effective potential, $V_i$, is minimised on a submanifold $Y \subset
S^4(r_0)$, which corresponds to one of the leaves $\mathfrak{F}(b)$ of
the isoparametric foliation of $S^4(r_0)$.  Molecular excitations may
be viewed as motion on the energy minimising submanifold $Y$ coupled
with oscillations normal to $Y$.  If the quadratic Jahn-Teller
coupling constant is large, then these two types of excitations will
be effectively decoupled.  In this case, eigensections of $B_i$ are
approximated by tensor products of sections in $\eta_i\vert_Y$,
the restriction of the line bundle $\eta_i$ to $Y$, with functions of
the coordinates for $N_2$ which are normal to $Y$.  These sections of
$\eta_i\vert_Y$ will be eigensections of the restriction of $B_i$ to
$Y$, denoted by $R_i\,{:}\; L^2(\eta_i\vert_Y) \longrightarrow
L^2(\eta_i\vert_Y)$.  This Schr\"odinger operator has the form
\begin{equation}
  R_i = -\triangle_i(Y) + v_i\,,
\end{equation}
where $\triangle_i(Y)$ is the Laplacian with respect to the flat
connection on $L^2(\eta_i\vert_Y)$ and the constant $v_i$ is the value
of $V_i$ on $Y$.  Physically, the spectrum of the operator $R_i$
represents the quantum energy levels of a free particle on the compact
manifold $Y$, with a state space twisteted according to the line
bundle $\eta_i$.  A particle trajectory on $Y$ corresponds to a path
of nuclear configurations within a compact submanifold of the nuclear
configuration space of the molecule.  Vibronic excitations of this
nature are called pseudorotations, because they have the character of
a generalised rotation.  We remark that it is only in the context of
the strong Jahn-Teller coupling approximation that it makes sense to
interpret a special class of vibronic excitations as pseudorotations.
If the Jahn-Teller coupling is not strong, then eigenvectors of the
Born-Oppenheimer Hamiltonian cannot be decomposed into two types of
excitations, pseudorotations and normal oscillations, which are
effectively decoupled.

To find the spectrum of $R_i$, we will need to study its pullback over
the universal covering projections of the leaves in our isoparametric
foliation of $S^4(r_0)$.  The fibre of the universal covering
$p\,{:}\; \widetilde{\mathfrak{F}(b)} \rightarrow \mathfrak{F}(b)$ is
the fundamental group, $\pi_1(\mathfrak{F}(b))$.  For $b$ equal to
either $0$ or $1$, we have that $\pi_1(\mathfrak{R}_1) =
\pi_1(\mathfrak{R}_3) = \pi_1(\mathbb{R}\mathrm{P}(2)) =
\mathbb{Z}_2$.  For $b \in (0,1)$, the fundamental group of
$\mathfrak{F}(b)$ is isomorphic to
$\pi_1(\mathbb{R}\mathrm{F}(1,1,1))$. In the following proposition, we
show that this is isomorphic to $\quat$, the 8-element group
of unit quaternions.
\begin{proposition}
The fundamental group of $\mathbb{R}\mathrm{F}(1,1,1)$ is isomorphic to
$\quat$.
\end{proposition}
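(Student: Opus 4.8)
The plan is to exhibit $\mathbb{R}\mathrm{F}(1,1,1)$ as a homogeneous space of $\SOrth(3)$ and then pull back the universal cover of $\SOrth(3)$. First, $\Orth(3)$ acts transitively on ordered triples of mutually orthogonal lines in $\mathbb{R}^3$, and the isotropy subgroup of the standard triple $(\mathbb{R}e_1,\mathbb{R}e_2,\mathbb{R}e_3)$ is the subgroup $\Delta$ of diagonal orthogonal matrices; thus $\mathbb{R}\mathrm{F}(1,1,1) \cong \Orth(3)/\Delta$, with $\Delta \cong \mathbb{Z}_2 \oplus \mathbb{Z}_2 \oplus \mathbb{Z}_2$ of order $8$. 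Since $\Delta$ meets both components of $\Orth(3)$ — for instance $\mathrm{diag}(-1,1,1)$ has determinant $-1$ — every coset of $\Delta$ has a representative in $\SOrth(3)$, so $\mathbb{R}\mathrm{F}(1,1,1) \cong \SOrth(3)/V$, where $V = \Delta \cap \SOrth(3) = \{I,\ \mathrm{diag}(1,-1,-1),\ \mathrm{diag}(-1,1,-1),\ \mathrm{diag}(-1,-1,1)\}$ is the Klein four-group whose nontrivial elements are the rotations by $\pi$ about the three coordinate axes.

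Next I would pass to the universal double cover $\rho\,{:}\;\SU(2) = S^3 \longrightarrow \SOrth(3)$ and put $\widetilde{V} = \rho^{-1}(V)$, a subgroup of $\SU(2)$ of order $8$. The composite $S^3 \longrightarrow \SOrth(3) \longrightarrow \SOrth(3)/V$ is a finite covering map — concretely it is the quotient of $S^3$ by the free action of $\widetilde{V}$ by right translations — and its total space is simply connected. Hence it is the universal covering of $\mathbb{R}\mathrm{F}(1,1,1)$, and its deck group is $\pi_1(\mathbb{R}\mathrm{F}(1,1,1))$; that is, $\pi_1(\mathbb{R}\mathrm{F}(1,1,1)) \cong \widetilde{V}$. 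Equivalently, the fibration $V \to \SOrth(3) \to \SOrth(3)/V$ gives a central extension $1 \to \mathbb{Z}_2 \to \pi_1(\mathbb{R}\mathrm{F}(1,1,1)) \to V \to 1$ whose total group is $\widetilde{V}$. It remains to identify $\widetilde{V}$.

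Finally, I would use the quaternion model $\SU(2) = \{q \in \mathbb{H} : |q| = 1\}$ acting on the imaginary quaternions $\mathrm{Im}\,\mathbb{H} \cong \mathbb{R}^3$ by conjugation. Under $\rho$ the quaternion $i$ goes to the rotation by $\pi$ about the $e_1$-axis, $j$ and $k$ to the rotations by $\pi$ about the $e_2$- and $e_3$-axes, and $\pm 1$ to the identity; therefore $\widetilde{V} = \{\pm 1,\pm i,\pm j,\pm k\}$, which is exactly $\quat$. Alternatively one argues structurally: $\widetilde{V}$ has order $8$; its unique element of order $2$ is the nontrivial element $-I$ of $\ker\rho$, since $\SU(2)$ contains a single involution; and $\widetilde{V}$ is non-abelian, because the lifts of the commuting rotations by $\pi$ about two perpendicular axes anticommute in $\SU(2)$ — and $\quat$ is the only group of order $8$ enjoying these two properties.

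The one genuinely delicate point — the main obstacle — is precisely this last identification of $\widetilde{V}$: one must rule out the abelian groups of order $8$ and the dihedral group, all of which also arise as extensions of the Klein four-group by $\mathbb{Z}_2$. The explicit quaternion computation settles it directly, and the structural argument (uniqueness of the involution together with non-commutativity of $\widetilde{V}$) settles it conceptually; everything preceding it is routine homogeneous-space and covering-space bookkeeping.
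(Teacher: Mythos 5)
Your proof is correct, but it follows a genuinely different route from the paper's. The paper never exhibits the universal cover explicitly at this stage: it computes $H_1(\mathbb{R}\mathrm{F}(1,1,1);\mathbb{Z})\cong\mathbb{Z}_2\oplus\mathbb{Z}_2$ by Mayer--Vietoris from the decomposition $\mathfrak{W}_2=\mathfrak{W}_1\cap\mathfrak{W}_3$, gets $|\pi_1|=8$ from the homotopy exact sequence of the bundle $\Orth(1)\times\Orth(1)\times\Orth(1)\hookrightarrow\Orth(3)\to\mathbb{R}\mathrm{F}(1,1,1)$, deduces via Hurewicz that the commutator subgroup is $\mathbb{Z}_2$ (so the group is nonabelian), and then excludes the dihedral group $\mathcal{D}_8$ by 3-manifold theory: the universal cover of a closed 3-manifold with finite fundamental group is a homotopy 3-sphere, whence every involution in $\pi_1$ is central, while $\mathcal{D}_8$ has a non-central involution. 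You instead present $\mathbb{R}\mathrm{F}(1,1,1)\cong\SOrth(3)/V$ with $V$ the Klein four-group of $\pi$-rotations about the coordinate axes, lift along the double cover $\SU(2)\to\SOrth(3)$, and identify $\pi_1$ with the deck group $\widetilde V=\rho^{-1}(V)=\{\pm 1,\pm\mathbf{i},\pm\mathbf{j},\pm\mathbf{k}\}=\quat$, either by the explicit quaternion computation or by the unique-involution-plus-noncommutativity argument. Your approach is more elementary (no appeal to Hempel/Milnor-type results on 3-manifold groups) and more constructive: it produces the universal covering $S^3\to\mathbb{R}\mathrm{F}(1,1,1)$ together with the concrete $\quat$-action by quaternion translations, which is precisely the structure the paper must set up separately later (Sections~4 and~6) when it represents $\widetilde Y$ as $S^3\subset\mathbb{H}$ with $\quat$ acting by left multiplication. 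The paper's softer argument, by contrast, needs no homogeneous-space model and generalizes to situations where one only knows homological data, at the cost of invoking nontrivial covering-space and 3-manifold theorems to separate $\quat$ from $\mathcal{D}_8$.
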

\begin{proof}
From Figure~\ref{fig2}, we have that $\mathfrak{W}_2 = \mathfrak{W}_1
\cap \mathfrak{W}_3$, where the homotopy types of these spaces are
$\mathfrak{W}_1 \simeq \mathbb{R}\mathrm{P}(2)$, $\mathfrak{W}_3
\simeq \mathbb{R}\mathrm{P}(2)$, and $\mathfrak{W}_2 \simeq
\mathbb{R}\mathrm{F}(1,1,1)$,  The Mayer-Vietoris homology exact sequence for
$\mathfrak{W}_2 = \mathfrak{W}_1 \cap \mathfrak{W}_3$ gives
$H_1(\mathbb{R}\mathrm{F}(1,1,1);\mathbb{Z}) \cong \mathbb{Z}_2 \oplus \mathbb{Z}_2$.
Furthermore, the homotopy exact sequence for the fibre bundle
\begin{equation}
\parbox{\textwidth}{%
     \xymatrix{%
     \Orth(1) \times \Orth(1) \times \Orth(1)\;\ar @{^{(}->} [r] &
       \Orth(3) \ar[d] \\
     & \mathbb{R}\mathrm{F}(1,1,1)}
}
\end{equation}
implies that $\pi_1(\mathbb{R}\mathrm{F}(1,1,1))$ is an 8-element group.  In
addition, the Hurewicz theorem implies that the
commutator subgroup of $\pi_1(\mathbb{R}\mathrm{F}(1,1,1))$ is $\mathbb{Z}_2$.

Up to isomorphism, there are only two nonabelian groups with 8
elements: the quaternion group $\quat$ and the dihedral
group $\mathcal{D}_8$.  Of these two groups, only $\quat$ can
be the fundamental group of a closed 3-manifold, $\mathcal{M}$.  This is
because the universal covering space of any closed 3-manifold with
finite fundamental group has the homotopy type of $S^3$
\cite[Thm.~3.6]{H76}.  From this it follows that any element of order
two in $\pi_1(\mathcal{M})$ must belong to the centre of $\pi_1(\mathcal{M})$
\cite[Cor.~1]{M57}.  However, $\mathcal{D}_8$ contains an element of
order two which is not in its centre.
\end{proof}

Over $p$ we have the following pullback square,
\begin{equation}
\parbox{\textwidth}{%
  \xymatrix @H=2.7ex @W=3em {%
     & \mathbb{R} \ar @{^(->} [d] & \mathbb{R} \ar @{^(->}[d] \\
     \pi_1(\mathfrak{F}(b))\, \ar @{^(->}[r]  & 
       \widetilde{\mathfrak{X}_i}(b) \ar[r]_{\tilde p} \ar[d]_{\wp_i} & 
       \mathfrak{X}_i(b) \ar[d]^{\wp_i}\\
     \pi_1(\mathfrak{F}(b))\, \ar @{^(->}[r] & \widetilde{\mathfrak{F}}(b) 
       \ar[r]_{p} & \mathfrak{F}(b) \\
  }
}
  \label{pbsquare}
\end{equation}
for each of the real line bundles $\varrho_i(b) =
\eta_i\vert_{\mathfrak{F}(b)}$.  Note that $\varrho_1(b)$ is only
well-defined for $0 \le b < 1,\,\,$ $\varrho_2(b)$ is only well defined
for $0 < b < 1$, and $\varrho_3(b)$ is only well defined for $0 < b
\le 1$. The map $\widetilde \wp_i$ is the projection of the real line
bundle $\widetilde{\varrho_i}$, which is the pullback of $\varrho_i$
over $p$.  The total space of $\widetilde \varrho_i$ is defined as
\begin{equation}
  \widetilde{\mathfrak{X}_i}(b) = \left\{(y,x) \in
  \widetilde{\mathfrak{F}}(b) \times \mathfrak{X}_i(b) \mid p(y) =
  \wp_i(x)\right\}
\end{equation}
and the projection $\widetilde{\wp_i}$ is defined by
$\widetilde{\wp_i}\,{:}\; (y,x) \longmapsto y$.  Likewise, the map
$\widetilde{p}$, which is the pullback over $\wp_i$ of $p$, is defined
by $\widetilde p\,{:}\; (y,x) \longmapsto x$.  Note that the diagram
(\ref{pbsquare}) is a commutative diagram in that $\wp_i \circ
\widetilde{p} = p \circ \widetilde{\wp_i}$.

The universal covering space $\widetilde Y$ has a Riemannian metric
$\widetilde{g}$, which is a lifting of the Riemannian metric
$g$ on $Y$.  Each of the Born-Oppenheimer Hamiltonians
$R_i$,\linebreak[4] for~\mbox{$i = 1,2,3$}, can be lifted to a Hamiltonian
\begin{equation}
  \widetilde R_i = -\widetilde\triangle_i + v_i \,,
\end{equation}
where $\widetilde\triangle_i$ is a Laplacian with respect to the
Riemannian metric $\widetilde{g}$.  This operator acts on
$L^2(\widetilde \eta_i)$, the Hilbert space of $L^2$ sections of
$\widetilde \eta_i$.  Note that an $L^2$ section of $\widetilde
\eta_i$ is a square integrable function $\sigma\,{:}\; \widetilde Y
\longrightarrow \widetilde X_i$ such that the composition $\widetilde
\wp_i \circ \sigma$ is the identity almost everywhere on $\widetilde
Y$.  This implies that $\sigma$ must be of the form
\begin{equation}
  \sigma\,{:}\; w \longmapsto \left( w \,,\, \widehat\sigma (w)\right)\, ,
\end{equation}
where $\widehat\sigma = {\widetilde p} \circ \sigma$ is a square
integrable function from $\widetilde Y$ to $X_i$, satisfying $\wp_i
\circ \widehat\sigma = p$.

Now consider a section $\phi \in L^2 (\eta_i)$. This is a
square integrable function $\phi\,{:}\; Y \rightarrow X_i$ such that
$\wp_i \circ \phi$ is the identity almost everywhere on $N_2$.  From
$\phi$, we construct a function $\widehat\phi\,{:}\; \widetilde Y
\rightarrow X_i$ by defining $\widehat\phi = \phi \,\circ\, p$.  The
pullback of $\phi$ is then $\widetilde\phi\,{:}\; \widetilde Y
\rightarrow \widetilde X_i$, defined by $\widetilde \phi (w) = (w\, ,
\, \widehat\phi (w))$.  This construction of pulling back a section
from $\eta_i$ to $\widetilde\eta_i$ constitutes a one-to-one linear
map $\Upsilon$ of $L^2 (\eta_i)$ into $L^2 (\widetilde\eta_i)$,
because $p$ is a finite-to-one covering projection.

Each element $g \in \pi_1 (Y)$ defines a deck
transformation\footnote{A deck transformation $D$ of the universal
covering space projection $p$ is a homeomorphism $D \,{:}\;
\widetilde Y \rightarrow \widetilde Y$ satisfying $p \circ D =
p$.  For more details on deck transformations, see Section~III.6
of~\cite{B93}.} $D(g)\,{:}\; \widetilde Y \rightarrow
\widetilde Y$, because the group $\mathcal{D}$ of all deck
transformations on $\widetilde Y$ is canonically isomorphic to
$\pi_1 (Y)$.  Each deck transformation $D_g \in \mathcal{D}$ induces a
linear operator $\Lambda(g)$ on $L^2 (\widetilde \eta_i)$, defined by
mapping the section
\begin{displaymath}
  \sigma\,{:}\; w \longmapsto \left(w \,,\, \widehat\sigma (w) \right)
\end{displaymath}
to the section
\begin{equation}
  \Lambda(g)\, \sigma \,{:}\;\, w \longmapsto 
           \left( w \,,\, \widehat\sigma (D^{-1}_g(w)) \right) .
\end{equation}
To compute the spectrum of $R_i$, we will use the following lemma.
\begin{lemma}
\label{cover}
Suppose that $p\,{:}\, \widetilde M \rightarrow M$ is a cover map and
that $\eta$ is a vector bundle over $M$.  Let $\widetilde \eta$ denote
the pullback of $\eta$ over $p$.  If $\phi \in L^2(\eta)$ is an
eigenvector for a differential operator $R \,{:}\, L^2(\eta)
\rightarrow L^2(\eta)$ with eiganvalue $\lambda$, then the pullback
section $\widetilde \phi = \Upsilon(\phi)$ is an eigenvector of the
pullback operator $\widetilde R$ with eigenvalue $\lambda$.
Conversely, if $\widetilde \phi \in L^2(\widetilde \eta)$ is an
eigenvector of $\widetilde R$ with eigenvalue $\lambda$ and
$\widetilde \phi$ is in the image of $\Upsilon$, then $\widetilde
\phi$ is the pullback of an eigenvector of $R$, with eigenvalue
$\lambda$. 
\end{lemma}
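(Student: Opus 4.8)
The plan is to establish the two directions by unwinding the definitions of the pullback operator and the map $\Upsilon$, reducing everything to the fact that a covering map is a local isometry. First I would fix the local picture: around a point $x \in M$, choose an evenly covered neighbourhood $U$ over which $\eta$ trivialises, so that $p^{-1}(U)$ is a disjoint union of sheets each mapped isometrically onto $U$, and $\widetilde\eta$ over each sheet is canonically identified with $\eta\vert_U$. Since $\widetilde R = -\widetilde\triangle + v$ (or more generally has the same leading symbol given by the lifted metric $\widetilde g$, with the connection on $\widetilde\eta$ pulled back from that on $\eta$), the operator $\widetilde R$ acts on each sheet exactly as $R$ acts on $U$ under this identification. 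Concretely, if $\widehat\sigma = \widetilde p \circ \sigma$ is the function representing a section $\sigma \in L^2(\widetilde\eta)$, then $(\widetilde R\,\sigma)\,\widehat{\ } = R_{\mathrm{loc}}(\widehat\sigma)$ computed sheet-by-sheet, where $R_{\mathrm{loc}}$ is the local expression of $R$. This is the one genuinely substantive point, and I would state it as the assertion that $\Upsilon$ intertwines $R$ and $\widetilde R$, i.e. $\widetilde R \circ \Upsilon = \Upsilon \circ R$.

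Granting the intertwining relation, the forward direction is immediate: if $R\phi = \lambda\phi$ then $\widetilde R\,\Upsilon(\phi) = \Upsilon(R\phi) = \lambda\,\Upsilon(\phi)$, and $\Upsilon(\phi) = \widetilde\phi$ is a genuine $L^2$ section because $p$ is a finite-to-one covering, so $\|\widetilde\phi\|^2_{L^2(\widetilde\eta)} = (\deg p)\,\|\phi\|^2_{L^2(\eta)} < \infty$; in particular $\widetilde\phi \neq 0$ when $\phi \neq 0$, so it is a bona fide eigenvector. For the converse, suppose $\widetilde\phi \in L^2(\widetilde\eta)$ satisfies $\widetilde R\,\widetilde\phi = \lambda\,\widetilde\phi$ and $\widetilde\phi = \Upsilon(\psi)$ for some $\psi \in L^2(\eta)$. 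Then $\Upsilon(R\psi) = \widetilde R\,\Upsilon(\psi) = \widetilde R\,\widetilde\phi = \lambda\,\widetilde\phi = \lambda\,\Upsilon(\psi) = \Upsilon(\lambda\psi)$, and since $\Upsilon$ is injective this forces $R\psi = \lambda\psi$; thus $\psi$ is an eigenvector of $R$ with eigenvalue $\lambda$ and $\widetilde\phi$ is its pullback, as claimed.

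The main obstacle is purely the bookkeeping in the first paragraph — verifying that the pullback construction $\Upsilon$ really does intertwine the two operators, rather than, say, introducing lower-order terms. I would handle this by noting that the pullback of the Levi-Civita connection of $g$ under the local isometry $p$ is the Levi-Civita connection of $\widetilde g$, and likewise the connection on $\widetilde\eta$ is by construction $p^*\nabla$; hence the connection Laplacian of Definition~1 (or the operator $\triangle^\nabla$ figuring in $B_i$) commutes with $p^*$, and the zeroth-order term $v$ is a constant (in the cases of interest $v = v_i$), so it pulls back to the same constant. There is a small domain subtlety — one must check that $\Upsilon$ maps the Sobolev domain of $R$ into that of $\widetilde R$ — but this follows from the same local-isometry observation applied to difference quotients, so I would mention it in a sentence rather than belabour it.
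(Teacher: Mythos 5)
Your proposal is correct and follows essentially the same route as the paper: both proofs reduce to the intertwining relation $\widetilde R \circ \Upsilon = \Upsilon \circ R$ together with injectivity of $\Upsilon$, the paper simply asserting the intertwining as part of the definition of the pullback operator where you justify it by the local-isometry/sheet-by-sheet argument. Your extra remarks on the $L^2$-norm scaling and the Sobolev domain are harmless refinements of the same argument, not a different approach.
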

We remark that this result for the special case of Laplacians acting
on functions is given in~\mbox{\cite[Section~III.A.II]{BGM74}}
and~\mbox{\cite[pp.~27--28]{C84}}.
\begin{proof}
Denote the projection map and total space of $\eta$ by $q$ and $E$,
respectively.  Then, the total space of $\widetilde\eta$ is
\begin{equation}
\left\{(x,y) \in \widetilde M \times E \mid p(x) = q(y)\right\}
\end{equation}
and the projection map is $\widetilde q \,{:}\, (x,y) \rightarrow x$.

Suppose that $\phi \in L^2(\eta)$ satisfies $R\phi = \lambda\phi$.
Then the pullback section is $\widetilde\phi(x) = (x,\phi\circ
p(x))$.  It follows from the definition of $\widetilde R$ that
\begin{equation}
  \widetilde R \, \widetilde \phi = (x \,,\, R\,\phi \circ p) 
  = (x \,,\, \lambda \phi \circ p) = \lambda \widetilde\phi\,.
\end{equation}

Now consider $\widetilde\phi \in \mathop{\rm Image}(\Upsilon)$
satisfies $\widetilde R \, \widetilde\phi = \lambda\widetilde\phi$.
There exists a unique $\phi \in L^2(\eta)$ such that $\Upsilon(\phi) =
\widetilde\phi$.  From the construction of $\widetilde R$, we have
\begin{equation}
  \widetilde R \, \widetilde\phi = \left(x\,,\,R\,\phi(p(x))\right)
  \quad\text{and}\quad
  \widetilde R \, \widetilde\phi = \left( x\,,\,
  \lambda\phi(p(x))\right)\,.
\end{equation}
Therefore, $R\,\phi = \lambda\phi$, because $p$ is a covering. 
\end{proof}

Note that the image of the map $\Upsilon \,{:}\; L^2(\eta_i)
\rightarrow L^2(\widetilde\eta_i)$ is the fixed point set of the group
action
\begin{equation}
   \Lambda :\,  \pi_1 (Y)  \times  L^2(\widetilde\eta_i)
   \longrightarrow L^2(\widetilde\eta_i) \,.
   \label{action}
\end{equation}
Therefore, we may use the action $\Lambda$ of the group $\pi_1(Y)$ on
$L^2(\widetilde\eta_i)$ and Lemma~\ref{cover} to obtain the spectrum
of $B_i$ from the spectrum of $\widetilde B_i$.

The real line bundle $\widetilde\eta_1$ is isomorphic to the trivial
line bundle, because $\widetilde Y$ is simply connected.  Therefore,
it has a smooth globally-defined normalised section.  In fact, there
exist exactly two normalised sections and we chose one to denote by
$\tau$.  Of course, the other section will then be $-\tau$ and the set
of normalised sections is $\Gamma(\widetilde\eta_i) =
\left\{\tau,-\tau\right\}$.  Nontriviality of $\eta_i$ implies that
the action of $\Lambda$ on $\Gamma(\widetilde\eta_i)$ must be
transitive. Otherwise, $\tau$ would be invariant under $\Lambda$ and
could be pushed down to give a normalised section of $\eta_i$, which
is contrary $\eta_i$ being nontrivial. Thus, the action $\Lambda$ is
\begin{equation}
  \Lambda_\mathbf{1}\,\tau = \tau \qquad
  \Lambda_\mathbf{-1}\,\tau = -\tau\,.
  \label{normtrans}
\end{equation}

Using the section $\tau$, we are able to represent $L^2$ sections of
$\widetilde\eta_i$ as elements of $L^2(\widetilde Y;\,\mathbb{R})$, the
real Hilbert space of real-valued square-integrable functions on
$\widetilde Y$.  A section $\sigma\,{:}\; \widetilde Y \rightarrow
\widetilde X_1$ is represented by the function
\begin{equation}
  f \,{:}\; \widetilde Y \longrightarrow \mathbb{R} \,,\quad 
     \mbox{defined by}\quad f(w) = \langle \tau (w) \, , \, 
     \sigma (w)\rangle \,, \label{def1}
\end{equation}
where $\langle \; \cdot \;,\; \cdot \; \rangle$ is the fibre-wise
inner product on $\widetilde\eta_i$.  This correspondence is a Hilbert
space isomorphism from $L^2(\widetilde\eta_1)$ to $L^2(\widetilde
Y;\,\mathbb{R})$.  It follows from~(\ref{normtrans}) and~(\ref{def1})
that $\sigma$ is a $\Lambda$-invariant section of $\widetilde\eta_i$
if and only if the function $f$ corresponding to $\sigma$ satisfies
\begin{equation}
  \Lambda_\mathbf{1}\,f = f \qquad \Lambda_\mathbf{-1} = -f\,.
\end{equation}

Now consider the case when $b_0 \in (0,1)$.  Before computing the
action $\Lambda$ of $\pi_1(Y) \cong \quat$ on each of the
Hilbert spaces $L^2(\widetilde \eta_i)$, some basic facts about
$\quat$ will be reviewed.  The nonabelian group
$\quat$ has an order two commutator subgroup.  We label the
unit in $\quat$ by \mbox{\boldmath $1$} and the nontrivial
element in the commutator subgroup by \mbox{\boldmath $-1$}.
Following convention, the other six elements of $\quat$ are
denoted by \mbox{\boldmath $\pm i$}, \mbox{\boldmath $\pm j$}, and
\mbox{\boldmath $\pm k$}.  The nontrivial proper subgroups of
$\quat$ are the commutator subgroup $\{\mbox{\boldmath $\pm
1$}\}$ and the three abelian subgroups of order four,
$\{\mbox{\boldmath $\pm 1$}\,,\,\mbox{\boldmath $\pm i$}\}$,
$\{\mbox{\boldmath $\pm 1$}\,,\,\mbox{\boldmath $\pm j$}\}$, and
$\{\mbox{\boldmath $\pm 1$}\,,\,\mbox{\boldmath $\pm k$}\}$.  Each of
the latter three subgroups is isomorphic to the cyclic group
$\mathbb{Z}_4$, because each contains an element of order four.

Note that each of the three real line bundles $\widetilde\eta_i$ is
isomorphic to the trivial real line bundle over $\widetilde Y$,
because $\widetilde Y$ is simply connected.  Therefore, there exists a
smooth globally-defined normalised section for $\widetilde\eta_i$.  As
before, there are exactly two such sections, related by multiplication
by $-1$ and the group $\pi_1(Y)$ acts on $\Gamma(\widetilde \eta_i)$,
the set of normalised sections.  Nontriviality of $\eta_i$
implies that $\Lambda$ must be a transitive group action of $\pi_1(Y)$
on the set $\Gamma(\widetilde \eta_i)$.  Therefore, the isotropy
subgroup of this action must be one of the three order four subgroups
of $\quat$.  Each of these isotropy subgroups corresponds to
one of the three line bundles $\eta_i$, for $i = 1,2,3$

The precise correspondence between the line bundle $\eta_i$ and the
order four subgroups of $\pi_1(Y)$ can be determined from the the
classification theorem for covering
projections~\cite[Section~2.5]{S66}.  This theorem states that there is
a one-to-one correspondence between covering projections from a
connected covering space to $Y$ and subgroups of $\pi_1(Y)$, where two
subgroups are considered as equivalent if they are conjugate to each
other.  Therefore, $Y$ has exactly three 2\nolinebreak\mbox{-}fold
covering projections, corresponding to the three inequivalent order
four subgroups of $\pi_1(Y)$.  Now consider that associated with the
real line bundles $\eta_i$ are three principal ${\bf O}(1)$ bundles,
$P(\eta_i)$, for $i = 1,2,3$.  These principal bundles are
inequivalent 2\nolinebreak\mbox{-}fold coverings of $Y$, and
therefore, they must correspond to the three order four subgroups of
$\pi_1(Y)$.  This correspondence identifies the fundamental group of
the total space $P(X_i)$ of $P(\eta_i)$ with one of the order four
subgroups of $\pi_1(Y)$.

It remains to verify that this identification of the line bundles
$\eta_i$ with the order four subgroups of $\pi_1(Y)$ is reflected in
the isotropy subgroup of the action $\Lambda$ on $\Gamma(\widetilde
\eta_i)$.  We denote the pullback of each of the line bundles $\eta_i$
over the projection $P(\wp_i)$ of its associated principal bundle
$P(\eta_i)$ by $\eta^\prime_i$.  As remarked above, each
$\eta^\prime_i$ is isomorphic to the trivial real line bundle over
$P(X_i)$.  In the same vein as the action $\Lambda$ constructed above,
there is an action of $\Orth(1)$ on $\Gamma(\eta^\prime_i)$, the two
element set of normalised sections in $\eta^\prime_i$.  Furthermore,
this action must be transitive because $\eta_i$ is a nontrivial line
bundle.  From each of the 2\nolinebreak\mbox{-}fold covering spaces
$P(X_i)$, we obtain the universal covering space $Y$ by taking a
further 4\nolinebreak\mbox{-}fold covering with projection $c_i$.  The
fibre of this 4\nolinebreak\mbox{-}fold covering is the fundamental
group of $P(X_i)$.  The three distinct ways of obtaining the universal
covering projection $p\,{:}\;\widetilde Y \longrightarrow Y$ as
the composition of a 4\nolinebreak\mbox{-}fold covering projection and
a 2\nolinebreak\mbox{-}fold covering projection are shown in the
diagram below.
\begin{equation}
\parbox{\textwidth}{%
  \xymatrix{%
     & \widetilde Y \ar[dl]_(0.6){c_1} \ar[d]^(0.47){c_2} 
        \ar[dr]^(0.6){c_3} & \\
     P(X_1) \ar[dr]_(0.33){P(\wp_1)} & P(X_2) \ar[d]^{P(\wp_2)}
       & P(X_3) \ar[dl]^(0.33){P(\wp_3)} \\
     & Y &
     }
}\label{3cover}
\end{equation}
Note that the pullback of $\eta^\prime_i$ over $c_i$ is the pullback
of a trivial line bundle.  Therefore, sections in $\Gamma(\widetilde
\eta_i)$ are invariant under the action $\Lambda$ induced by the deck
transformations of this covering projection.  Thus, the deck
transformations corresponding to the order four subgroup of
$\pi_1(Y)$ which is associated with $\eta_i$ is the isotropy
subgroup of the action $\Lambda$ on $\Gamma(\widetilde \eta_i)$.

It is simply a matter of labelling convention how we chose to
designate the isotropy subgroups of $\Lambda$ as order four subgroups
in $\quat$.  We shall denote the isotropy subgroup of $\Lambda$
on $\Gamma(\widetilde \eta_1)$ by $\{\mbox{\boldmath $\pm
1$}\,,\,\mbox{\boldmath $\pm i$}\}$, the isotropy subgroup of
$\Lambda$ on $\Gamma(\widetilde \eta_2)$ by $\{\mbox{\boldmath $\pm
1$}\,,\,\mbox{\boldmath $\pm j$}\}$ and the isotropy subgroup of
$\Lambda$ on $\Gamma(\widetilde \eta_3)$, and $\{\mbox{\boldmath $\pm
1$}\,,\,\mbox{\boldmath $\pm k$}\}$.  Therefore, any normalised
section $\tau$ in $\Gamma (\widetilde \eta_1)$ tranforms according to
\begin{equation}
  \begin{alignedat}{2}
     \Lambda_{\pm\mathbf{1}}\,\tau &= \tau 
       \qquad & \qquad \Lambda_{\pm\mathbf{i}} \,\tau &= \tau \\
     \Lambda_{\pm\mathbf{j}}\,\tau &= -\tau 
       \qquad & \qquad \Lambda_{\pm\mathbf{k}}\,\tau &= -\tau
  \end{alignedat} \label{deck_trans}
\end{equation}
There are similar transformation equations for sections in $\Gamma
(\widetilde \eta_2)$ and $\Gamma (\widetilde \eta_3)$.

Given a choice of normalised section $\tau \in \Gamma(\widetilde
\eta_i)$, $L^2$ sections of $\widetilde \eta_i$ are represented as
functions by equation~(\ref{def1}).

It follows from (\ref{deck_trans}) that $\sigma$ is a
section in $\widetilde \eta_1$ satifying $\Lambda_g \sigma = \sigma$
for all $g \in \pi_1(N_2)$ if and only if the associated function $f$
satisfies
\begin{subequations}
\begin{gather}
  \begin{alignedat}{2}
     \Lambda_{\pm\mathbf{1}}\, f &= f 
        \qquad & \qquad \Lambda_{\pm\mathbf{i}} \, f &= f \\
     \Lambda_{\pm\mathbf{j}}\, f & = - f 
        \qquad & \qquad \Lambda_{\pm\mathbf{k}} \, f &= - f
  \label{transprop-a}
  \end{alignedat}
\intertext{Similary, a section $\sigma$ in $\widetilde \eta_2$ or
$\widetilde \eta_3$ satisfies $\Lambda_g\sigma = \sigma$ for all $g
\in \pi_1(N_2)$ if and only if the associated function satisfies}
  \begin{alignedat}{2}
     \Lambda_{\pm\mathbf{1}}\, f &= f 
        \qquad & \qquad \Lambda_{\pm\mathbf{i}} \, f &= -f \\
     \Lambda_{\pm\mathbf{j}}\, f & = f 
        \qquad & \qquad \Lambda_{\pm\mathbf{k}} \, f &= - f
  \label{transprop-b}
  \end{alignedat}
\intertext{and}
  \begin{alignedat}{2}
     \Lambda_{\pm\mathbf{1}}\, f &= f 
        \qquad & \qquad \Lambda_{\pm\mathbf{i}} \, f &= -f \\
     \Lambda_{\pm\mathbf{j}}\, f & = - f 
        \qquad & \qquad \Lambda_{\pm\mathbf{k}} \, f &= f
  \label{transprop-c}
  \end{alignedat}
\end{gather}
\label{transprop}
\end{subequations}
respectively.

Under the Hilbert space isomorphism (\ref{def1}), the Schr\"odinger
operators $\widetilde B_i$, for $i = 1,2,3$, correspond to the
operators
\begin{equation}
  \widehat B_i = -\widehat\triangle_2 + \widehat V_i \,,
\end{equation}
acting on the function space $L^2(\widetilde N_2;\,\mathbb{R})$.  The
Laplacian $\widehat\triangle_2$ is defined with respect to the
Riemannian metric $\widetilde{g}$.  Viewed as a function on
$N_2$, the potential function $\widehat V_i$ is the same as
$\widetilde V_i$; however, they should be distinguished as operators
because $\widetilde V_i$ acts on $L^2(\widetilde \eta_i)$, whereas
$\widehat V_i$ acts on $L^2(N_2;\,\mathbb{R})$.  The spectrum of the
Born-Oppenheimer Hamiltonian $B_i$ will be obtained by finding all
eigenfunctions of $\widehat B_i$ which have the transformation
properties in (\ref{transprop}) for $H_1$, or corresponding
transformation properties for $H_2$ and $H_3$.

Recall that the potential $V_i$ is minimised on the submanifold $Y =
\mathfrak{F}(b_0)$.  Therefore, the potential $\widehat V_i$
is minimised on the covering space $\widetilde Y = p^{-1}(Y)$.  The
submanifold $\widetilde Y \subset \widetilde N_2$ is diffeomorphic to
the 3\nolinebreak\mbox{-}dimensional sphere $S^3$.  The Riemannian
metric on $\widetilde Y$ is $\widetilde{g}$, which has the
same local curvature as $Y$ with Riemannian metric $g$,
because $\widetilde Y$ is a covering space of $Y$.

\section{Spectrum on the Projective Spaces}

In this section we calculate the spectrum of the Laplacian with
respect to the flat connection on the line bundles $\eta_1$ and
$\eta_3$, restricted to $\mathfrak{R}_1$ and $\mathfrak{R}_3$,
respectively.  The fundamental group $\pi_1(Y)$ is now $\mathbb{Z}_2$
and the pullback square~(\ref{pbsquare}) becomes
\begin{equation}
\parbox{\textwidth}{%
  \xymatrix @H=2.7ex @W=3em {%
     & \mathbb{R} \ar @{^(->} [d] & \mathbb{R} \ar @{^(->}[d] \\
     \mathbb{Z}_2 \ar @{^(->}[r]  & 
       \widetilde X_i \ar[r]_{\tilde p_i} \ar[d]_{\wp_i} & 
       X_i \ar[d]^{\wp_i}\\
     \mathbb{Z}_2 \ar @{^(->}[r] & \widetilde{\mathfrak{R}_i} 
       \ar[r]_{p_i} & \mathfrak{R}_i \\
  }
}
\label{square1}
\end{equation}
where $i=1,3$.  The line bundles $\wp_i$ are isomorphic to the
canonical line bundle over $\mathbb{R}{\rm P}(2)$.  Its pullback
$\widetilde\eta_1$ is easily seen to be isomorphic to the trivial line
bundle, because $\widetilde N_1$ is simply connected.  However, it
will be useful to note that the bundle $p_1$ is isomorphic to the
associated principal bundle for the real line bundle $\wp_1$.  It is
generally true that the pullback of a vector bundle over its
associated principal bundle is isomorphic to the trivial bundle.

We will refer to the trivial element in $\pi_1(Y)=\mathbb{Z}_2$ by
$\mathbf{0}$ and the nontrivial element by~$\mathbf{1}$.  The fact
that the action $\Lambda$ defined in (\ref{action}) is a group action
implies that $\Lambda_\mathbf{0}\,\tau = \tau$, and that
$\Lambda_\mathbf{1}\,\tau$ is either $\tau$ or $-\tau$.  Observe that
if $\Lambda_\mathbf{1}\,\tau$ were equal to $\tau$, then $\tau$ would
be a normalised section in $\widetilde\eta_1$, which could be obtained
as the pullback of a normalised section in $\eta_1$.  However, this is
impossible because $\eta_1$ is a nontrivial real line bundle.  In
summary, nontriviality of $\eta_1$ implies that $\Lambda$ must be a
transitive group action of $\mathbb{Z}_2$ on the set $\{\tau,
-\tau\}$.  Therefore,
\begin{equation}
  \Lambda_\mathbf{0}\,\tau = \tau \qquad\mbox{and}\qquad
  \Lambda_\mathbf{1}\,\tau = -\tau \, . 
  \label{ns} 
\end{equation}
It follows from (\ref{def1}) and (\ref{ns}) that $\psi$ is a section
of $\widetilde\eta_1$ satisfying $\Lambda_g\,\psi =
\psi$ for all $g \in \mathbb{Z}_2$, if and only if
\begin{equation}
  \Lambda_\mathbf{0} \, f = f 
  \qquad\mbox{and}
  \qquad \Lambda_\mathbf{1}\, f = -f \, . 
  \label{lemmaf}
\end{equation}
where the function $f$ is associated to $\psi$ by (\ref{def1}) and the
action $\Lambda$ on functions is defined by $\Lambda\,f(w) =
f(D^{-1}_g(w))$.

\begin{theorem}
\label{thmR1}
The eigenvalues of $-\triangle_1(\mathfrak{R}_1)$ are
\begin{subequations}
\begin{gather}
  \lambda_n = \frac{n(n+1)}{3r_0^2}\,,\quad\text{for $n=1,3,5,\dots$}\,.
\intertext{The multiplicities of these eigenvalues are}
  \mathop{\rm mult}(\lambda_n) = 2n+1\,.
\end{gather}
\end{subequations}
The spectrum of $-\triangle_3(\mathfrak{R}_3)$ is the same as the spectrum of
$-\triangle_1(\mathfrak{R}_1)$.
\end{theorem}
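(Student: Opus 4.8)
The plan is to reduce the computation to the spectrum of the scalar Laplacian on a round $2$-sphere, using Lemma~\ref{cover} together with a parity argument for spherical harmonics.

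First I would recall the geometry established in Section~\ref{Sect_Geom}: $\mathfrak{R}_1$ is isometric to $\mathbb{R}\mathrm{P}(2)$ carrying the metric of constant Gaussian curvature $(3r_0^{\>2})^{-1}$, and its universal (double) cover is the round sphere $S^2(\sqrt{3}\,r_0)$ of the same curvature, the nontrivial deck transformation $D_{\mathbf{1}}$ being the antipodal map $w\mapsto -w$. The restriction $\eta_1\vert_{\mathfrak{R}_1}$ is the tautological line bundle $\tau$ over $\mathbb{R}\mathrm{P}(2)$, which is nontrivial, and it carries the flat connection. Its pullback $\widetilde\eta_1$ over $p_1$ is trivial since $S^2$ is simply connected, so a choice of normalised flat section identifies $L^2(\widetilde\eta_1)$ with $L^2(S^2(\sqrt{3}\,r_0);\,\mathbb{R})$ via (\ref{def1}); because the pullback of a flat connection read in a global trivialisation is just the ordinary exterior derivative, the lifted operator $-\widetilde\triangle_1$ goes over to the scalar Laplacian $-\widehat\triangle$ on $S^2(\sqrt{3}\,r_0)$.

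Next I would invoke the classical fact that the scalar Laplacian on $S^2(\rho)$ has eigenvalues $n(n+1)/\rho^2$, for $n=0,1,2,\dots$, with eigenspace the space $\mathcal{H}_n$ of degree-$n$ spherical harmonics and $\dim\mathcal{H}_n = 2n+1$; with $\rho=\sqrt{3}\,r_0$ this gives eigenvalues $n(n+1)/(3r_0^{\>2})$. By Lemma~\ref{cover}, the spectrum of $-\triangle_1(\mathfrak{R}_1)$ consists of exactly those eigenvalues whose eigenfunctions $f$ on $S^2(\sqrt{3}\,r_0)$ are invariant in the twisted sense, that is, by (\ref{ns}) and (\ref{lemmaf}), those satisfying $\Lambda_{\mathbf{1}}f = -f$. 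Since $D_{\mathbf{1}}$ is the antipodal map, this condition is $f(-w) = -f(w)$, i.e.\ $f$ is odd. A degree-$n$ spherical harmonic satisfies $Y_n(-w) = (-1)^n Y_n(w)$, so $\mathcal{H}_n$ consists of odd functions precisely when $n$ is odd, in which case the whole $(2n+1)$-dimensional eigenspace survives the projection, and nothing survives when $n$ is even. Hence the eigenvalues of $-\triangle_1(\mathfrak{R}_1)$ are $\lambda_n = n(n+1)/(3r_0^{\>2})$ for $n = 1,3,5,\dots$, each of multiplicity $2n+1$. For $\mathfrak{R}_3$ one notes that Section~\ref{Sect_Geom} records an isometry $\mathfrak{R}_3 \cong \mathfrak{R}_1$ (the substitution $b\mapsto 1-b$) under which $\eta_3\vert_{\mathfrak{R}_3}$ corresponds to $\eta_1\vert_{\mathfrak{R}_1}$ with its flat connection; this isometry conjugates $-\triangle_3(\mathfrak{R}_3)$ to $-\triangle_1(\mathfrak{R}_1)$, so the spectra coincide (alternatively, repeat the argument verbatim).

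The only point requiring care is the bookkeeping in the second paragraph: one must confirm that the nontrivial deck transformation acts on the trivialising section as multiplication by $-1$ — which is forced by the nontriviality of $\eta_1$, exactly as in (\ref{ns}) — so that $\Lambda$-invariant sections correspond to \emph{odd} functions rather than even ones. This sign is precisely what excises the even degrees and produces the ``half-integer type'' spectrum, in direct analogy with the M\"obius-band example discussed after~(\ref{bo-ham}). Everything else is the standard spherical-harmonic eigenvalue computation and the isometry-invariance of the Laplace spectrum.
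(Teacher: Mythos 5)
Your proposal is correct and follows essentially the same route as the paper: pull back over the double cover $S^2(\sqrt{3}\,r_0)$ (the associated principal bundle of the tautological bundle), identify the lifted flat-connection Laplacian with the scalar Laplacian, apply Lemma~\ref{cover} with the transformation rule~(\ref{lemmaf}), and retain exactly the odd-degree spherical harmonics, with the $\mathfrak{R}_3$ case handled by the isometry with $\mathfrak{R}_1$. Your explicit identification of the deck transformation with the antipodal map and the parity $(-1)^n$ of degree-$n$ harmonics merely makes fully explicit the step the paper states in one line.
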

\begin{proof}
Considering first $\eta_1|_{\mathfrak{R}_1}$, the associated principal
bundle is the nontrivial $\mathbb{Z}_2$-bundle
\begin{equation}
\parbox{\textwidth}{%
   \xymatrix{%
    \mathbb{Z}_2\> \ar @{^{(}->} [r] &
       S^2(\sqrt{3}\>r_0) \ar[d] \\
     & \mathfrak{R}_1}
}
    \label{princR1}
\end{equation}
The pullback of $\eta_1|_{\mathfrak{R}_1}$ over its associated
principal bundles is $\widetilde \eta_1|_{\mathfrak{R}_1}$, which is
isomorphic to the trivial line bundle of $\mathfrak{R}_1$.  It follows
from Lemma~\ref{cover} that the eigenvalues of $\triangle_1$ correspond to
the eigenfunctions of the pullback Laplacian $\widetilde \triangle_1$
on $S^2(\sqrt{3}r_0)$, which transform according to~(\ref{lemmaf}).

The Laplacian $\widetilde \triangle_1$ is the standard Laplacian on
$S^2(\sqrt{3}r_0)$, because $\triangle_1$ is the Laplacian with
respect to the flat connection on $\eta_1|_{\mathfrak{R}_1}$.
Therefore, the eigenvectors of $-\widetilde \triangle_1$ are the
restrictions to $S^2(\sqrt{3}r_0)$ of the harmonic polynomials in
$\mathbb{R}^3$~\cite{VK93}.  These polynomials constitute the kernel of the usual
Laplacian on $\mathbb{R}^3$.  The eigenvalue associated to a harmonic
polynomial of degree $n$ is
\begin{subequations}
\begin{gather}
  \lambda_n = \frac{n(n+1)}{3r_0^2}\,,\quad\text{for
  $n=0,1,2,\dots$.} 
\intertext{The multiplicity of $\lambda_n$ is}
  \mult(\lambda_n) = 2n+1
\end{gather}
\end{subequations}
The transformation properties in~(\ref{lemmaf}) imply that the
eigenvalues of $\triangle_1$ correspond to the harmonic polynomials of
odd degree.

The spectrum of the Laplacian on $\eta_3|_{\mathfrak{R}_3}$ is the same
as the spectrum on $\eta_1|_{\mathfrak{R}_1}$, because the two bundles
are isometric.
\end{proof}

\section{Spectrum with Constant Curvature}

Although the universal covering space of $Y = \mathfrak{F}(b)$, for $b
\in (0,1,)$ is diffeomorphic to $S^3$, it is not isometric to a
standard sphere $S^3(r)$ with constant curvature $\frac{1}{r^2}$.
This may be verified by calculating $K_i$, the sectional curvature of
$\mathfrak{F}(b)$ in planes orthogonal to the principal foliation
$\mathcal{C}_i(b)$.  These sectional curvatures are
\begin{equation}
  K_1^S = \frac{2\,(1 - b + b^2)}{3\,(1 - b) \, r_0^{\> 2}}\,, \quad
  K_2^S = \frac{2\,(1 - b + b^2)}{3\,b\,(1-b)\, r_0^{\> 2}}\,, 
  \;\;\text{and}\;\;\,
  K_3^S = \frac{2\,(1 - b + b^2)}{3\,b\, r_0^{\> 2}}\,.
\end{equation}
The universal covering space would have constant curvature if and only
if $Y$ had constant curvature.

However, because it is covered by a sphere, $Y$ does admit a constant
curvature Riemannian structure~\cite[Thm.~5.1.2]{W74}.  The notation
$Y_K$ will be used to denote the manifold $Y$ with curvature $K$.  The
universal covering space, $\widetilde Y_K$, of $Y_K$ is isometric to
the 3-sphere of radius $r = \frac{1}{\sqrt{K}}$.  It is instructive to
first carry out the computation the spectra of the bundle Laplacians
$\triangle_i(Y_K)$, for $i=1,2,3$, on $Y_K$, before confronting this
problem for $Y$ with the more complicated isoparametric Riemannian
structure.

The pullback of $\triangle_i(Y_K)$ with respect to the flat connection
on $\eta_i$, is $\triangle(S^3(r))$, the standard Laplacian on the
3-sphere.  All of the eigenvalues of $\triangle(S^3(r))$ are obtained
as the restrictions to $S^3(r)$ of the harmonic polynomials in
$\mathbb{R}^4$.  We denote the vector space of harmonic polynomials of
degree $n$ on $\mathbb{R}^4$ by $\mathcal{H}_n$.  This is the eigenspace
of $-\triangle(S^3(r))$ with eigenvalue
$\lambda_n=\frac{n(n+2)}{r^2}$~\cite{VK93}.  To determine the
dimensions of the eigenspaces of $\triangle(S^3(r))$, we first consider
the vector space $\mathcal{P}(n)$, consisting of all degree $n$ homogeneous
polynomials on $\mathbb{R}^4$.  The dimension of $\mathcal{P}(n)$ is
\begin{equation}
  \dim\left(\mathcal{P}(n)\right) = \binom{n + 3}{n}\,.
\end{equation}
The Laplacian on $\mathbb{R}^4$ is an onto linear map
\begin{equation}
  \triangle(\mathbb{R}^4)\,{:}\;\, \mathcal{P}(n) \longtwoheadrightarrow \mathcal{P}(n - 2)\,,
  \label{lap4}
\end{equation}
where $\mathcal{P}(n)$ is interpreted as the
0\nolinebreak\mbox{-}dimensional vector space when $n < 0$.
Recalling that $\mathcal{H}(n)$ is the kernel of the map in~(\ref{lap4}), we
obtain that the multiplicity of $\lambda_n$ is
\begin{equation}
\begin{aligned}
  \dim\left(\mathcal{H}(n)\right) &= 
  \dim\left(\mathcal{P}(n)\right) - \dim\left(\mathcal{P}(n-2)\right)\\
  & = 
  \left(n + 1\right)^2\,, \quad \text{for $n = 0,1,2,\dots$.}
\end{aligned}
\end{equation}
Using Lemma~\ref{cover}, we obtain the eigensections of
$-\triangle_i(Y_K)$, by finding all harmonic polynomials in
$\mathbb{R}^4$ which transform accordingly under the action of
$\Lambda$ in (\ref{transprop}).

To explicitly construct the action $\Lambda$, use the fact that a
vector $(x_0,x_1,x_2,x_3) \in \mathbb{R}^4$ can be
represented as a quaternionic number $x \in \mathbb{H}$ by setting
\begin{equation}
  x = x_0 + x_1{\bf i} + x_2{\bf j} + x_3{\bf k}
\end{equation}
The group $\quat$ acts on the sphere $S^3(r) \subset
\mathbb{H}$ by left multiplication\footnote{Our choice of left
multiplication rather than right multiplication is simply a matter of
convention.}.  This action of $\quat$ on $S^3(r)$ induces the
action $\Lambda$ that is defined in Section~\ref{Sect_CS}.
Specifically,
\begin{subequations}
\begin{align}
  \Lambda_\mathbf{\pm 1} & :\;  f(x_0,x_1,x_2,x_3) 
     \longmapsto f(\pm x_0, \pm x_1 , \pm x_2 , \pm x_3) \\
  \Lambda_\mathbf{\pm i} & :\;  f(x_0,x_1,x_2,x_3)
     \longmapsto f(\pm x_1, \mp x_0 , \pm x_3 , \mp x_2) \\
  \Lambda_\mathbf{\pm j} & :\;  f(x_0,x_1,x_2,x_3) 
     \longmapsto f(\pm x_2 , \mp x_3 , \mp x_0 , \pm x_1) \\
  \Lambda_\mathbf{\pm k} & :\;  f(x_0,x_1,x_2,x_3) 
     \longmapsto f(\pm x_3 , \pm x_2 , \mp x_1 , \mp x_0)
\end{align}
\end{subequations}
Comparing this action to the transformation properties
in~(\ref{transprop}) allows us to prove the following theorem.

\begin{theorem}
The three Laplacians $\triangle_i(Y_K)$, $i=1,2,3$, have the same
spectrum.  The eigenvalues of $-\triangle_i(Y_K)$ are
\begin{subequations}
\begin{gather}
  \lambda_n = r^{-2}(n+1)(n+2)\,,\qquad\text{for $\,n=0,1,2,\dots$}
\intertext{The multiplicity of $\lambda_n$ is}
  \mult(\lambda_n) = (n\backslash 2\,+1)(2n+3)\,,
\end{gather}
\end{subequations}
where $\backslash\,$ is the integer division operator.
\end{theorem}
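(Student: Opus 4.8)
The plan is to read off the spectrum of each connection Laplacian $\triangle_i(Y_K)$ from that of the ordinary Laplacian on the round cover $S^3(r)=\widetilde{Y_K}$ by means of Lemma~\ref{cover}. Since the connection on $\eta_i$ is flat, its pullback $\widetilde\eta_i$ is the trivial real line bundle with the trivial connection; fixing one of its two normalised (parallel) sections $\tau$ identifies $L^2(\widetilde\eta_i)$ with $L^2(S^3(r);\mathbb{R})$ and turns the pulled-back operator into $-\triangle(S^3(r))+v_i$. Lemma~\ref{cover} then says the eigenvalues of $-\triangle_i(Y_K)$ are exactly those eigenvalues of $-\triangle(S^3(r))$ whose eigenspace contains a function transforming under the deck group $\quat$ according to the appropriate set of rules in~(\ref{transprop}), with multiplicity equal to the dimension of that equivariant subspace. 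The eigenspaces of $-\triangle(S^3(r))$ are the spaces $\mathcal{H}(m)$ of restrictions of degree-$m$ harmonic polynomials on $\mathbb{R}^4$, with eigenvalue $m(m+2)r^{-2}$ and $\dim\mathcal{H}(m)=(m+1)^2$, so the whole computation reduces to finding $\dim\mathcal{H}(m)^{\chi}$, where $\chi\colon\quat\to\{\pm1\}$ is whichever of the three nontrivial one-dimensional characters occurs in~(\ref{transprop}) for the bundle in question.

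Two observations organise the count. First, every one-dimensional character of $\quat$ is trivial on the central element of order two, whereas that element acts on $\mathcal{H}(m)$ (by left multiplication by $-1$) as $f(x)\mapsto f(-x)=(-1)^m f(x)$; hence $\mathcal{H}(m)^{\chi}=0$ for odd $m$, and for $m=2k$ the action of $\quat$ on $\mathcal{H}(2k)$ descends to an action of $\quat/\{\pm\mathbf 1\}\cong\mathbb{Z}_2\oplus\mathbb{Z}_2$. The Frobenius projection formula then gives
\[
  \dim\mathcal{H}(2k)^{\chi}=\tfrac14\Bigl[(2k+1)^2+\chi(\mathbf i)\,t^{\mathbf i}_{2k}+\chi(\mathbf j)\,t^{\mathbf j}_{2k}+\chi(\mathbf k)\,t^{\mathbf k}_{2k}\Bigr],\qquad t^{g}_{m}:=\trace(\Lambda_g\,|\,\mathcal{H}(m)).
\]
Second, from the explicit action of $\quat$ on $\mathbb{R}^4$ written out just before the theorem, each of $\Lambda_{\mathbf i},\Lambda_{\mathbf j},\Lambda_{\mathbf k}$ is an orthogonal transformation whose eigenvalues on $\mathbb{C}^4$ are $\sqrt{-1},\sqrt{-1},-\sqrt{-1},-\sqrt{-1}$; being conjugate in $GL(4,\mathbb{C})$ they induce the same trace $t^{\mathbf i}_m=t^{\mathbf j}_m=t^{\mathbf k}_m=:t_m$ on every $\mathcal{H}(m)$. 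I would compute $t_m$ from the generating identity $\sum_m\trace(\Lambda_{\mathbf i}\,|\,\mathcal{P}(m))\,z^m=(1-\sqrt{-1}\,z)^{-2}(1+\sqrt{-1}\,z)^{-2}=(1+z^2)^{-2}$ on the spaces $\mathcal{P}(m)$ of homogeneous degree-$m$ polynomials, together with the $\quat$-invariant splitting $\mathcal{P}(m)=\mathcal{H}(m)\oplus(x_0^2+x_1^2+x_2^2+x_3^2)\,\mathcal{P}(m-2)$; this makes the generating function of the $\mathcal{H}(m)$ equal to $(1-z^2)(1+z^2)^{-2}$ and yields $t_{2k}=(-1)^k(2k+1)$ (and $t_m=0$ for odd $m$, consistent with the first observation).

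Combining these, and using that $\chi(\mathbf i)+\chi(\mathbf j)+\chi(\mathbf k)=-1$ for every one of the three nontrivial characters, one obtains
\[
  \dim\mathcal{H}(2k)^{\chi}=\tfrac14\,(2k+1)\bigl((2k+1)-(-1)^k\bigr),
\]
a number manifestly independent of which character — hence which of $\eta_1,\eta_2,\eta_3$ — was used, which is precisely the claim that $\triangle_1(Y_K)$, $\triangle_2(Y_K)$ and $\triangle_3(Y_K)$ have a common spectrum. This dimension vanishes at $k=0$ and is positive for $k\ge1$, so the eigenvalues that actually occur are those of $-\triangle(S^3(r))$ on $\mathcal{H}(2k)$ with $k\ge1$; writing $n=k-1$ so that the contributing harmonic degree is $m=2n+2$, the eigenvalue is $m(m+2)r^{-2}$ with multiplicity $\tfrac14(2n+3)\bigl((2n+3)+(-1)^n\bigr)$, which collapses to $\tfrac{n+2}{2}(2n+3)$ for $n$ even and $\tfrac{n+1}{2}(2n+3)$ for $n$ odd, i.e.\ to $(n\backslash2+1)(2n+3)$. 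I expect the only delicate point to be this final bookkeeping — keeping the parity case-split straight, checking that it condenses to the integer-division formula, and matching the normalisation of the eigenvalue $m(m+2)r^{-2}$ to the form $r^{-2}(n+1)(n+2)$ in the statement; the trace evaluation and the symmetry argument that forces the three spectra to agree are routine once the setup above is in place.
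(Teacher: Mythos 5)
Your proposal is correct, and it reaches the paper's dimension count by a genuinely different route. After the same reduction (Lemma~\ref{cover}, trivialising $\widetilde\eta_i$ by a parallel section so that the pullback operator is the round Laplacian on $S^3(r)$), the paper computes $\dim\mathcal{P}_1(l)$ by hand: it sorts the degree-$l$ monomials by exponent pattern, writes out the transformation table~(\ref{transa}) for the six products $(x_ax_b)^{n_1}(x_cx_d)^{n_2}$, notes that $\quat$ acts without nontrivial fixed points on the complementary span $\mathcal{P}''(l)$ so exactly one quarter of it survives, adds the pieces, and only then passes to harmonics via $\dim\mathcal{H}_1(2m)=\dim\mathcal{P}_1(2m)-\dim\mathcal{P}_1(2m-2)$; equality of the three spectra is then argued by permuting $\mathbf i,\mathbf j,\mathbf k$. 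You instead apply the isotypic projection formula directly on $\mathcal{H}(2k)$ (after using the central element to kill odd degrees) and evaluate all three traces at once from the eigenvalues $\pm\sqrt{-1}$, each twice, of the underlying orthogonal maps, via the generating function $(1-z^2)(1+z^2)^{-2}$, obtaining $t_{2k}=(-1)^k(2k+1)$ and hence $\dim\mathcal{H}(2k)^{\chi}=\tfrac14(2k+1)\bigl((2k+1)-(-1)^k\bigr)$. This agrees with the paper's~(\ref{dim-pi-1}) in both parities, gives the equality of the three spectra for free (the answer depends only on $\chi(\mathbf i)+\chi(\mathbf j)+\chi(\mathbf k)=-1$), and avoids the error-prone case analysis; the paper's elementary count, in return, exhibits explicit invariant polynomials. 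Your observation that only even harmonic degrees contribute is also sharper than the paper's closing remark that the spectrum consists of ``all eigenvalues of $-\triangle(S^3(r))$ except zero''.

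One caveat, which is not a gap in your argument: the normalisation you flag at the end cannot actually be made to match the stated eigenvalue formula. The surviving harmonic degrees are $m=2n+2$, so both your computation and the paper's own proof give $\lambda_n=\frac{(2n+2)(2n+4)}{r^2}=\frac{4(n+1)(n+2)}{r^2}$, which differs from the stated $r^{-2}(n+1)(n+2)$ by a factor of $4$; this discrepancy is internal to the paper, while the multiplicity $(n\backslash 2+1)(2n+3)$ is exactly what your formula collapses to.
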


\begin{proof}
In order to find a basis of harmonic polynomials on $\mathbb{R}^4$
which have the required transformation properties under $\Lambda$, we
first note that the action $\Lambda$ commutes with the Laplacians.
Therefore, we can proceed by first finding the subspace of homogeneous
polynomials which have the transformation properties given
by~(\ref{transprop}) and then find the harmonic polynomials within
this subspace.  

Begin by considering the subspace $\mathcal{P}_1(n)$, which is defined as
the subspace of all degree $n$ polynomials that transform according to
(\ref{transprop-a}).  The invariance of $f \in \mathcal{P}_1(l)$ under
$\Lambda_{\bf -1}$ implies that the degree of $f$ must be even.  To
consider the remaining elements of $\quat$, it will suffice to
check only $\Lambda_\mathbf{i}$ and $\Lambda_\mathbf{j}$, because
$\mathbf{k} = \mathbf{i}\, \mathbf{j}\,$.

Our analysis will be organised according to the number of distinct
exponents in the homogeneous polynomials $f(x_0,x_1,x_2,x_3) =
x_0^{n_0}\,x_1^{n_1}\,x_2^{n_2}\,x_3^{n_3}$, where $l = n_0 + n_1 +
n_2 + n_3$ is even.  First, consider polynomials for which the
exponent of $x_i$ is the same for each $i = 1,2,3$.  A polynomial of
the form $f(x_0,x_1,x_2,x_3) = x_0^n\,x_1^b\,x_2^n\,x_3^n$ is
invariant under both $\Lambda_\mathbf{j}$ and $\Lambda_\mathbf{k}$,
which is inconsistent with (\ref{transprop}).  Therefore, there are no
nontrivial polynomials of this form in $\mathcal{P}_1(l)$.

Next we consider homogeneous polynomials for which two of the
variables have one exponent $n_1 \ge 0$ and the other two variables
have a distinct exponent $n_2 \ge 0$.  The transformation of such
polynomials in $\mathcal{P}^\prime_1(l)$ is given in tabular form below.
\begin{equation}
\setlength{\arraycolsep}{1em}
\renewcommand{\arraystretch}{1.1}
\begin{array}{@{}l|ccc@{}}
          & \Lambda_{\bf i} & \Lambda_{\bf j} & \Lambda_{\bf k} \\ 
\hline
f_1=(x_0 x_1)^{n_1}(x_2 x_3)^{n_2}
          & (-1)^{n_1 + n_2}f_1 & (-1)^{n_1 + n_2}f_6 & f_6 \\
f_2=(x_0 x_2)^{n_1}(x_1 x_3)^{n_2}
          & f_5 & (-1)^{n_1 + n_2}f_2 & (-1)^{n_1 + n_2}f_5 \\
f_3=(x_0 x_3)^{n_1}(x_1 x_2)^{n_2}
          & (-1)^{n_1 + n_2} f_4 & f_4 & (-1)^{n_1 + n_2} f_3 \\
f_4=(x_1 x_2)^{n_1}(x_0 x_3)^{n_2}
          & (-1)^{n_1 + n_2} f_3 & f_3 & (-1)^{n_1 + n_2} f_4 \\
f_5=(x_1 x_3)^{n_1}(x_0 x_2)^{n_2}
          & f_2 & (-1)^{n_1 + n_2} f_5 & (-1)^{n_1 + n_2} f_2 \\
f_6=(x_2 x_3)^{n_1}(x_0 x_1)^{n_2}
          & (-1)^{n_1 + n_2} f_6  &  (-1)^{n_1 + n_2} f_1 & f_1
\end{array} \label{transa}
\end{equation}
The subspace $\mathcal{P}^\prime(l) \subset \mathcal{P}(l)$, defined as the span
of all polynomials of this form for any values of $n_1$ and $n_2$ has
dimension
\begin{equation}
  \dim\left(\mathcal{P}^\prime\left(l\right)\right)
  = \begin{cases}
    \frac{3l}{2}\,, & \text{if $l \equiv 0\pmod{4}$} \\
    \frac{3(l + 2)}{2}\,,& \text{if $l \equiv 2\pmod{4}$}
    \end{cases}
\end{equation}
It follows from (\ref{transa}) that if $n_1 + n_2$ is odd, then a
basis of polynomials with the transformation
properties~(\ref{transprop}) is $\left\{f_2 + f_5 \,,\, f_3 -
f_4\right\}$.  If $n_1 + n_2$ is even, then a basis for the
polynomials which transform as required is $\left\{f_1 - f_6\right\}$.
Therefore, the dimension of $\mathcal{P}^\prime(l) \cap \mathcal{P}_1(l)$ is
\begin{equation}
  d_1(l) = \begin{cases}
           \frac{l}{4}\,, & \text{for $l = 0,2,4,\dots$}\\
           \frac{l + 2}{2}\,, & \text{for $l = 1,3,5,\dots$}
           \end{cases}
\label{dim1}
\end{equation}

Finally, consider $\mathcal{P}^{\prime\prime}(l)$, the vector space complement
of both of the two subspaces of $\mathcal{P}(l)$ already considered.  The
dimension of $\mathcal{P}^{\prime\prime}(l)$ is
\begin{equation}
  \dim\left(\mathcal{P}^{\prime\prime}\left(l\right)\right)
  = \begin{cases}
    \binom{l + 3}{3} - \frac{3l + 2}{2}\,,& \text{if 
              $l \equiv 0\pmod{4}$} \\
    \binom{l + 3}{3} - \frac{3l + 6}{2}\,,& \text{if
              $l \equiv 2\pmod{4}$}
    \end{cases}
\end{equation}
The transformations $\Lambda_\mathbf{i}$, $\Lambda_\mathbf{j}$, and
$\Lambda_\mathbf{k}$ all act on $\mathcal{P}^{\prime\prime}(l)$ without
nontrivial fixed points and the subspace of polynomials that transform
according to (\ref{transprop-a}) in $\mathcal{P}^{\prime\prime}(l)$ can be
constructed by noting that for any $f \in \mathcal{P}^{\prime\prime}(l)$,
the polynomial $f + \Lambda_\mathbf{i}f - \Lambda_\mathbf{j}f -
\Lambda_\mathbf{k}f$ transforms as required.  Therefore, the dimension
of $\mathcal{P}^{\prime\prime}(l) \cap \mathcal{P}_1(l)$ is
\begin{equation}
  d_2(l) = \frac{1}{4}\dim\left(\mathcal{P}^{\prime\prime}
  \left(l\right)\right)
  = \begin{cases}
     0\,, & \text{for $m = 0$}\\[0.3ex]
     2\binom{m + 2}{3} - \frac{m+1}{2}\,, & \text{for $m=1,3,5\dots$}\\[0.3ex]
     2\binom{m + 2}{3} - \frac{m}{2}\,, & \text{for $m = 2,4,6,\dots$}
    \end{cases}
\label{dim2}
\end{equation}
where $l = 2m$ and $m = 0,1,2,\dots$.

Combining the results from (\ref{dim1}) and (\ref{dim2}), we calculate
that the dimension of the subspace $\mathcal{P}_1(2m)$ is
\begin{equation}
  d_1(2m) + d_2(2m) 
  = \begin{cases}
    0\,, & \text{for $m = 0$} \\[0.3ex]
    2\binom{m + 2}{3} + \frac{m+1}{2}\,, & \text{for $m
               = 1,3,5,\dots$}\\[0.3ex]
    2\binom{m + 2}{3}\,, & \text{for $m = 2,4,6,\dots$}
    \end{cases}
\end{equation}
The vector subspace of degree $l$ harmonic polynomials which transform
according to (\ref{transprop-a}) is $\mathcal{H}_1(l) = \mathcal{P}_l(l) \cap
\mathcal{H}(l)$.  Using the fact that $\triangle(\mathbb{R}^4)\,{:}\;
\mathcal{P}_1(l) \longrightarrow \mathcal{P}(l - 2)$ is an onto linear map, we
obtain that the dimension of $\mathcal{H}_1(2m)$ is
\begin{equation}
   \dim\left(\mathcal{P}_1\left(2m\right)\right) 
      - \dim\left(\mathcal{P}_1\left(2m - 2\right)\right) 
  = \begin{cases}
    \frac{m(2m + 1)}{2}\,, & \text{for $m = 0,2,4,\dots$}\\
    \frac{(m + 1)(2m + 1)}{2}\,, & \text{for $m = 1,3,5,\dots$}
    \end{cases}
\label{dim-pi-1}
\end{equation}

Notice that each of the eigenspaces, $\mathcal{H}(l)$, is nontrivial,
except for $l = 0$.  This implies that the spectrum of
$-\triangle_1(Y_K)$ consists of all eigenvalues of
$-\triangle(S^3(r))$, except for the zero eigenvalue.
Re-indexing~(\ref{dim-pi-1}) in terms of $n = m-1$ completes the proof
for $-\triangle_1(Y_K)$.  The results for $-\triangle_2(Y_K)$ and
$-\triangle_3(Y_K)$ are obtained by simply permuting $\mathbf{i}$,
$\mathbf{j}$, and $\mathbf{k}$ in the above calculation.
\end{proof}

\section{Spectrum with Isoparametric Geometry}

To compute the spectra of the line bundle Laplacians on
$\mathfrak{F}(b)$ with the isoparametric metric, $g_b$, we shall make
use of the fact that the 3-dimensional unit sphere may be viewed as
$\Sp(1)$, the Lie group of unit quaternions in $\mathbb{H}$.  Note
that this Lie group is isomorphic to $\SU(2)$, should the reader
prefer to view it in that guise.

The Lie algebra, $\symp(1)$, of $\Sp(1)$ is a 3-dimensional real vector
space.  This Lie algebra may be viewed as the tangent space at the
identity and in this context it is a vector space generated by
$\left\{\mathbf{i}, \mathbf{j}, \mathbf{k}\right\}$.  For $\mathbf{x}
= x_1\mathbf{i} + x_2\mathbf{j} + x_3 \mathbf{k}$ and $\mathbf{y} =
y_1\mathbf{i} + y_2\mathbf{j} + y_3 \mathbf{k}$, we define the inner
product
\begin{equation}
  \mathbf{x} \cdot \mathbf{y} = \frac{1}{2}\sum_{i=0}^3 x_i y_i \,.
\end{equation}
With this inner product, the adjoing representation of $\Sp(1)$ on
 $\symp(1)$ defines a map
\begin{equation}
  \mathrm{Ad}:\, \Sp(1) \longrightarrow \SOrth(3) \subset
  \GL(\symp(1))\,,
  \label{Adrep}
\end{equation}
which is a 2-fold covering.  In the 3-fold coverings
of~(\ref{3cover}), $P(X_i) \cong \frac{\Orth(3)}{\Orth(1) \times
\Orth(1)}$ and $\mathfrak{F}(b) \cong \frac{\Orth(3)}{\Orth(1) \times
\Orth(1) \times \Orth(1)}$ are homogeneous spaces.  The projections
\begin{equation}
  P(\wp_i):\, P(X_i) \longrightarrow \mathfrak{F}(b)
\end{equation}
correspond to factoring out one of the $\Orth(1)$ factors, for each of
$i=1,2,3$, respectively.  The remaining 4-fold covering projections is
\begin{equation}
  c_i = \mathrm{Ad} \circ a_i\,,\quad \text{for $i=1,2,3$.}
\end{equation}
The 2-fold covering projection $a_i$ is defined as
\begin{equation}
  a_i : \, \SOrth(3) \longrightarrow P(X_i) \cong 
           \frac{\Orth(3)}{\Orth(1) \times \Orth(1)} =
           \frac{\SOrth(3)} {\Simple(\Orth(1) \times \Orth(1))}\,,
\end{equation}
where $\Simple(\Orth(1) \times \Orth(1)) =
\left\{(1,1)\,,\,(-1,-1)\right\}$. 

We shall define the Riemannian metric on $\Sp(1)$ to be $p^*g_b$, the
pull up of the Riemannian metric from $\mathfrak{F}(b)$.  Recall that
$g_b$ is invariant with respect to the adjoint action of $\Orth(3)$ on
$\mathfrak{F}(b)$.  Therefore, $p^*g_b$ is also an invariant metric and
the Laplacian with respect to $p^*g_b$ is an invariant differential
operator.  An invariant differential operator on a compact Lie group
is completely determined by its form on the Lie algebra~\cite{T94}.
Specifically, if $\left\{\mathbf{x}_1, \mathbf{x}_2, \mathbf{x}_3
\right\}$ is an orthonormal basis for $\left(\symp(1),p^*g_b\right)$,
then
\begin{equation}
  \triangle(\symp(1),p^*g_b) = \sum_{i=1}^3 \mathbf{x}_i^2\,, 
\end{equation}
where the tangent vector $\mathbf{x}_i$ acts on functions as a
directional derivative.

The differential of the projection $p$ is
\begin{equation}
  dp : \, \symp(1) \longrightarrow
           T_{D(b,r)}\mathfrak{F}(b)\,, \qquad
          \mathbf{x} \longmapsto
          \left[\mathrm{ad}(\mathbf{x}),D(b,r)\right]\,,
\end{equation}
where $\mathrm{ad}$, the adjoint representation of $\symp(1)$, is the
differential of $\mathrm{Ad}$ in equation~(\ref{Adrep}).  It follows
that the metric $p^*g_b$ on $\symp(1)$ is
\begin{equation}
  p^*g_b(\mathbf{x},\mathbf{y}) = 
  \trace\left([\mathrm{ad}(\mathbf{x}),D(b,r)]\,
              [\mathrm{ad}(\mathbf{y}),D(b,r)]\right)\,,
\end{equation}
where $D(b,r)$ is defined in equation~(\ref{diag-matrix}).
Therefore, an orthonormal basis for\linebreak[4] $(\symp(1),p^*g_b)$ is
\begin{equation}
  \mathbf{x}_1 = \frac{\sqrt{1-b+b^2}}{\sqrt{3}\,r(1-b)}\,\,
                 \mathbf{i}\,,\quad
  \mathbf{x}_2 = \frac{\sqrt{1-b+b^2}}{\sqrt{3}\,r}\,\,\mathbf{j}\,,
                 \;\text{and}\;\;\;
  \mathbf{x}_3 = \frac{\sqrt{1-b+b^2}}{\sqrt{3}\,rb}\,\,
                 \mathbf{k}
\end{equation}
In terms of this basis, the Laplacian is
\begin{equation}
  \triangle(\Sp(1),p^*g_b) 
  = \frac{1-b+b^2}{3r^2}\left(\left(1-b\right)^{-2}\mathbf{i}^2 +
    \mathbf{j}^2 + b^{-2}\mathbf{k}^2\right)
  \label{sp-lap}
\end{equation}

Our approach to determining the spectrum of $\triangle(\Sp(1),p^*g_b)$
will be to utilize the Peter-Weyl theorem as in~\cite{T94}.  The set
of all equivalence classes of irreducible representations of $\Sp(1)$
is
\begin{equation}
  \mathcal{R}(\Sp(1)) =
  \left\{(\rho_m,\mathcal{Q}_m)\mid m=0,1,2\dots\right\}\,,
\end{equation}
where $\mathcal{Q}_m$ is the vector space of homogeneous polynomials
in two complex variables \cite{S90}.  For $\mathbf{x} = x_0 +
x_1\mathbf{i} + x_2\mathbf{j} + x_3\mathbf{k} \in \mathbb{H}$, we
define two maps from $\mathbb{H}$ to $\mathbb{C}$ by
\mbox{$h_1(\mathbf{x}) = x_0 + i x_2$} and $h_2(\mathbf{x}) = x_1 + i
x_3$.  In terms of $h_1$ and $h_2$, a right action of $\mathbb{H}$ on
$\mathbb{C}^2$ is given by {\renewcommand{\arraystretch}{1.8}
\setlength{\arraycolsep}{0.5em}
\begin{equation}
  [z_1\,z_2] \longmapsto
    [z_1\,z_2]\,
      \begin{bmatrix} 
        \;\overline{h_1}(\mathbf{x})\; & \;-h_2(\mathbf{x})\;\\
        \;\overline{h_2}(\mathbf{x})\; & \;h_1(\mathbf{x})\;
      \end{bmatrix} 
\end{equation}
We then define the action of $\rho_m$ on $\mathcal{Q}_m$ by
\begin{equation}
  \rho_m(\mathbf{x}) f(z_1,z_2) =
  f(\,\overline{h_1}(\mathbf{x})\,z_1 + \overline{h_2}(\mathbf{x})\,z_2\,,\,
  h_1(\mathbf{x})\,z_2 - h_2(\mathbf{x})\,z_1)\,.
\end{equation}}

On $\mathcal{Q}_m$, we define the $\Sp(1)$-invariant
inner product
\begin{multline}
  ((f_1\,,\,f_2)) = \sum_{k=0}^m\overline{a_k}\,b_k\, k!\, (m-k)!\,,\\
  \text{where}\quad f_1=\sum_{k=0}^ma_kz_1^kz_2^{m-k}\quad
  \text{and}\quad f_1=\sum_{k=0}^mb_kz_1^kz_2^{m-k}\,.
\end{multline}
With respect to this inner product, an orthonormal basis for
$\mathcal{Q}_m$ is
\begin{equation}
  \left\{u^m_k = [\,k!\,(m-k)!\,]^{-\frac{1}{2}}\,z_1^k\,z_2^{m-k}
                  \mid k = 0,1,2,\dots m \right\}\,.
\end{equation}
The matrix elements of $\rho_m(\mathbf{x})$ are 
\begin{equation}
  [\rho_m(\mathbf{x})]_{ij} = (( u^m_i \,,\, \rho_m(\mathbf{x})\,
                                 u^m_j ))\,.
\end{equation}
It follows directly from the Peter-Weyl theorem that
\begin{equation}
  \left\{\sqrt{m+1}\>[\rho_m]_{ij} \mid i,j = 0,1,2,\dots m 
          \quad\text{and}\quad m = 0,1,2,\dots\right\}
\end{equation}
is an orthonormal basis for $L^2(\Sp(1); \mathbb{C})$.  The Hilbert
space inner product on\linebreak[4] $L^2(\Sp(1);\mathbb{C})$ is
\begin{equation}
  \langle f \,,\, g \rangle = \int \overline{f(\mathbf{x})}\,
                              g(\mathbf{x})\,d\mathbf{x}\,,
\end{equation}
where $d\mathbf{x}$ is the usual normalised bi-invariant Haar measure
on $\Sp(1)$.  Note that this measure is proportional to the measure
induced by the Riemannian metric $p^*g_b$.

The representation $(\rho_m,\mathcal{Q}_m)$ of $\Sp(1)$
induces an infinitesimal  representation
$(\rho^\prime_m,\mathcal{Q}_m)$ of the Lie algebra
$\symp(1)$.  In terms of the basis polynomials $u_{mk}$ for
$\mathcal{Q}_m$, this representation is given by 
\begin{subequations}
\begin{align}
  \rho^\prime_m(\mathbf{i})\,u^m_k & = k u^m_{k-1} - (m - k)u^m_{k+1}\\
  \rho^\prime_m(\mathbf{j})\,u^m_k & = i(m-2k)u_k^m \\
  \rho^\prime_m(\mathbf{k})\,u^m_k & = -i\left(ku^m_{k-1} + 
                                       (m - k)u^m_{k+1}\right)
\end{align}
\end{subequations}

Utilizing~(\ref{sp-lap}), the Casimir operator for
$-\triangle(\Sp(1),p^*g_b)$ is
\begin{multline}
  \rho^\prime_m(-\triangle(\Sp(1),p^*g_b))\\
       = \frac{1-b+b^2}{3r^2}\left(\left(1-b\right)^{-2}
         \rho^\prime_m(\mathbf{i})^2 + 
       \rho^\prime_m(\mathbf{j})^2 + 
       b^{-2}\rho^\prime_m(\mathbf{k})^2\right),
\end{multline}
for the representation $(\rho_m,\mathcal{Q}_m)$.  Writing
this as an $(m+1) \times (m+1)$ matrix with respect to the basis
$\{u_i\}$, we define
\begin{equation}
  \left[\Omega_m\right]_{ij} = (( u^m_i
  \,,\,\rho^\prime_m(-\triangle(\Sp(1),p^*g_b))\,
  u^m_j))\,,\quad\text{for $i,j = 0,1,2,\dots,m$.}
\end{equation}
It follows from the Peter-Weyl theorem that the eigenvalues of the
Laplacian\linebreak[4] $-\triangle(\Sp(1),p^*g_b)$ are given by all of the
eigenvalues of the matrices $\Omega_m$, for\linebreak[4]
\mbox{$m=0,1,2,3,\dots$}.
Furthermore, if $\lambda^m$ is an eigenvalue of $\Omega_m$, then the
multiplicity of $\lambda^m$ as an eigenvalue of
$-\triangle(\Sp(1),p^*g_b)$ is $\dim(\mathcal{Q}_m) = m+1$.

Note that the restriction of the action $\rho_m$ to the unit
quaternions $\quat \subset \mathbb{H}$ corresponds to the action
$\Lambda$ defined in $\ref{transa}$, where we identify functions of
two complex variables with functions of four real variables by
\begin{equation}
  f(z_1,z_2) = f\bigl(\Re(z_1),\Re(z_2),\Im(z_1),\Im(z_2)\bigr)\,.
\end{equation}
The real and imaginary parts of a complex number are denoted by $\Re$
are $\Im$, respectively.

The action $\Lambda$ of $\quat$ on
$\mathcal{Q}_m$ is simply the restriction of $\rho_m$ to
$\quat \subset \mathbb{H}$.  Therefore,
\begin{subequations}
\begin{align}
  \Lambda_\mathbf{\pm 1} & :\;  f(z_1,z_2)
     \longmapsto f(\pm z_1, \pm z_2) \\
  \Lambda_\mathbf{\pm i} & :\;  f(z_1,z_2)
     \longmapsto f(\pm z_2, \mp z_1) \\
  \Lambda_\mathbf{\pm j} & :\;  f(z_1,z_2)
     \longmapsto f(\mp i z_1, \pm i z_2) \\
  \Lambda_\mathbf{\pm k} & :\;  f(z_1,z_2)
     \longmapsto f(\mp z_2 i , \mp z_1 i)
\end{align}
\end{subequations}
This action commutes with the endomorphism
$\rho^\prime_m(-\triangle(\Sp(1),p^*g_b))$, because it is an isometry
with respect to the inner product $((\,\cdot\, , \,\cdot\,))$.
Therefore, by considering the subspace of $\mathcal{Q}_m$ that
transforms according to \ref{transprop-a} under the action $\Lambda$,
we obtain the Casimir matrices $\Omega^1_m$ of
$-\triangle_1\left(\mathfrak{F}(b)\right)$. The result is summarised
in the following theorem.

\begin{theorem}
\label{casimir1}
For $m = 1,3,5,\dots$, the Casimir matrix $\Omega_m^1$ is an
$\frac{m+1}{2} \times \frac{m+1}{2}$ matrix. It's entries, which are labelled
by $i,j = 0,1,2,\dots,\frac{m-1}{2}$, are
\begin{equation}
  \left[\Omega^1_m\right]_{ij} = 
  \begin{cases}
    \dfrac{1 - b + b^2}{3}\left[\dfrac{(m - 1)(m + 2)}{4\,b^2} +
      \dfrac{3m^2 + 3m - 2}{4\,(1-b)^2} + 1\right], & \\
      & \hspace*{-7em}\mbox{for $i = j = \dfrac{m - 1}{2}$} \\[1ex]
    \dfrac{1 - b + b^2}{3}\left[\dfrac{1 - 2b + 2b^2}{2\,b^2\,(1 - b)^2}
      \bigl((4i)(m-i) + m\bigr) + (2i - m)^2\right], & \\
      & \hspace*{-7em}\mbox{for $i = j < \dfrac{m - 1}{2}$} \\[1ex]
    \dfrac{(1 - b + b^2)(1 - 2b)}{6\,b^2\,(1-b)^2}\sqrt{(i + 1)(m - i)(2i
      + 1)(2m - 2i - 1)}\,, & \\ 
      & \hspace{-7em} \mbox{for $i = j - 1$} \\[1ex]
    \dfrac{(1 - b + b^2)(1 - 2b)}{6\,b^2\,(1-b)^2}\sqrt{(i)(m - i + 1) (2i
      - 1) (2m - 2i + 1)}\,, & \\ 
      & \hspace*{-7em}\mbox{for $i = j+1$}
      \\
    0\,, & \hspace*{-7em}\text{otherwise}
  \end{cases}
\smallskip
\end{equation}
For $m = 2,4,6,\dots$, the Casimir matrix $\Omega_m^1$ is an
$\frac{m}{2} \times \frac{m}{2}$ matrix. For $i,j =
0,1,\dots,\frac{m-2}{2}$, it entries are
\begin{equation}
  \left[\Omega^1_m\right]_{ij} = 
  \begin{cases}
    \dfrac{1 - b + b^2}{3}\left[\dfrac{3m^2 + 3m - 2}{4\,b^2} +
      \dfrac{(m - 1)(m + 2)}{4\,(1-b)^2} + 1\right], &  \\
         & \hspace*{-9em} \text{for $i = j = \dfrac{m - 2}{2}$} \\[1ex]
      \dfrac{1 - b + b^2}{3}\biggl[\dfrac{1 - 2b + 2b^2}{2\,b^2\,(1 - b)^2}
      \bigl((2i + 1)(2m - 2i + 1) + m\bigr) \\[1ex]
          \hspace*{18em}+(2i +1 - m)^2\biggr], \\[2ex]
	  & \hspace*{-9em}\text{for $i = j < \dfrac{m - 2}{2}$} \\
    \dfrac{(1 - b + b^2)(1 - 2b)}{6\,b^2\,(1-b)^2}\sqrt{(i + 1)(m - i - 1)
      (2i + 3)(2m - 2i - 1)}\,, & \\
         & \hspace*{-9em}\text{for $i = j - 1$}\\[1ex]
    \dfrac{(1 - b + b^2)(1 - 2b)}{6\,b^2\,(1-b)^2}\sqrt{(i)(m - i) (2i
      + 1) (2m - 2i + 1)}\,,& \\[1ex]
         & \hspace*{-9em}\text{for $i = j+1$} \\
    0\,, & \hspace*{-9em}\text{otherwise}
  \end{cases}
\end{equation}
\end{theorem}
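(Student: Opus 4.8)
The plan is to treat the statement as an explicit matrix computation, since the conceptual work is already done: by the Peter--Weyl discussion preceding the statement, the spectrum of $-\triangle_1(\mathfrak{F}(b))$ is encoded in the restriction of the operator $\rho'_m(-\triangle(\Sp(1),p^*g_b))$ to the subspace $\mathcal{Q}_m^{(1)}\subset\mathcal{Q}_m$ transforming by (\ref{transprop-a}) under $\Lambda$, and $\Omega^1_m$ is simply the matrix of that restriction in a chosen orthonormal basis of $\mathcal{Q}_m^{(1)}$. So the proof splits into (i) writing the Casimir as a concrete matrix on all of $\mathcal{Q}_m$, (ii) identifying $\mathcal{Q}_m^{(1)}$ together with a good basis, and (iii) collecting coefficients.

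For (i) I would square the infinitesimal action formulas for $\rho'_m(\mathbf i),\rho'_m(\mathbf j),\rho'_m(\mathbf k)$ on the basis $\{u^m_k\}$. The operator $\rho'_m(\mathbf j)^2$ is diagonal with entries $-(m-2k)^2$; both $\rho'_m(\mathbf i)^2$ and $\rho'_m(\mathbf k)^2$ are three-band, sending $u^m_k$ to a combination of $u^m_{k-2},u^m_k,u^m_{k+2}$ with a common diagonal part $-\bigl[k(m-k+1)+(m-k)(k+1)\bigr]$ but off-diagonal parts $\pm k(k-1)$ and $\pm(m-k)(m-k-1)$ of opposite sign. Hence in the weighted sum $(1-b)^{-2}\rho'_m(\mathbf i)^2+\rho'_m(\mathbf j)^2+b^{-2}\rho'_m(\mathbf k)^2$ of (\ref{sp-lap}) the off-diagonal coefficients pick up the factor $b^{-2}-(1-b)^{-2}=\dfrac{1-2b}{b^{2}(1-b)^{2}}$, which is the origin of the $(1-2b)$ in the off-diagonal entries of $\Omega^1_m$, while the diagonal picks up $(1-b)^{-2}+b^{-2}=\dfrac{1-2b+2b^{2}}{b^{2}(1-b)^{2}}$ together with the $(m-2k)^2$ term, the sources of the numerator $1-2b+2b^2$ and of the $(2i-m)^2$ in the generic diagonal entry; the overall constant $\dfrac{1-b+b^{2}}{3}$ of (\ref{sp-lap}) (specialised to $r=1$) supplies the prefactor.

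For (ii), using the table (\ref{transa}) and the explicit $\Lambda$-action on $\mathcal{Q}_m$, one checks that $\mathcal{Q}_m^{(1)}$ is spanned by vectors built from pairs $\{u^m_k,u^m_{m-k}\}$ (symmetrised or antisymmetrised so as to realise the required $\Lambda_{\pm\mathbf i}$ behaviour) and supported on $k$ of a single parity (forced by the required $\Lambda_{\pm\mathbf j}$ behaviour). This is what produces the two cases: for $m$ even one obtains $\tfrac{m}{2}$ such vectors, and for $m$ odd the quaternionic character of the monodromy forces one to retain a ``half'' of each two-dimensional $\quat$-isotype, giving $\tfrac{m+1}{2}$; in both cases the pair with $k=m-k$ --- the ``middle'' vector --- behaves differently from the rest, which is exactly why the entry at $i=j=\tfrac{m-1}{2}$ (respectively $\tfrac{m-2}{2}$) is tabulated separately. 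Re-indexing the basis vectors by $i=0,1,2,\dots$ converts the two-step coupling $u^m_k\leftrightarrow u^m_{k\pm2}$ into the coupling $i\leftrightarrow i\pm1$, so $\Omega^1_m$ comes out tridiagonal, as asserted.

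Step (iii) is then mechanical: $\Lambda$ commutes with $\rho'_m(-\triangle(\Sp(1),p^*g_b))$ (noted just before the statement), so the Casimir preserves $\mathcal{Q}_m^{(1)}$, and one substitutes the re-indexed basis vectors into the matrix of Step (i), carrying the normalisations $[k!\,(m-k)!]^{-1/2}$ through. These combine with the $k(k-1)$ and $(m-k)(m-k-1)$ factors to yield the radicals $\sqrt{(i+1)(m-i)(2i+1)(2m-2i-1)}$ and $\sqrt{(i)(m-i+1)(2i-1)(2m-2i+1)}$ (and their shifted versions for even $m$), while the diagonal collapses $k(m-k+1)+(m-k)(k+1)$ simplify to the displayed quadratics in $i$ and $m$. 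The main obstacle is precisely this bookkeeping: keeping the four sub-cases disjoint (the two parities of $m$, and inside each the generic diagonal, the boundary diagonal, and the two off-diagonal bands), handling the non-generic normalisation of the ``middle'' vector, and verifying the algebraic simplifications --- nothing deep, but error-prone, so I would carry it out as one carefully organised computation and quote the result.
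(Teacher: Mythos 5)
Your steps (i) and (iii) are sound and are indeed what the paper leaves implicit (it states Theorem~\ref{casimir1} with no written proof beyond the sentence preceding it): squaring the generators gives a two-step band operator whose off-diagonal coefficient carries $b^{-2}-(1-b)^{-2}=\frac{1-2b}{b^2(1-b)^2}$ and whose diagonal carries $b^{-2}+(1-b)^{-2}=\frac{1-2b+2b^2}{b^2(1-b)^2}$ together with $(m-2k)^2$, and the orthonormal normalisations $[k!\,(m-k)!]^{-1/2}$ turn the monomial-basis coefficients into the displayed radicals. The genuine gap is in your step (ii), the identification of the invariant subspace. Condition (\ref{transprop-a}) includes $\Lambda_{\pm\mathbf 1}f=f$, and $\Lambda_{\mathbf{-1}}$ acts on $\mathcal{Q}_m$ by $(-1)^m$ (the paper itself uses exactly this parity argument in the constant-curvature section). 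Hence for odd $m$ the subspace of $\mathcal{Q}_m$ transforming by (\ref{transprop-a}) is zero: there is no ``half of each two-dimensional $\quat$-isotype'' to retain, because eigensections of the bundle Laplacian pull back to honest $\quat$-equivariant functions on $S^3$, and the odd blocks of the Peter--Weyl decomposition contain only the two-dimensional irreducible of $\quat$, never the required character. Even for even $m$ your ``pairs $\{u^m_k,u^m_{m-k}\}$ of a single parity'' give a character-isotypic subspace of dimension roughly $m/4$ (one vector in $\mathcal{Q}_4$ per nontrivial character), not the claimed $m/2$; and in the smallest case the resulting $1\times1$ block built from $u_0\pm u_2\in\mathcal{Q}_2$ is $\frac{1-b+b^2}{3}\bigl[\frac{4}{b^2}+4\bigr]$ or its $b\leftrightarrow 1-b$ counterpart, which is not the theorem's $m=2$ entry $\frac{1-b+b^2}{3}\bigl[\frac{4}{b^2}+\frac{1}{(1-b)^2}+1\bigr]$.

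What actually produces the stated matrices is the Casimir restricted to the character-isotypic subspaces of the \emph{even} Peter--Weyl blocks $\mathcal{Q}_{2m}$ (equivalently, even-degree harmonic polynomials on $\mathbb{R}^4$, exactly as in the constant-curvature computation, where invariance under $\Lambda_{\mathbf{-1}}$ forces degree $l=2m$ and the per-column counts $\frac{m}{2}$ and $\frac{m+1}{2}$ appear), together with the relabelling of which character goes with which bundle and the metric normalisation coming from the covering $\Sp(1)\to\SOrth(3)$. For instance, the isotypic line in $\mathcal{Q}_4$ spanned by $u_1+u_3$ gives eigenvalue $\frac{1-b+b^2}{3}(10\alpha+6\beta+4)$ with $\alpha,\beta$ as above, which is four times the theorem's $m=2$ entry, and the corresponding line in $\mathcal{Q}_2$ reproduces the $m=1$ entry and its $b\to 0$ limit $\tfrac{2}{3}$ matching Theorem~\ref{thmR1}. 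Without this re-indexing (or an equivalent device, such as working on the intermediate double cover $P(X_1)$) your plan cannot yield the theorem's block sizes, let alone its entries; so the missing ingredient is not bookkeeping but the correct identification of where the sections of $\eta_1$ live inside $L^2(\Sp(1))$.
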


Note that matrices $\Omega^1_m$ are tridiagonal in that only the diagonal,
super-diagonal, and sub-diagonal entries are non-zero.  This makes a
numerical computation of the eigenvalues quite computationally
inexpensive.  Furthermore, At $b = \frac{1}{2}$ the matrices $\Omega^1_m$
are diagonal and explicit formulae for the eigenvalues of
$-\triangle_1\left(\mathfrak{F}\left(\frac{1}{2}\right)\right)$ are
easily obtained.  After some simplification, the eigenvalues are
\begin{equation}
  \lambda_{m,l} = \frac{1}{4}\,m^2 + 3\,l\,m - 3\,l^2 + m\,,
\end{equation}
for $m = 1,3,5,\dots$ and $l = 0,1,2,\dots,\frac{m-1}{2}$.  Also,
\begin{equation}
  \lambda_{m,l} = \frac{1}{4}\,m^2 + 3\,l\,m - 3\,l^2 + \frac{5}{2}\,m
  -3\,l - \frac{3}{4}
\end{equation}
for $m = 2,4,6,\dots$ and $l = 0,1,2,\dots,\frac{m-2}{2}$.

The Casimir matrices for the Laplacian $-\triangle_3(\mathfrak{F}(b))$
on the line bundle $\eta_3|_{\mathfrak{F}(b)}$ are obtained from the
Casimir matrices for $-\triangle_1(\mathfrak{F}(b))$ by substituting
$1-b$ for $b$.

The Casimir matrices of $-\triangle_2(\mathfrak{F}(b))$ are obtained
from the subspace of $\mathcal{Q}_m$ that transforms as
\ref{transprop-b} under the action $\Lambda$. The result is given in
the following theorem.

\begin{theorem}
\label{casimir2}
For $m = 1,2,3,\dots$, the $ij$ entry of the $\frac{m+1}{2} \times
\frac{m+1}{2}$ matrix $\Omega^2_m$ is
\begin{equation}
\left[\Omega^2_m\right]_{ij} = 
  \begin{cases}
    \dfrac{1-b+b^2}{3}\biggl[\bigl(\left(4i\right)\left(m - i - 1\right)
        + 3m - 1\bigr)\left(\dfrac{1 - 2\,b +
        2\,b^2}{2\,b^2\,(1-b)^2}\right) \\
     \hspace*{18em}+ \left(m - 2i - 1\right)^2\biggr]\,, \\[2ex]]
    & \hspace*{-9em}\text{for $i = j = \dfrac{m - 1}{2}$}\\[1ex]
  \dfrac{(1- b + b^2)(1 - 2b)}{6\,b^2\,(1-b)^2}\sqrt{(i+1)(2i+3)
    (m - i - 1)(2m - 2i - 1)}\,, & \\
    & \hspace*{-9em}\text{for $j = i + 1 \le \dfrac{m - 3}{2}$} \\[1ex]
  \dfrac{(1- b + b^2)(1 - 2b)}{6\,b^2\,(1-b)^2}\sqrt{(i)(2i + 1)
      (m - i)(2m - 2i + 1)}\,, & \\
    & \hspace*{-9em}\text{for $j = i - 1 \le \dfrac{m - 5}{2}$} \\[1ex]
  \dfrac{(1- b + b^2)(1 - 2b)}{6\sqrt{2}\,b^2\,(1-b)^2}\sqrt{(m-1) (m)
    (m+1)(m+2)}\,, & \\
    & \hspace*{-17em}\text{for $j = i = \dfrac{m-1}{2}$ or $j = i - 1
    = \dfrac{m - 3}{2}$} \\[1ex]
  0\,, & \hspace*{-9em}\text{otherwise}
  \end{cases}
\end{equation}
For $m = 2,4,6,\dots$, the $ij$ entry of the $\frac{m}{2} \times
\frac{m}{2}$ matrix $\Omega_m^2$  is
\begin{equation}
\left[\Omega^2_m\right]_{ij} = 
  \begin{cases}
    \dfrac{1-b+b^2}{3}\left[\bigl(\left(4i\right)\left(m - i\right) +
      m\bigr)\dfrac{1 - 2b + b^2}{2\,b^2\,(1-b)^2} + \left(m -
      2i\right)^2 \right], & \\
      & \hspace*{-11em} \text{for $i = j = 0,1,2,\dots,\dfrac{m - 2}{2}$} \\[1ex]
    \dfrac{(1-b+b^2)(1-2b)}{6\,b^2(1-b)^2}\sqrt{(i+1)(2i+1)(m-i)(2m-2i-1)}\,,
      & \\
    & \hspace*{-11em}\text{for $j = i+1$} \\[1ex]
    \dfrac{(1-b+b^2)(1-2b)}{6\,b^2(1-b)^2}\sqrt{(i)(2i - 1)(m-i+1)
      (2m-2i+1)}\,, & \\
      &\hspace*{-11em} \text{for $j = i-1$} \\[1ex]
      0\,, & \hspace*{-11em}\text{otherwise}
  \end{cases}
\end{equation}
\end{theorem}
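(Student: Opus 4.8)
The plan is to follow verbatim the route by which $\Omega^1_m$ was obtained, replacing the transformation law~(\ref{transprop-a}) by~(\ref{transprop-b}). First I would write down the action $\Lambda$ of $\quat$ on the orthonormal monomial basis $\{u^m_k\}$ of $\mathcal{Q}_m$, which is the restriction of $\rho_m$ to $\quat\subset\mathbb{H}$. From the formulas for $\Lambda_{\pm\mathbf1},\Lambda_{\pm\mathbf i},\Lambda_{\pm\mathbf j},\Lambda_{\pm\mathbf k}$ listed just before the theorem one reads off that $\Lambda_{\pm\mathbf1}$ and $\Lambda_{\pm\mathbf j}$ are diagonal in this basis (they multiply $u^m_k$ by $(\pm1)^m$ and by $(\pm i)^{m}(-1)^{k}$ respectively), whereas $\Lambda_{\pm\mathbf i}$ and $\Lambda_{\pm\mathbf k}$ interchange $u^m_k$ with $u^m_{m-k}$ up to a phase. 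Imposing $\Lambda_{\pm\mathbf j}f=f$ restricts $k$ to one arithmetic progression, and imposing $\Lambda_{\pm\mathbf i}f=-f$ then forces, on each pair $\{k,m-k\}$ with $k\neq m-k$, the combination $u^m_k\mp u^m_{m-k}$ (the sign being dictated by the phases, hence by $m\bmod4$). The monomial $u^m_{m/2}$ fixed by $k\mapsto m-k$ either drops out or survives as a single ``middle'' basis vector, and this is the origin of the special entries in the matrix. Normalising the surviving combinations produces an explicit orthonormal basis $\{v^m_0,v^m_1,\dots\}$ of the invariant subspace $\mathcal{Q}^2_m\subset\mathcal{Q}_m$, and counting the admissible values of $k$ gives its dimension --- the matrix sizes $\tfrac{m+1}{2}$ and $\tfrac{m}{2}$ quoted in the statement --- by a dimension count identical in spirit to those of the preceding section.

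Next I would evaluate $\rho'_m\bigl(-\triangle(\Sp(1),p^*g_b)\bigr)$ on $\{v^m_i\}$ using~(\ref{sp-lap}) together with the given infinitesimal action formulas for $\rho'_m(\mathbf i),\rho'_m(\mathbf j),\rho'_m(\mathbf k)$. Writing $\rho'_m(\mathbf i)=L-R$ and $\rho'_m(\mathbf k)=-i(L+R)$ with $Lu^m_k=ku^m_{k-1}$ and $Ru^m_k=(m-k)u^m_{k+1}$, one gets
\[
  (1-b)^{-2}\rho'_m(\mathbf i)^2+b^{-2}\rho'_m(\mathbf k)^2
  =\bigl((1-b)^{-2}-b^{-2}\bigr)(L^2+R^2)
   -\bigl((1-b)^{-2}+b^{-2}\bigr)(LR+RL),
\]
and since $(1-b)^{-2}-b^{-2}=(2b-1)/[b^2(1-b)^2]$ and $(1-b)^{-2}+b^{-2}=(1-2b+2b^2)/[b^2(1-b)^2]$, the two coefficient patterns $1-2b$ and $1-2b+2b^2$ appearing in the theorem are already visible. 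The operators $LR$, $RL$ and $\rho'_m(\mathbf j)^2$ are diagonal on $\{u^m_k\}$ --- with eigenvalues $(m-k)(k+1)$, $k(m-k+1)$ and $-(m-2k)^2$ --- so they contribute the diagonal entries of $\Omega^2_m$ (after restoring the overall factor $\tfrac{1-b+b^2}{3}$ with $r=1$); the operators $L^2$ and $R^2$ send $u^m_k$ to $u^m_{k\pm2}$, hence $v^m_i$ to $v^m_{i\pm1}$, producing only the super- and sub-diagonal entries and thereby the tridiagonal form. Comparing this with the transformation table for the surviving basis vectors, exactly as in the table~(\ref{transa}) for the $Y_K$ computation, shows that the coupling within each consecutive pair $v^m_i,v^m_{i+1}$ is the one recorded in the statement.

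The remaining work, and the step I expect to be the real obstacle, is the combinatorial bookkeeping: carrying the normalisation constants $[k!(m-k)!]^{-1/2}$ through $L^2$ and $R^2$ turns the raw coefficients $k(k-1)$ and $(m-k)(m-k-1)$ into the radicals $\sqrt{(i+1)(2i+3)(m-i-1)(2m-2i-1)}$, $\sqrt{i(2i+1)(m-i+1)(2m-2i+1)}$ and so on, but verifying this requires fixing the precise dictionary between the basis index $i$ and the monomial index $k$ and then treating several boundary cases separately: the end monomial $u^m_0$ (or $u^m_m$) of the progression, and --- most delicately --- the middle basis vector built from $u^m_{m/2}$, whose normalisation differs from that of the generic combinations $u^m_k\mp u^m_{m-k}$ by a factor $\sqrt2$. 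That $\sqrt2$ is exactly what produces the anomalous entry with the $\tfrac{1}{6\sqrt2}$ prefactor, valid ``for $j=i=\tfrac{m-1}{2}$ or $j=i-1=\tfrac{m-3}{2}$'', together with the distinguished diagonal entry at $i=j=\tfrac{m-1}{2}$; getting these factors and all the index shifts exactly right, and checking that every other matrix element vanishes, is where the care is needed. As a consistency check I would verify that $\Omega^2_m$ is invariant up to conjugation under $b\mapsto1-b$ (reflecting the symmetry of $\mathfrak{W}_2$), which is immediate since $1-2b+2b^2$ and $1-b+b^2$ are invariant and the sign change of $1-2b$ in the off-diagonal entries is absorbed by the relabelling $v^m_i\mapsto(-1)^iv^m_i$.
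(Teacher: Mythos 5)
Your route is the one the paper itself indicates (Peter--Weyl, then restrict the Casimir operator $\rho'_m(-\triangle(\Sp(1),p^*g_b))$ to the subspace of $\mathcal{Q}_m$ transforming by~(\ref{transprop-b})), and your operator algebra is sound: $\rho'_m(\mathbf{i})=L-R$, $\rho'_m(\mathbf{k})=-i(L+R)$, and the coefficients $(1-b)^{-2}\mp b^{-2}$ do produce the $1-2b$ and $1-2b+2b^2$ patterns. The genuine gap is in the step where you determine the invariant subspace and assert that counting admissible $k$ reproduces the stated matrix sizes. On $\mathcal{Q}_m$ as defined in the paper (degree-$m$ polynomials in $z_1,z_2$, $\dim\mathcal{Q}_m=m+1$), the element $-\mathbf{1}$ acts by $(-1)^m$, so for odd $m$ no nonzero $f$ satisfies $\Lambda_{\pm\mathbf{1}}f=f$: your construction returns the zero subspace exactly in the case of the theorem that contains all the distinctive structure (the $\tfrac{1}{6\sqrt{2}}$ entries and the special last diagonal entry). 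For even $m$ the count also fails: imposing $\Lambda_{\pm\mathbf{j}}f=f$ selects one parity class of monomials, and imposing $\Lambda_{\pm\mathbf{i}}f=-f$ then keeps only one combination per pair $\{u^m_k,u^m_{m-k}\}$ (with the fixed monomial $u^m_{m/2}$ kept or discarded according to $m\bmod 4$), giving dimension $\tfrac{m}{4}$ or $\tfrac{m+2}{4}$ rather than $\tfrac{m}{2}$; already for $m=4$ your recipe yields a $1\times 1$ matrix against the stated $2\times 2$. Consequently your explanation of the $\sqrt{2}$ anomaly via the middle monomial $u^m_{m/2}$ cannot be right either, since those entries occur only in the odd-$m$ block, where no monomial is fixed by $k\mapsto m-k$.

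What is missing is the spin-versus-degree bookkeeping. Only even-degree blocks can carry functions equivariant under a character of $\quat$, and the sizes and entries of the stated $\Omega^2_m$ match the Casimir operator of the $(2m+1)$-dimensional integer-spin-$m$ representation (equivalently $\mathcal{Q}_{2m}$) restricted to the character subspace, written in a weight basis adapted to $\mathbf{j}$: there the character multiplicity is exactly $\tfrac{m}{2}$ (even $m$) or $\tfrac{m+1}{2}$ (odd $m$); the diagonal data $(4i)(m-i)+m$ and $(m-2i)^2$ come from $LR+RL$ and $\rho'(\mathbf{j})^2$ evaluated at $k=2i$ on $\mathcal{Q}_{2m}$ (with the weight halved by the $\tfrac12$ inner-product convention); and for odd $m$ it is the surviving weight-zero vector, whose normalisation differs from the paired combinations, that produces the distinguished diagonal entry at $i=\tfrac{m-1}{2}$ and the $\tfrac{1}{6\sqrt{2}}$ couplings. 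Without recognising this relabelling, carrying out your plan literally on $\mathcal{Q}_m$ terminates at the dimension count and cannot reproduce the theorem. (Your final consistency check is fine: conjugating by $\mathrm{diag}((-1)^i)$ does show $\Omega^2_m(1-b)$ is conjugate to $\Omega^2_m(b)$.)
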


At $b = \frac{1}{2}$ the eigenvalues of
$-\triangle_2\left(\mathfrak{F}\left(b\right)\right)$ can be given
explicitly because the matrices $\Omega_m^2\left(\frac{1}{2}\right)$ are
diagonal.  For $m = 1,3,5,\dots$ the eigenvalues are
\[
  \lambda^2_{m,l} = \frac{1}{4}m^2 + 3\,l\,m - 3\,l^2
  + \frac{5}{2}\,m - 3l - \frac{3}{4}\,,
\]
where $l = 0,1,2\dots,\frac{m-1}{2}$.  Also, for $m = 2,4,6,\dots$ the
eigenvalues are
\[
  \lambda^2_{m,l} = \frac{1}{4}\,m^2 + 3\,l\,m - 3\,l^2 + m\,,
\]
where $l = 0,1,2,\dots,\frac{m-2}{2}$.

\section{Spectral Flow}

The spectra of the tridiagonal Casimir matrices $\Omega^1_m$ and
$\Omega^2_m$ have been numerically computed as a function of $b$ and
the results plotted in Figures~\ref{spectrum1} and~\ref{spectrum2},
respectively. These plots show the spectral flow of
$-\triangle_1(\mathfrak{F}(b))$ and $-\triangle_2(\mathfrak{F}(b))$
with $b$. The eigenvalues of $-\triangle_i(\mathfrak{F}(b))$ are
denoted by $\lambda^i_{m,l}$, which is the $l^{\rm th}$ eigenvalue of
$\Omega^i_m$.

First consider the spectral flow of $-\triangle_1(\mathfrak{F}(b))$
shown in Figure~\ref{spectrum1}.  As $b$ goes to $0$, the singular
foliation $\mathfrak{F}(b)$ flows to the sphere
$\mathfrak{R}_1$. Under this flow, most of the eigenvalues of
$-\triangle(\mathfrak{F}(b))$ blow-up to infinity. However, the
eigenvalues $\lambda^1_{m,0}$ for $m=1,3,5,\dots$ flow to the
eigenvalues of $-\triangle_1(\mathfrak{R_1})$ calculated in
Theorem~\ref{thmR1}.  As $b$ goes to $1$, all eigenvalues of
$-\triangle_1(\mathfrak{F}(b))$ blow up. This is to be expected,
because the line bundle $\eta_1$ is not defined on $\mathfrak{R_3}$.

Now consider the spectral flow of $-\triangle_2(\mathfrak{F}(b))$, shown
in Figure~\ref{spectrum2}. The line bundle $\eta_2$ is not defined in
either of the $b \to 0$ or $b \to 1$ limits. This results in the
all eigenvalues of $-\triangle_2(\mathfrak{F}(b))$ blowing up to
infinity as $b$ goes to either $0$ or $1$.

\begin{figure}
\begin{center}
\includegraphics{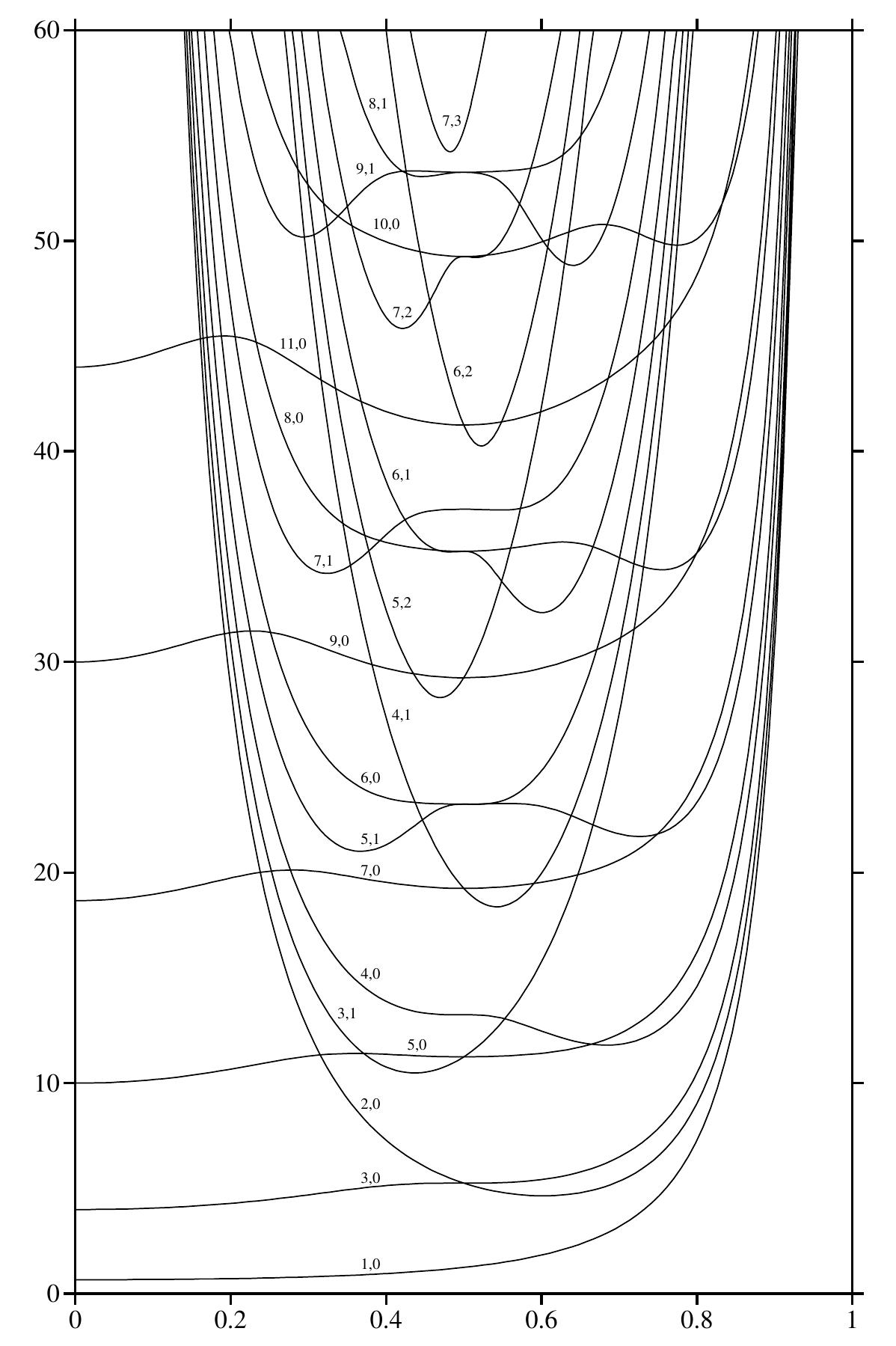}
\end{center}
\caption{The eigenvalues of $-\triangle_1(\mathfrak{F}(b))$ plotted
against $b$.  The eigenvalue $\lambda^1_{m,l}$ is labelled by $m,j$ on
the graph.}
\label{spectrum1}
\end{figure}

\begin{figure}
\begin{center}
\includegraphics{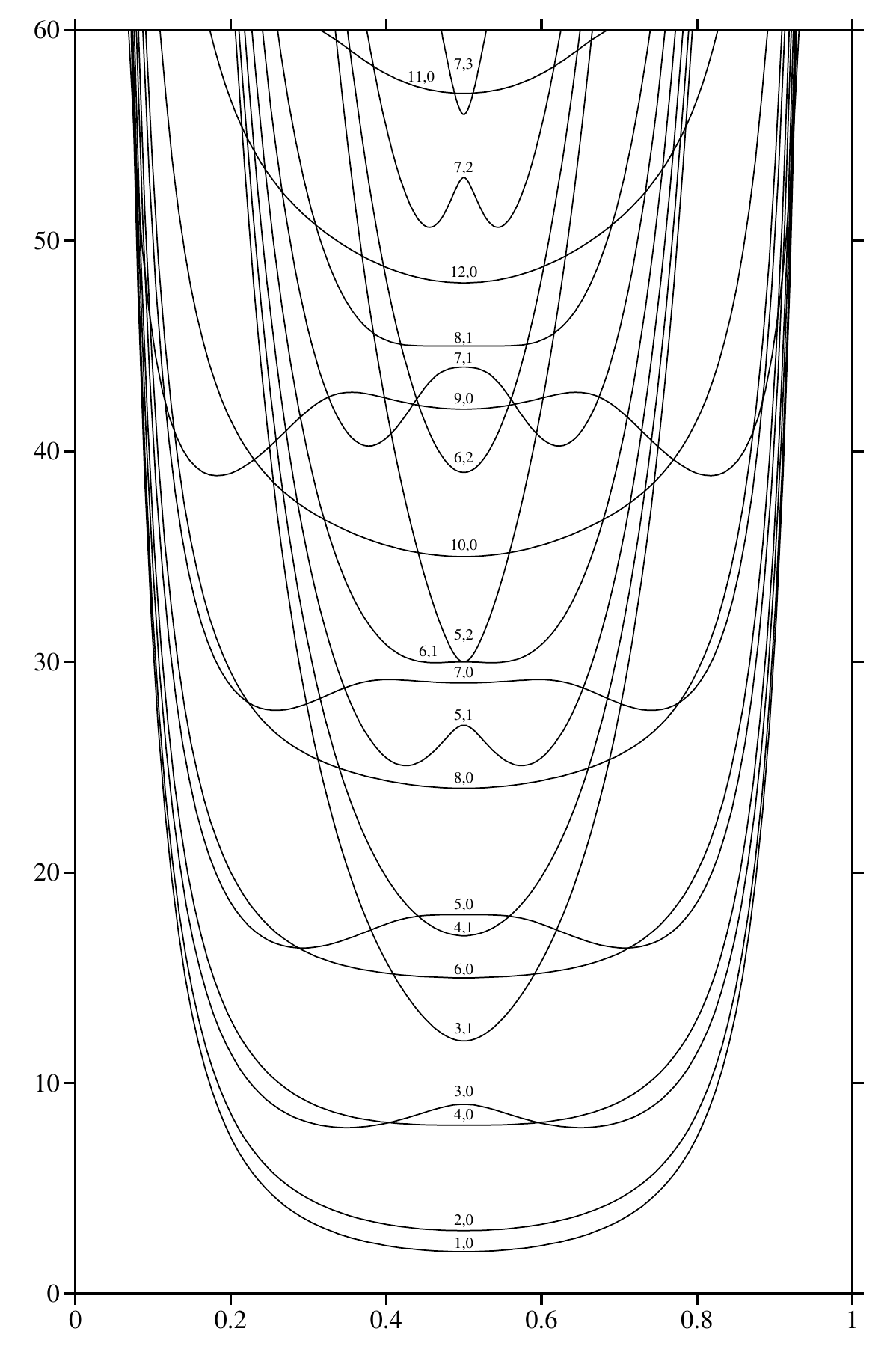}
\end{center}
\caption{The eigenvalues of $-\triangle_2(\mathfrak{F}(b))$ plotted
against $b$.  The eigenvalue $\lambda^2_{m,l}$ is labelled by $l,m$ on
the graph.}
\label{spectrum2}
\end{figure}

\section{Conclusions}

An electronic triplet, $\{\lambda_1,\lambda_2,\lambda_3\}$ has three
Born-Openheimer Hamiltonians, $B_i$, associated with each of the three
eigenvalues on the triplet. The potential function for each of these
Hamiltonians is minimised on one of isoparametric submanifolds
$\mathfrak{F}(b)$.  Therefore, up to a constant, the spectrum of $B_i$
is given by the spectrum of $-\triangle_i(\mathfrak{F}(b))$, for $b =
\frac{\lambda_2 - \lambda_1}{\lambda_3 - \lambda_1}$.  These
eigenvalues are given by the eigenvalues of the Casimir matrices in
Theorems~\ref{casimir1} and~\ref{casimir2}. They are plotted in
Figures~\ref{spectrum1} and~\ref{spectrum2}.

Although we have specifically considered the $T \otimes (e_g \oplus
t_{2g})$ Jahn-Teller effect of octahedral molecules, our results also
apply to pseudorotational excitations of electronic triplet states for
any molecule.  This includes the $T \otimes h_g$ Jahn-Teller effect of
icosahedral molecules.  Evidence for this Jahn-Teller effect in the
icosahedral molecule ${\rm C}_{60}$ has been reported in~\cite{GAS92}.
As well, molecules with icosahedral symmetry also exhibit a
Jahn-Teller effect for electronic quadruplets.

We remark on the relationship between our work and Berry
phases in Jahn-Teller problems.  Berry phases appearing in the $T
\otimes \left( e_g \oplus t_{2g}\right)$ and $T \otimes h_g$
Jahn-Teller problems have been studied in \cite{CO88}
and~\cite{AMT94,MTA94}, respectively.  The topological results in this
paper can also be used to compute Berry phases and we refer
to~\cite{DR88} for a description of how this is done.  However,
it is apparent from the results in this paper that the full
geometrical content of the Jahn-Teller effect exceeds that which can
be determined by Berry phases alone.

To summarise, the primary contribution of our work has been to supply
a rigorous basis for the topological and geometrical aspects of
Jahn-Teller computations for electronic triplets.  We initially
reported on the connection between isoparametric geometry and the
Jahn-Teller effect in \cite{R07}.  In pursuing this relationship, we
have demonstrated that the connection Laplacian on the tautological
lines bundles over Cartan's isoparametric foliation of type~3 in $S^3$
exhibit interesting spectral flow properties. 

\FloatBarrier

\subsection*{Acknowledgment}

I owe a great debt to Roy R.~Douglas for his mentorship and
collabration during the early stages of this work.  I thank
M.~C.~M.~O'Brien and J.~W.~Zwanziger for comments about experimental
applications of our work.  I thank R.~Mazzeo for recommending the
interesting work of B.~Solomon on harmonic analysis of isoparametric
minimal hypersurfaces.

\end{document}